\newcommand{\transpose}{\text{${}^{\text{T}}$}}
\newcommand{\floor}[1]{\lfloor #1 \rfloor}
\newcommand{\T}{\transpose}
\newcommand{\C}{\mathcal{C}}
\newcommand{\D}{\mathcal{D}}
\newcommand{\ti}{\times}
\newcommand{\al}{\alpha}
\newcommand{\be}{\beta}
\newcommand{\de}{\delta}
\newcommand{\om}{\omega}
\newcommand{\f}{\frac}
\newcommand{\di}{\displaystyle}
\newenvironment{smatrix}{\bigl(\begin{smallmatrix}}{\end{smallmatrix}\bigr)}
\newenvironment{ssmatrix}{\left(\begin{smallmatrix}}{\end{smallmatrix}\right)}
\newtheorem{thm}{Theorem}[section]
\newtheorem{proposition}[thm]{Proposition}
\newtheorem{lemma}[thm]{Lemma}
\newtheorem{alg}{\bf{Algorithm}}
\newtheorem{example}[thm]{Example}
\newcommand{\Z}{\mathbb{Z}}
\newcommand{\F}{\mathcal{F}}
\title{Ultimate linear block and convolutional codes\footnote{ Keywords: Code, linear, convolutional,  self-dual, dual-containing, quantum code, complementary dual, LDPC \\ AMS Classification 2020: 94B05, 94B10, 15B99}}
\author{Ted Hurley\footnote{University of Galway (previously: National University of Ireland Galway) \\ email: Ted.Hurley@universityofgalway.ie}}
\date{}
\begin{document}
\maketitle
\begin{abstract} %% Convolutional codes of prescribed rate, distance and type
  %% and to maximum possible distance for a type are constructed.

 %% The unit-derived method is used to construct and analyse series of linear block codes and series of linear convolutional codes.
 %% Unit-derived algebraic methods are used to define series of convolutional codes.
 %  of memory $1$ %% which are of the order of twice the distance of the corresponding linear block codes
   %% which can be constructed to satisfy the generalised Singleton bound. Rates and distances may be prescribed.
 %% A linear block code over a field can be derived from  a unit scheme.
 Codes considered as structures within unit schemes greatly extends the availability of linear block and convolutional  codes  and allows the construction of these codes to required length, rate, distance and type. Properties of a code emanate from properties of the unit from which it was derived. %% can thus be constructed and analysed by designating the units whose properties would give the required codes.
 Orthogonal units, units in group rings, Fourier/Vandermonde units and related units are used to construct and analyse linear block and convolutional codes and to construct these to predefined  length, rate, distance and type. Self-dual, dual containing, quantum error-correcting and linear complementary dual codes are constructed for both linear block and convolutional codes.  %% from orthogonal and from Fourier/Vandermonde matrices.
 Low density parity check  linear block and convolutional codes %% to required rates
 are constructed  with no short cycles in the control matrix. %% Infinite series of such codes can be derived;
  \end{abstract}
\section{Introduction} 
Unit-derived methods for coding theory were initiated in  \cite{hur1,hur0} and are further developed in \cite{unitderived,hurleyquantum,hurleyunit}; an introduction in  chapter form is available in \cite{hur2,hur11}. All linear block codes over fields are unit-derived and the method is used here  %% unit-derived methods are extended
in a number of different directions to devise new series of linear block codes to required length, rate, distance and type. These are also extended to establishes methods for  creating and analysing convolutional codes and to build series of these to required type, length, rate and  field type. Methods derived in \cite{hurleyunit} for establishing infinite series of codes with rates approaching a given rational $R$, $(0<R<1)$, and relative distances approaching $(1-R)$ are applicable.  %% and rate and extend the method to form convolutional codes of required types,convolutional codes and further linear types.
Some of the codes derived have applications in solving {\em underdetermined systems of equations}, %% using coding theory,
see \cite{hurleyunder}.

The (free) distance of a convolutional code  derived may often be determined from the unit used and in general is of an order better than the distance of a  linear block code of the same rate devised. %% The distance attained increases with the support of the information vector.
Efficient decoding methods are available.  %% for convolutional codes.
%Infinite series of such codes may be derived. %% Convolutional codes of memory greater than $1$ will appear elsewhere; in general for memory $k$ the free distance of the convolutional code is of the order of $k+1$ times that of the associated block code.

All linear block codes over fields are  unit-derived codes, see Proposition \ref{unit} below, although they may not have been derived from this outlook initially. Thus all linear block codes over fields may be devolved from within unit schemes. By looking at a unit scheme in general and selecting constituents, convolutional codes  to required type and distance are constructed and analysed.  %% efficient decoding methods for convolutional codes are also available.

Codes both linear and convolutional are designed from unit schemes. %% This is the main purpose.
Many known codes are not designed in this way initially. However by looking at these codes as structures from units further information and further code designs are obtainable; for example a convolutional code of the same rate may be designed from that unit which performs better and has order distance twice that of the original linear code. An illuminating examples of this is given in section \ref{hamm} below: %% where by looking at the Hamming $[7,4,3]$ as being derived from a unit leads to the constuction of below 
  A unit-derived form of the Hamming $[7,4,3]$ code is derived and  used to construct  a memory $1$ convolutional (binary) code of distance $6$ and rate  $\f{4}{7}$; the convolution  code formed is of type $(7,4,4;1,6)$ --  see section \ref{conv} below for relevant convolutional code parameter definitions. Another example, \ref{unit1} below, is where the unit part of the Golay binary $[24,12,8]$ code is used to design convolutional memory $3$ codes with rates  $\f{3}{4}$ and $\f{1}{4}$; the larger rate one  is dual containing and the smaller rate one has distance $18$. %is straighforward by considering poly P(z)G(z) 
 Hamming, Golay codes may thus be extended respectively to `convolutional Hamming' and `convolutional Golay' codes. %% which have the same rate but twice the distance of their counterpart Hamming or Golay codes.
 The process is of a general nature and such treatment may be given to other known codes. % and convolutional codes derived therefrom.  %% describe a known  linear block code in a unit-derived form and from this a convolutional code is built which uses all parts of the unit; in general convolutional codes whose distance is of order twice that of the linear block code. But consider also that working from any  unit linear block and convolutional codes are obtained and these can have required properties, rates, distances, types, derived from the unit.
 
%% Any linear block code can be described as a unit-derived code, Proposition \ref{unit} below,  although the code may not be described in this way initially.
%% This has been pointed out elsewhere but a direct proof is included for completeness.
%% \footnote{This may be read independently. }

The classes of linear block and convolutional codes available are greatly expanded. McEliece (\cite{mceliece}) remarks: `` A most striking fact is the lack of algebraic constructions of families of convolutional codes.''; Blahut, \cite{blahut}, p.312, writes: ``No general method is yet known for constructing a family of high performance multiple-error-correcting code..The codes now in common use have been found by computer search''. Computer search is no longer practical. Multiple  algebraic methods for designing series of convolutional codes are now available. %% Inde\ref{} ed all linear codes are derived from units.
and such codes to required length, rate and type required are constructed by the unit-derived and related methods as applied here and previously in  \cite{hurleyunit,hur111,hur10}. Convolutional codes galore are available from the unit-derived methods.
%% Codes are part of a bigger structure of which we know a lot.
The coding structure lies within the unit structure about which much is known. 
In mathematics we often think of breaking a structure into more manageable parts but here we think of looking at the bigger structure within which required embedded structures exist. %which is more manageable. %% and from this construct the desired types of (coding) structure. %% to a more amageable part.
%% Using the  %% By buiding up a structure to be part of a structure we know is extremely beneficial and we can then use 
%%   bigger structure  the embedded structures .

%% By looking at particular types of units, particular required types of codes are obtainable and prequired rates are reached as required.

Using special types of units such as  orthogonal units, Vandermonde/Fourier units or  units in group rings allows the construction of special types of codes such as {\em self-dual codes}, {\em dual containing codes}, {\em complementary dual codes} and {\em quantum code} as well as codes to specified lengths, rates, distances and over specified finite fields. Codes over particular required fields, such as over fields of characteristic $2$ or codes over prime fields (for which modular arithmetic is available) and series of such, are constructible by the methods. Special linear block and convolutional codes may also be induced from Hadamard matrices in general from their unit form and this is dealt with in detail in \cite{induced}. In section \ref{hadamard} some samples are given as an introduction to what is achievable.

Looking at units in group rings allows the construction of low density parity check (LDPC) linear block and LDPC convolutional codes and these are constructed  with no short cycles in the check  matrix.

 Dual-containing codes have their own intrinsic interest but are also used for designing {\em quantum error-correcting codes} by the CSS method, \cite{calderbank,calderbank1,steane}. Here then {\em convolutional quantum error correcting} codes of  different lengths and rates  are  explicitly constructed. 
  %% General algebraic constructions of series of  convolutional {\em linear complementary dual}, LCD, codes are described and built of maximum possible distance.
  Linear complementary dual, LCD, codes have been studied extensively in the literature. For
background, history and general theory %and more constructions %of
on LCD codes,
consult the articles \cite{sihem3,sihem2,sihem4, sihem} by
Carlet, Mesnager, Tang, Qi and Pelikaan. LCD codes were originally
introduced by Massey in \cite{massey,massey2}.  These codes have been
studied amongst other things for improving the security of information on sensitive devices
against {\em side-channel attacks} (SCA) and {\em fault non-invasive
  attacks}, see \cite{carlet}, and have found use in {\em data
  storage} and {\em communications' systems}.  %, and consumer

The relationships between DC linear block codes and LCD convolutional codes  and between LCD linear block codes and convolutional DC codes when formed from the same unit scheme are quite remarkable.

 %% from the Hamming 
%% by looking at the Hamming $[7,4,3]$ code as a unit-derived code.
%% The distance increases as the support of the (inputted) information vector increases.  

Hermitian codes over fields of the form $GF(q^2)$ may be  by looking at {\em unitary matrices}; this  was initiated in \cite{hurleyunit} but  full development is left  to later work. 

Requiring one of $\{U,V\}$  in $UV=I$ to be of low density enables the construction of low density parity check (LDPC) linear block and convolutional LDPC codes by the unit-derived method. These are constructed so that there are no short cycles in the control matrix using units in group rings. % Series of such are derivable. %% These may be constructed so that they have no short cycles in the control matrix. 
The linear block case for such LDPC codes has been  dealt with in \cite{hurley33} and the convolutional case follows from the unit-derived method. %% The linear block and convolutional codes are constructed so that no short cycles exist in the control matrix.
%% may be constructed Series of convolutional low density parity check codes are to be  constructed.
Iterative decoders for low density parity check  codes are impacted by short cycles. Here for a given unit scheme, described in a precise way,  multiple such codes, all with no short cycles, are  constructed and with prescribed rate and dimension.  %% This follows techniques developed in \cite{hurley33} for constructing low density parity check block linear codes  and such codes with no short cycles in the control matrix. %% detrimental graphical configurations known as trapping sets (TSs) present in a code graph as well as symmetric degeneracy of errors.''
 These  LDPC  codes have many applications and can further be stored using an algebraic  `short' formula which for example is important in applications requiring low storage and low power. %as for example in implanted devices. 

%% Iterative decoders for finite length quantum low-density parity-check (QLDPC) codes are impacted by short cycles, detrimental graphical configurations known as trapping sets (TSs) present in a code graph as well as symmetric degeneracy of errors.

Some of this  work is  additional and complementary to that in \cite{hur111,hur10}. %% arts of has appeared in different forms in \cite{hur10,hur111}.
The paper \cite{hur111} has appeared on ArXiv only, having been rejected elsewhere. % -  the referees didn't like the way it was done! %% ; the work here has additional constructions and proofs.
As pointed out in \cite{hur111},  convolutional codes which appeared previously in the literature are very special cases of constructions using these types of methods.  %% but the methods here vastly expand availability.
%% McEliece has  (e.g. \cite{mceliece}) remarked:`` A most striking fact is the lack of algebraic constructions of families of convolutional codes.'' Multiple  algebraic methods are now available.
% and this also applies to those here.  %% those in \cite{} are special cases.
% but was ignored.

  %% Codes are structures which are part of bigger structure on which more is already known. 

%% Multiple codes of a particular type and structure are deduced from just one unit.
The unit-derived method allows the construction of multiple linear block codes and multiple convolutional codes from the same unit. %% the linear block code structure. 
%% The constructions are general and types to required length, rate and distance are derivable.
%% Coding theory be looked upon from this point of view.
`Manufacturing' of different and sophisticated `models' is made relatively easy. %% The methods allows the readers to construct their own required  codes.

 A number of examples are given here which of necessity are of relatively small length and these can be looked upon as prototype examples for large length constructions which are attainable. The codes are easily implementable once the units from which they are derived are formed. The brilliant Computer Algebra system GAP, \cite{gap}, proves extremely useful in constructing examples and verifying  distances. 
Coding theory background is contained in section \ref{conv}.
\subsection{Background on  coding theory}\label{conv}
Basics on linear block coding theory may be found in any of \cite{blahut,joh,mceliece,macslo} and many others. The notation $[n,r,d]$ is used here for a linear block code of length $n$, dimension $r$, and (minimum) distance $d$. The rate is then $\f{r}{n}$. %% indeed searching the internet often gives the required background on a particular topic.
A maximum distance separable (mds) linear block code is one of the form $[n,r,n-r+1]$ where the maximum distance possible for the length and rate is achieved. 

Different equivalent definitions for convolutional codes are given in the literature. The notation and definitions used here follow that given in \cite{ros,smar,rosenthal1}. % (Now copy ..).
A rate $\frac{k}{n}$ convolutional code with parameters $(n,k,\de)$ over a field $\F$ is 
 a submodule of $\F[z]^n$ generated by a reduced
basic matrix $G[z] =(g_{ij}) \in \F[z]^{r\ti n}$ of rank $r$ where $n$ is
the length,  $\de = \sum_{i=1}^r \de_i$ is the {\em degree}
with  $\de_i= \max_{1\leq j\leq r}{\deg g_{ij}}$. Also 
$\mu=\max_{1\leq i\leq r}{\de_i}$ is known as the {\em memory} of the
code and then the code may then be given with parameters $(n,k,\de;\mu)$. 
% The notation $(n,r,\de)$ is used for a rate $r/n$ convolutional
% code with  degree $\de$ with the memory not indicated.  
The parameters $(n, r,\delta;\mu, d_f)$ are  used for such a code 
with free (minimum) distance $d_f$.

%%  the notation 
 Suppose $\C$ is a convolutional code in $\F[z]^n$ of rank $k$. A generating matrix $G[z] \in
\F[z]_{k\times n}$ of $\C$ having rank $k$ is called a
{\em generator} or {\em encoder matrix}  of $\C$. 
A matrix $H \in \F[z]_{n\times(n-k)}$ satisfying $\C = \ker H =
\{v \in \F[z]^n : vH = 0 \}$ is said to be a {\em control matrix} or
    {\em check matrix} of the code $\C$. 

 Convolutional codes can be {\em
   catastrophic or non-catastrophic}; see
 for example \cite{mceliece} for the basic definitions. A
catastrophic convolutional code is prone to catastrophic error
propagation and is not much use. A
convolutional code described by a generator matrix  with {\em right
polynomial inverse} is a non-catastrophic code; this is sufficient for
our purposes. The designs given here for  the generator matrices allow for
specifying directly the control matrices and the right polynomial
inverses where appropriate. % may be constructed directly.
There exist very few algebraic constructions for designing convolutional codes and search methods limit their size
and availability, see McEliece \cite{mceliece} for discussion and also \cite{alm1,alm2,guar,mun}.

By Rosenthal and Smarandache, \cite{ros},
 the maximum free distance attainable by an
$(n,r,\delta)$ convolutional  code is  $(n-r)(\floor{\frac{\de}{r}}+1)+
\de +1$. The case  $\delta
=0$, which is the case of zero memory, corresponds to the linear 
Singleton bound $(n-r+1)$. 
The bound $(n-r)(\floor{\frac{\de}{r}}+1)+
\de +1$  
is then called  the {\em generalised Singleton  bound}, \cite{ros}, GSB,
and a convolutional code attaining this bound is known as  an {\em mds 
convolutional code}. The papers \cite{ros} and \cite{smar} 
are major beautiful contributions to this area.

The criteria for a convolutional code to be an mds code are given in terms of the parameters for a convolutional code and the criteria for a linear block code to be an mds code are given in terms of the parameters for a linear block code. % which are different. 

Let $G(z)$ be the generator matrix for a convolutional code $\C$ with memory $m$.
Suppose $G(z)H\T(z) = 0$, so that $H\T(z)$ is a control matrix, and then $H(z^{-1})z^m$  generates the {\em convolutional dual code} of $\C$, see \cite{dual2} and \cite{dualconv}. This is also known as the {\em module-theoretic dual code}.\footnote{In convolutional coding theory, the idea of {\em dual code} has 
two meanings. %% The one used here is what is referred to as the {\em
The other dual convolutional code defined  
  is called {\em the sequence space dual}; the generator
  matrices for these two types are related by a specific formula.}
The code is then dual-containing provided the code generated by $H(z^{-1})z^m$ is contained in the code generated by $G(z)$.

Let $G(z)$ be the generator matrix for a convolutional $(n,r)$ code $\C$. Code words will consist of $P(z)G(z)$ where $P(z)$ is a polynomial in $z$ with coefficients which are $1\ti r$ vectors. The polynomial $P(z)$ is said to be an {\em information vector} for the code $\C$. The support of $P(z)$ is the number of non-zero coefficient vectors appearing in its expression as a polynomial.

%% Let $G(z)$ be the generator matrix for a convolutional code $\C$ with memory $m$ and suppose $H\T(z)$ is a control matrix for the code with $G(z)H\T(z) = 0$. 
%% Then  $H(z^{-1})z^m$  generates the {\em convolutional dual code} of $\C$, see \cite{dual2} and \cite{dualconv}. This is also known as the {\em module-theoretic dual code}.\footnote{In convolutional coding theory, the idea of {\em dual code} has 
%% two meanings. %% The one used here is what is referred to as the {\em
%% The other dual convolutional code defined  
%%   is called {\em the sequence space dual}; the generator
%%   matrices for these two types are related by a specific formula.}
%% The code is then dual-containing provided the code generated by $H(z^{-1})z^m$ is contained in the code generated by $G(z)$.

%% Let $G(z)$ be the generator matrix for a convolutional $(n,r)$ code $\C$. Code words will consist of $P(z)*G(z)$ where $P(z)$ is a polynomial in $z$ with coefficients which are $1\ti r$ vectors. The polynomial $P(z)$ is said to be an {\em information vector} for the code $\C$. The support of $P(z)$ is the number of non-zero coefficient vectors appearing in its expression as a polynomial.

\subsection{Dual-containing and linear complementary dual codes} The dual of a code $\C$ is denoted by $\C^{\perp}$. Note the definition of {\em dual code of a convolutional code} as given in subsection \ref{conv} above. A code $\C$ is said to be {\em dual containing,} written DC, if it contains its dual $\C^{\perp}$. %% This definition is applied to both linear block codes and convolutional codes but note the definition of dual code of a convolutional code given in subsection \ref{conv}.
Say a code is a {\em linear complementary dual}, written LCD, code provided it has trivial intersection with its dual.

\noindent Thus \\
$\C$ is a dual containing (DC) code $\Longleftrightarrow$ $\C \cap C^{\perp}=\C^{\perp}$ \\
$\C$ is a linear complementary dual (LCD) code $\Longleftrightarrow$ $\C \cap C^{\perp}=0$ \\
A {\em self-dual code} is a code $\C$  with $\C^{\perp}= \C$; this is an important  type of DC code. 

 %% Dual containing (DC) and Linear complementary dual (LCD) codes are of particular interest.
 %% A dual-containing, DC,  code $\C$ is a code which contains its dual
%%  $\C^\perp$; thus a DC code is a code $\C$ such that $\C\cap \C^\perp =
%%  0$.  A  linear complementary dual, LCD,  code is one such that its
%% intersection with its dual is zero. Thus  $\C$ is an LCD code provided 
%% $\C\cap \C^\perp = 0$.
Constructions of convolutional DC and LCD codes were initiated in \cite{hurleyunit}. % and are continued here. %% including ones with memory greater than $1$.
 DC convolutional codes are theoretically  interesting in themselves but in addition a  DC convolutional code  enables the construction of a {\bf convolutional quantum error-correcting  code} by the CSS method. 

LCD codes and DC codes are `supplemental' to
one another: $\C$ is DC if and only if 
$\C\cap\C^\perp = \C^\perp$ and $\C$ is LCD if and only if $\C \cap \C^\perp = 0$.
As noted in \cite{hurleyunit}, mds DC block linear codes lead to  the construction of mds LCD
convolutional codes and LCD mds  block linear codes leads to the construction of mds DC 
convolutional codes.

{\em Abbreviations used:}

\noindent DC: dual-containing

\noindent mds: maximum distance separable \footnote{Has different parameter requirements for linear block codes, convolutional codes and quantum codes.}

\noindent LCD: linear complementary dual

\noindent QECC: quantum error-correcting code

\noindent LDPC: low density parity check

\subsection{Designs achievable}\label{design} Propositions \ref{unit} to \ref{three}, which enable  the code constructions with properties,  are given explicitly and proven in section  \ref{unitderived}. %% , \ref{orthogonal}, \ref{jut}, \ref{orthogonalsq}, \ref{fourier}, \ref{fourier1}, \ref{lcd}, \ref{conv1}, \ref{conv1}, \ref{conv1}, \ref{conv1}, \ref{mem},  which are described explicitly and proven in Section \ref{unitderived}.
The designs  that follow a proposition are given below. %% With some background, these  may be read separately.
%; these are not inclusive. %% In this section an indication is given of the type of coding structures that are enabled from the Propositions but these are not inclusive. 
%% Algorithms for constructing codes follow from the given Propositions.
%% which are listed here for convenience. 
\begin{enumerate}
\item Proposition \ref{unit}:
%% Let $\C$ be a linear code over a field. Then $\C$ is equivalent to a unit-derived code.
Consequence: Design linear block codes from units.

   %% Let $\C$ be a linear code over a field. Then $\C$ is equivalent to a unit-derived code.

%, Section \ref{unitderived}.
  
\item  Proposition \ref{orthogonal}: Consequence: {\em Use Orthogonal matrices to design  LCD codes.} % Section \ref{secorthog}.
   %% Let $U$ be an orthogonal matrix. Then any unit-derived block linear code from $U$ is an LCD (linear complementary dual) code.
%% Use orthogonal matrices to devise LCD codes.

\item Proposition \ref{jut} (binary case) and Proposition \ref{orthogonalsq}:
  Algorithm: {\em Design self-dual codes from orthogonal type  units}. % section \ref{secorthog}.
Noted: All self-dual linear block codes can be constructed in this way.  
  %% Let $X$ be an orthogonal $n\ti n$ matrix in a field of characteristic $2$. Then the matrix $A=(I_n,X)$ generates  a self-dual $[2n,n]$ matrix. Conversely if $A=(I_n,X)$ is a self-dual code where $X$ is an $n\ti n$ matrix in a field of characteristic $2$, then $X$ is orthogonal.
  %% Let $X$ be an $n\ti n$ matrix over a field $\F$.
  
  %% (i) If $X$ is an orthogonal matrix then $(I_n,iX)$ generates a self-dual code where $i$ is a square root of $(-1)$ in $\F$ or in a quadratic extension of $\F$.

  %% (ii) If $(I_n,X)$ is self-dual then $iX$ is orthogonal where $i$ is a square root of $(-1)$ in $\F$ or in a quadratic extension of $\F$.
  
%% : Let $X$ be an orthogonal $n\ti n$ matrix in a field of characteristic $2$. Then the matrix $A=(I_n,X)$ generates  a self-dual $[2n,n]$ matrix. Conversely if $A=(I_n,X)$ is a self-dual code where $X$ is an $n\ti n$ matrix in a field of characteristic $2$, then $X$ is orthogonal. 

 %% Proposition \ref{jut} is given elsewhere in the literature in different forms.
\item %% Occurs in section \ref{fouriertype}.

  Proposition \ref{fourier} and Proposition \ref{fourier1}: Algorithm: 
 Design   {\em linear block mds, DC and quantum codes from Fourier/Vandermonde matrices.}  

   These are implicit in the paper \cite{unitderived}.

    \item  Proposition \ref{lcd}: Algorithm therefrom: Design LCD, mds codes from Fourier/Vandermonde type matrices: 

  %%     : Let $F_n$ be a Fourier matrix over a finite field. Let the rows in order of $F_n$ be denoted by $\{e_0,e_1,\ldots, e_{n-1}\}$. Rearrange the Fourier matrix unit scheme  as follows:

  %% $$\begin{ssmatrix} e_r \\ \vdots \\ e_{n-1} \\e_0 \\e_1 \\ \vdots \\e_{n-r} \\ e_{n-r+1} \\ \vdots \\ e_{r-1}\end{ssmatrix}(e_{n-r}\T,\ldots,e_1\T,e_0\T,e_{n-1}\T,\ldots, e_r\T, e_{r-1}\T,\ldots,e_{n-r+1}\T) = nI_n$$

  %%   Then the code generated by $\begin{ssmatrix} e_r \\ \vdots \\ e_{n-1} \\e_0 \\e_1 \\ \vdots \\e_{n-r} \end{ssmatrix}$ is an LCD mds code.

  \item  Proposition \ref{conv1}: Algorithm: Design  length $2n$ rate $\f{1}{2}$, memory $1$, convolutional code and describe the dual code.  %and distances achieved. %% together with the dual codes achieving distances.

%%     : Let $U$ be a $2n\ti 2n$ invertible matrix. Suppose $U= \begin{smatrix}A \\ B\end{smatrix}$ where $A$ and $B$ have size $n\ti 2n$ and  $\begin{smatrix}A \\ B\end{smatrix}\begin{smatrix}  C & D \end{smatrix}=I_{2n}$  where $C,D$ have size $2n\ti n$. %% and $D$ has size $n\ti 2n$.

%%        (1) $A$ generates a $[2n,n]$ code $\C$ and $D\T$ generates the dual code of $\C$. $B$ generates a $[2n,n]$ code $\D$ and $C\T$ generates the dual code of $\D$.

%%       (2) $G(z) = A+Bz$ generates a (non-catastrophic)  convolutional $(2n,n,n;1)$ code $\C$. Then $G(z)(D-Cz) =0$, $D-Cz$ is a control matrix of $\C$ and $-C\T+D\T z$ generates the dual code of $\C$.

%% (3)      If  the code generated by $A$ has distance $d_1$ and the code generated by $B$ has distance $d_2$, then the free distance of $\C$ is $d=d_1+d_2$ and $\C$ is a $(2n,n,n;1,d)$ convolutional code. Further if the information vector $P(z)$ has support $k$ then $P(z)G(z)$ has distance $\geq (d_1+d_2+k-1)$. 

      \item  Proposition \ref{conv2}: Algorithm: Design convolutional self-dual codes leading to quantum convolutional code construction: 

    %%     Proposition \ref{conv2}:  Let $U$ be a $2n\ti 2n$ orthogonal matrix. Suppose $U= \begin{smatrix}A \\ B\end{smatrix}$ where $A$ and $B$ have size $n\ti 2n$ and  $\begin{smatrix}A \\ B\end{smatrix}\begin{smatrix}  C & D \end{smatrix}=I_{2n}$  where $C,D$ have size $2n\ti n$.%% Let $U$ be an orthogonal $2n\ti 2n$ matrix in a field of characteristic $2$. Suppose $U= \begin{smatrix}A \\ B\end{smatrix}$ where $A$ and $B$ have size $n\ti 2n$. Then $G(z) = A+Bz$ generates a convolutional self-dual $(2n,n,n;1)$ code $\C$.
        
    %%     (1) $A$ generates a $[2n,n]$ code $\C$ and $B$ generates the dual of $\C$; hence $\C$  is an LCD code.  $B$ generates a $[2n,n]$ code $\D$ and $A$ generates the dual of $\D$; hence $\D$ is an LCD code. 

    %%     (2) $G(z) = A+Bz$ generates a (non-catastrophic)  convolutional $(2n,n,n;1)$ code $\C$. Then  $-C\T+D\T z=-A+Bz$ generates the dual code of $\C$.

    %%     (3) If $U$ is a matrix over a field of characteristic $2$ then $G(z) = A+Bz$ generates a self-dual convolutional code. 

    %% (4) If  the code generated by $A$ has distance $d_1$ and the code generated by $B$ has distance $d_2$, then the free distance of of the code generated by $G(z)$ is $d=d_1+d_2$ and is a $(2n,n,n;1,d)$ convolutional code. Further if the information vector $P(z)$ has support $k$ then $P(z)G(z)$ has distance $\geq (d_1+d_2+k-1)$.

    %%     (5) In case of characteristic $2$, the code generated by $G(z)$ is used to generate a quantum convolutional code of memory $1$ which has type $[[2n,0,d]]$ where $d$ is given in (4). 
  
            \item Proposition \ref{conv3}:  Algorithm:  Design  DC  convolutional codes; design quantum convolutional codes therefrom.  %% different rates

          %%     Proposition\ref{conv3}: Let $U$ be a matrix over a field with $U= \begin{smatrix}A \\ B_1\end{smatrix}$ where $A$ has size $r\ti n$ and $B_1$ has size $(n-r)\ti n$ with $r>n-r$. Let $t=2r-n$ and $B= \begin{smatrix} 0_{t\ti n} \\ B_1\end{smatrix}$. Then

          %% (1) $G(z) = A+Bz$ generates a convolutional $(n,r,n-r;1)$ code $\C$.

          %% (2) The distance $d$ of $\C$ is the distance of the block code generated by the first $2r-n$ rows of $A$.

          %% If  the code generated by $A$ has distance $d_1$ and the code generated by $B$ has distance $d_2$, then the free distance of $\C$ is $d=d_1+d_2$ and $\C$ is a $(2n,n,n;1,d)$ convolutional code. 
          %% (3) If the information vector $P(z)$ has support $k$ then $P(z)G(z)$ has distance $d+k-1$ where $d$ is the distance of the block code generated by the first $2r-n$ rows of $A$. 

            \item   %enables %% in a field of characteristic $2$
              Proposition \ref{conv4}:
      Algorithm therefrom:   Design DC convolutional codes of rate $> \f{1}{2}$ from orthogonal matrices and orthogonal like matrices; design convolutional quantum codes therefrom.  %%  Suppose $U= \begin{smatrix}A \\ B_1\end{smatrix}$ where $A$ has size $r\ti n$ and $B_1$ has size $(n-r)\ti n$ with $r>n-r$. 
%% Let $t=2r-n$ and $B= \begin{smatrix} 0_{t\ti n} \\ B_1\end{smatrix}$. Then

%%           (1) $G(z) = A+Bz$ generates a convolutional DC (dual-containing) $(n,r,n-r;1)$ code $\C$.

%%           (2) The distance $d$ of $\C$ is the distance of the block code generated by the first $2r-n$ rows of $A$.

%%           %% If  the code generated by $A$ has distance $d_1$ and the code generated by $B$ has distance $d_2$, then the free distance of $\C$ is $d=d_1+d_2$ and $\C$ is a $(2n,n,n;1,d)$ convolutional code. 
%%           (3) If the information vector $P(z)$ has support $k$ then $P(z)G(z)$ has distance $d+k-1$ where $d$ is the distance of the block code generated by the first $2r-n$ rows of $A$.

%%       %% If the code generated by $A$ has distance $d_1$ and the code generated by $B$ has distance $d_2$, then the free distance of $\C$ is $d=d_1+d_2$ and $\C$ is a $(2n,n,n;1,d)$ self-dual convolutional code. Further if the information vection $P(z)$ has support $k$ then $P(z)G(z)$ has distance $\geq d_1+d_2+k-1$.

\item\label{memory}  Propositions \ref{mem} and \ref{three}. Algorithm: Design higher memory convolutional DC codes,  quantum convolutional codes and LCD convolutional codes.   %% leads to      For example convolutional codes of rates $\f{1}{4}$ and $\f{3}{4}$ which are dual to one another are constructed; this enables quantum convolutional codes of rate $\f{1}{2}$ to be constructed.

\item Section \ref{ldpc}, Algorithm \ref{ldpcalg}. Design  LDPC linear block codes and LDPC convolutional codes  by applying unit-derived techniques to special units in  group rings; these can be designed with no short cycles in a control matrix. %is given in 
 
           \end{enumerate}

   %% The types of constructions as in items 13 and 14 may be continued to give larger memory and well-structured convolutional code. A general proposition is not included but the structures are similar devised as per Propositions \ref{mem} and \ref{three}.  For example a unit scheme which can be broken into blocks of $8$ in an $8n\ti 8n$ enables convolutional codes with memory $8$ and rate $\f{1}{8}$ and $\f{7}{8}$  to be established. In special cases the $\f{7}{8}$ rate code is dual containing establishing a rate $\f{3}{4}$ quantum convolutional code of memory $8$ similar to that in Proposition \ref{mem}, item 12 above.
      
\subsection{Decoding} 
Efficient decoding techniques for unit-derived linear block codes from Fourier/Vandermonde type matrices are established in \cite{unitderived}, Algorithms 6.1, 6.2 and 6.3. The constructions quickly lead to the establishment of  error-correcting pairs for the codes; error-correcting pairs are due to Pellikan, \cite{pell}.  The algorithms are especially  useful for  solving underdetermined systems using error-correcting codes - see  \cite{hurleyunder}. %Normal known decoding methods can also be applied. %% and codes here have further applications in this area.
Several algorithms exist for decoding convolutional codes, the most common ones being the Viterbi algorithm and the sequential decoding algorithm. Other types of decoding such syndrome decoding are also available.

%% \section{From unit-derived codes to convolutional codes} The basic unit-derived scheme is given by:

%% The results from which the constructions are derived are synopsized in Section . These are very general and enable series and infinite series of many types of codes, mds codes, DC, LCD, quantum linear and convolutional and more. %these supplement 
\section{Unit-derived}\label{unitderived}

The unit-derived method for constructing and analysing linear block codes  was initiated in \cite{hur1,hur2,hur11,hur0} and continued in \cite{unitderived,hurleyquantum} and elsewhere.
Any linear block code can be derived by the unit-derived method although this may not have been the original line of thought in the construction of the code.
The unit-derived method gives further information on the code in addition to describing the  generator and control matrices. See example  \ref{hamm} below which uses the Hamming $[7,4,3]$ in its unit-derived code form. %% and from this unit scheme a convolutional code of the same rate and  twice the distance of the code is constructed.  

A linear block code with generator matrix $G$ and check matrix $H$ is described by
$GH\T= 0$. The matrix $H$ generates the dual of the code and  the term `control matrix' is also be used for the matrix $H\T$. 
The basic unit-derived method is obtained as  follows:
$U$ is an invertible $n\ti n$  matrix and is broken up as $U=\begin{ssmatrix} A \\ B \end{ssmatrix}$. The inverse of $U$ has a compatible form $\begin{ssmatrix}  C & D \end{ssmatrix}$ so that  
$ \begin{pmatrix} A \\ B \end{pmatrix} \begin{pmatrix}  C & D \end{pmatrix}=I_n$.

Now $A$ has size $r\ti n$ for some $r$ and then $B$ has size $(n-r)\ti n$, $C$ has size $n\ti r$ and $D$ has size $n\ti (n-r)$. 
Then precisely $AC=I_{r}, AD=0_{r\ti (n-r)}, BC = 0_{(n-r)\ti r}, BD = I_{(n-r)}$.
So $AD=0$ defines an $[n,r]$ code $\C$ where $A$ is the generator matrix, $D$ is a control matrix and $D\T$ generates the dual code of $\C$.

%% The above is a basic unitary scheme from which codes can be  defined and analysed.
%% Thus $AD=0$ gives a $[n,r]$ code with generator matrix $A$, check matrix $D\T$ and control matrix $D$.
A more general form of the unit-derived method, see \cite{hur1,hur2,hur11,hur0},
 is as follows:  Given a unit matrix system $UV= I_n$, taking any $r$ rows of $U$ gives an $[n,r]$ code and a control matrix is obtained by eliminating the corresponding columns of $V$. Thus many codes may be derived, and codes of a particular type, from a single unit scheme.

Properties of the units are used to obtain  properties of the codes, and units are formed with a particular type, length, rate, distance or field type in mind. Infinite series of required codes are also constructed and analysed; see for example the paper \cite{hurleyorder}.
%% A more general scheme is also of use: Let $U$ be an invertible $n\ti n$ matrix with inverse $V$. Choose any $r$ rows of $U$ as a generator matrix for an $[n,r]$ code and then elimintating the corresponding $r$ columns of $V$ gives a control matrix for the code, see   \cite{hur1,hur2,hur11,hur0}.
The number of choices of $r$ rows from $n$ is ${n\choose r}$ , thus deriving  many codes from a single unit scheme. Having a big choice is also useful in producing cryptographic schemes from large unit schemes.

%% This give $AC=I_r, AD = 0, BC = 0, BD=I$. Suppose $A$ is an $r\ti n$ matrix, then $B$ is $ (n-r)\ti n$, $C$ is $n\ti r$, $D$ is $ n\ti (n-r)$.
In the basic unit scheme above $A$ is the generator matrix of an $[n,r]$ code $\C$ and then $D$, which is a $(n-r)\ti r$ matrix, is a check matrix for $\C$. Every code over a field can be given in this unit-derived form, Proposition \ref{unit}. But note that $B,C$ have been ignored! They can also be used to describe a `complementary code' but even better can be used to form a convolutional code with $A,D$. Distances of the convolutional codes formed from a unit can often be determined in terms of a sum of the distances of linear codes formed from that unit.  Convolutional codes have in addition their own efficient decoding algorithms, such as Viterbi algorithm and sequential decoding algorithm.
%% order of multiples of the distances  for a related linear block code of the same rate.
%% is a generator matrix of a $[n,n-r]$ code which has $C\T$ as a check matrix.
% We shall also say that $D$ is a control matrix for $\C$. 

The matrices $\{A,B,C,D\}$ have full ranks as they are parts of invertible matrices.
%% All the matrices $A,B,C,D$ have full rank in the sense that the rank of any of these is equal to its number of rows or columns whichever is the smallest.

%% Note that, has already pointed out elsewhere,  
Every linear block code over a field arises as  a unit-derived code. %For completeness a proof is included here.

\begin{proposition}\label{unit}  Let $\C$ be a linear code over a field. Then $\C$ is equivalent to a unit-derived code.
\end{proposition}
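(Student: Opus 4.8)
The plan is to show something slightly stronger than equivalence, namely that $\C$ is \emph{literally} a unit-derived code, by completing a generator matrix of $\C$ to an invertible matrix. So let $\C$ be an $[n,r]$ linear code over the field $\F$ with generator matrix $G\in\F^{r\ti n}$; by definition $G$ has rank $r$ and its row space is $\C$.

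The first step is a basis-extension argument. The $r$ rows of $G$ are linearly independent vectors of the $n$-dimensional $\F$-vector space $\F^n$, so --- and here is exactly where the hypothesis that we work over a field is used --- they extend to a basis $g_1,\dots,g_r,b_1,\dots,b_{n-r}$ of $\F^n$. Collect $b_1,\dots,b_{n-r}$ into a matrix $B\in\F^{(n-r)\ti n}$ and set $U=\begin{ssmatrix} G \\ B\end{ssmatrix}\in\F^{n\ti n}$. Since its rows form a basis, $U$ has rank $n$, hence is a unit.

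The second step is simply to invoke the basic unit-derived set-up described above, applied to this $U$ with the partition $A=G$ of size $r\ti n$ and $B$ of size $(n-r)\ti n$. Writing $U^{-1}=\begin{ssmatrix} C & D\end{ssmatrix}$ in the compatible block form (with $C$ of size $n\ti r$ and $D$ of size $n\ti(n-r)$), the relation $UV=I_n$ forces $AC=I_r$, $AD=0$, $BC=0$, $BD=I_{n-r}$, exactly as in the displayed relations above. Hence $A$ is the generator matrix, and $D$ a control matrix, of the unit-derived $[n,r]$ code whose row space equals that of $A$. But $A=G$, so that row space is precisely $\C$; therefore this unit-derived code is $\C$ itself, and in particular $\C$ is equivalent to a unit-derived code.

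I do not anticipate a genuine obstacle here: the entire content is the basis-extension step, and the point to emphasise is that it fails over a general ring (a linearly independent set need not complete to a basis, equivalently $G$ need not complete to an invertible matrix), which is why ``over a field'' appears in the hypothesis. A minor remark worth including is that the completing matrix $B$ --- equivalently, a vector-space complement of $\C$ in $\F^n$ --- is highly non-unique, and this freedom is what later allows the same unit scheme to also carry the associated complementary and convolutional codes; but for the present proposition any single choice of $B$ suffices.
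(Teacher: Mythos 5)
Your proof is correct and follows essentially the same route as the paper's: extend the rows of a generator matrix to a basis of $\F^n$, form the resulting invertible matrix $U=\begin{ssmatrix} G \\ B\end{ssmatrix}$, and read off the unit-derived scheme with $A=G$ and $AD=0$. Your observation that the unit-derived code is literally $\C$ (not merely equivalent) is consistent with the paper, which phrases the conclusion via the check matrices $D\T$ and $H$ defining the same null space.
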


 \begin{proof}
  Assume $\C$ is an $[n,r]$ code with generator matrix $A$ and check matrix $H$. Then $AH\T=0$ for an $r\ti n$ matrix $A$, and an  $(n-r)\ti n$ matrix $H$; here  $0=0_{r\ti (n-r)}$. Let $\{e_1,e_2,\ldots, e_r\}$ be the rows of $A$ which are linearly independent. Extend these to a basis $\{e_1,e_2,\ldots, e_r, e_{r+1},\ldots, e_n\}$ for the whole space $n$-dimensional space. Let $B=\begin{ssmatrix}e_{r+1} \\ \vdots \\e_n \end{ssmatrix}$ and $G=\begin{pmatrix}A \\B \end{pmatrix}$. Then $G$ is invertible with inverse given by $K=G^{-1}=
  \begin{pmatrix} C &D \end{pmatrix}$ where $C$ is an $n\ti r$ matrix and $D$ is an $n\ti (n-r)$ matrix. Thus $ \begin{pmatrix} A \\ B \end{pmatrix} ( C  D) =I_n$. Then $AD=0$. Now $D\T$ has rank $(n-r)$ and $H\T$ has rank $(n-r)$ and hence the code generated by $A$ with check matrix $H$ is equivalent to the code generated by $A$ with check matrix $D\T$. ($D$ and $H\T$ generate the null space of $Ax=0$ and have the same rank.)

  %% Thus $\C$ is a unit-derived code.
 \end{proof}

 A code may not be originally  constructed as a unit-derived code but it is useful to look at a code in this manner which  leads to further and better  constructions and in particular to  constructions of convolutional codes.
   A code is a  structures which is part of a bigger structure on which more is already known. Using the  %% By buiding up a structure to be part of a structure we know is extremely beneficial and we can then use 
bigger structure to construct and analyse the embedded structures has many advantages. Multiple codes of a particular type may be deduced from just one unit.

 \begin{example}\label{hamm1} In section \ref{hamm} the Hamming $[7,4,3]$ is given  as a unit-derived code and from this a Hamming-type convolutional $(7,4,3;1,6)$ (binary) code is constructed; the distance is twice that of the Hamming code. Decoding techniques for convolutional codes can be employed. 
 \end{example}
 
  If we require particular types of codes as for example  DC (including self-dual) codes or LCD codes, then look for particular types of units which give such codes in the unit-derived way.

  %% {\it Using $AD=0$ (only) from the unit-derived formula does not utilise $\{B,C\}$. }

  %% The matrices $\{B,C\}$ which complement the system have not been used. Forming convolutional codes from the unit-derived formula and using $\{B,C\}$ will now be shown to get better codes,  and  Efficient decoding algorithms exist %% - and which have efficient algebraic decoding.
  %% and 
The convolutional codes  derived in \cite{hurleyunit} use  unit-derived methods from Vandermonde/Fourier and other well-behaved  matrices. %In \cite{hur}

Unit-derived codes may also be obtained from a scheme where  $UV=\al I_n, \al \neq 0$. The process is similar: Choose any $r$ rows of $U$ for a generator matrix and a check matrix is obtained by eliminating the corresponding columns of $V$. This is useful when considering Vandermonde/Fourier matrices, Propositions \ref{fourier} and \ref{fourier1},  and Hadamard-type matrices, section \ref{hadamard}. 
\subsection{Using orthogonal units}\label{secorthog}
\begin{proposition}\label{orthogonal} Let $U$ be an orthogonal matrix. Then any unit-derived block linear code from $U$ is an LCD (linear complementary dual) code.
  
    \end{proposition}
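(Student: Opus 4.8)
The plan is to extract from the orthogonality of $U$ the single algebraic identity that drives everything: for any selection of $r$ rows of $U$, the corresponding $r\ti r$ submatrix of $UU\T$ equals $I_r$. First I would fix notation in line with the basic unit-derived scheme. Picking $r$ rows of $U$ to form the generator matrix $A$ (size $r\ti n$) and letting $B$ (size $(n-r)\ti n$) be the complementary rows, we have $U=\begin{ssmatrix} A \\ B \end{ssmatrix}$; choosing a different set of $r$ rows merely permutes which rows land in $A$ and which in $B$, so no generality is lost. Since $U$ is orthogonal, $U^{-1}=U\T$, and therefore in the notation $U^{-1}=\begin{ssmatrix} C & D \end{ssmatrix}$ of the unit-derived method we get $\begin{ssmatrix} C & D \end{ssmatrix}=U\T=\begin{ssmatrix} A\T & B\T \end{ssmatrix}$, i.e.\ $C=A\T$ and $D=B\T$. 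Thus $\C$ is the row space of $A$, the control matrix is $D=B\T$, and $D\T=B$ is a generator matrix of $\C^{\perp}$.

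Next comes the key step. Because the $(i,j)$ entry of $UU\T$ is the dot product of rows $i$ and $j$ of $U$ and $UU\T=I_n$, the submatrix of $UU\T$ whose rows and columns are indexed by the chosen $r$ rows is exactly $AA\T=I_r$. This is the only property of $U$ the proof will use, and it is what lets the argument go through over an arbitrary field rather than only over $\R$.

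Then I would finish with a direct computation. Let $v\in\C\cap\C^{\perp}$. As $v\in\C$, write $v=xA$ for a row vector $x\in\F^r$. As $v\in\C^{\perp}$ and the rows of $A$ generate $\C$, the vector $v$ is orthogonal to each row of $A$, i.e.\ $vA\T=0$. Substituting, $0=vA\T=xAA\T=xI_r=x$, hence $v=xA=0$. Therefore $\C\cap\C^{\perp}=0$, which is precisely the definition of an LCD code given in the subsection on DC and LCD codes above; this proves the proposition. I would also remark that the bookkeeping shows $\F^n=\C\oplus\C^{\perp}$ (the rows of $A$ and $B$ together span $\F^n$ since $U$ is invertible), which is the geometric content of the LCD property.

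As for difficulty, there is essentially no hard step: the entire content is the observation $AA\T=I_r$. The one point worth flagging is that over a finite field one must not appeal to the real-inner-product intuition that $\C\oplus\C^{\perp}=\F^n$ holds automatically — codes over finite fields can even be self-dual — so the proof genuinely needs the orthogonality of $U$ and not just the dimension count $\dim\C+\dim\C^{\perp}=n$.
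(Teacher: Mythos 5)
Your proof is correct, but the decisive step is different from the paper's. Both arguments begin the same way, using $U^{-1}=U\T$ to identify $C=A\T$ and $D=B\T$ in the unit scheme $\begin{ssmatrix} A \\ B \end{ssmatrix}\begin{ssmatrix} C & D \end{ssmatrix}=I_n$. From there the paper identifies the dual code explicitly: since $AB\T=0$ and $B$ has full rank $n-r$, the row space of $B$ is $\C^{\perp}$, and then $\C\cap\C^{\perp}=0$ follows because the rows of $A$ together with the rows of $B$ are the rows of the nonsingular matrix $U$, so no nontrivial combination of rows of $A$ can equal a combination of rows of $B$. You instead extract the single identity $AA\T=I_r$ (the Gram matrix of the chosen rows, read off as an $r\ti r$ diagonal block of $UU\T=I_n$) and run the Massey-type computation: $v=xA\in\C\cap\C^{\perp}$ forces $vA\T=0$, hence $x=xAA\T=0$. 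Your route is a special case of Massey's criterion that a code with generator matrix $G$ is LCD exactly when $GG\T$ is nonsingular; it needs no rank or dimension bookkeeping to pin down $\C^{\perp}$, and it makes transparent why the argument works over any field. The paper's route costs a little more bookkeeping but buys the explicit description of the dual code as the span of the complementary rows $B$, which is exactly the datum reused later (e.g.\ in Propositions \ref{orthogonal1} and \ref{conv2}) to build the convolutional and self-dual constructions. Your closing caution about finite fields (that $\C\oplus\C^{\perp}=\F^n$ is not automatic and genuinely needs $AA\T=I_r$ rather than a dimension count) is well placed and consistent with the paper's framing.
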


\begin{proof} Now $UU\T = I_n$. Thus the unit scheme is 
  $UU\T= \begin{smatrix} A \\ B \end{smatrix}  \begin{smatrix}  C & D \end{smatrix}=I_n$ for matrices $A,B,C,D$ where $A$ is of size $r\ti n$, $B$ is os size $(n-r)\ti n$, $C$ is of size $n\ti r$ and $D$ is of size $n\ti (n-r)$. Denote the code generated by $A$ by $\C$. This code has control matrix $D$, which means   $AD=0$. Now $U\T =  \begin{smatrix}  C & D \end{smatrix}$  and so $U=  \begin{smatrix} C\T \\ B\T \end{smatrix}$ giving that $C\T=A, D\T= B$. Thus $AD=0$ is the same as $AB\T =0$.

  Hence $B$ generates the dual code of $\C$. Now no non-trivial sum of rows of $A$ can be a sum of rows of $B$ as $ U=\begin{smatrix} A \\ B \end{smatrix}$ is non-singular.
Hence $\C\cap \C^{\perp}=0$ as required.
  \end{proof}
%The following definition is followed here.
The following Proposition is shown in a similar manner to Proposition \ref{orthogonal}. 
\begin{proposition}\label{orthogonal1}
  Let $X$ be an $n\ti n$ matrix such that $XX\T=\al I_n$ for $\al \neq 0$. Suppose $X$ is broken as follow: $X= \begin{ssmatrix} A \\ B\end{ssmatrix}$ and this gives $\begin{ssmatrix}A \\ B\end{ssmatrix} \begin{ssmatrix} A\T & B\T \end{ssmatrix} = \al I_n$ where $A$ has size $r\ti n$. Then $A$ generates an $[n,r]$ LCD code and $B$ generates the dual of this code.
      \end{proposition}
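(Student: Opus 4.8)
The plan is to mimic the proof of Proposition \ref{orthogonal} almost verbatim, making the necessary adjustments for the scalar $\al$. First I would observe that the hypothesis $XX\T = \al I_n$ with the splitting $X = \begin{ssmatrix} A \\ B\end{ssmatrix}$ immediately gives the block identity $\begin{ssmatrix}A \\ B\end{ssmatrix} \begin{ssmatrix} A\T & B\T \end{ssmatrix} = \al I_n$, and reading off the blocks yields $AA\T = \al I_r$, $AB\T = 0_{r\ti(n-r)}$, $BA\T = 0_{(n-r)\ti r}$, and $BB\T = \al I_{n-r}$. Since $\al \neq 0$, the matrix $\f{1}{\al}X\T$ is a genuine inverse of $X$, so we are in the generalised unit scheme $UV = \al I_n$ described just before Proposition \ref{orthogonal}, with $U = X$ and $V = \f{1}{\al}X\T$. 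Taking the first $r$ rows, namely $A$, as the generator matrix and eliminating the corresponding $r$ columns of $V$ leaves $\f{1}{\al}B\T$ as a control matrix; since $AB\T = 0$, the matrix $B\T$ (equivalently $\f{1}{\al}B\T$) is a control matrix for $\C = \langle A\rangle$, so $B$ generates $\C^\perp$.

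Next I would argue that $\C$ is LCD, i.e. $\C \cap \C^\perp = 0$. The key point is that $A$ has rank $r$ and $B$ has rank $n-r$ (both are blocks of the invertible matrix $X$), and moreover the row spaces of $A$ and $B$ together span $\F^n$ since $X$ is non-singular; hence $\F^n = \C \oplus \langle B\rangle = \C \oplus \C^\perp$ as vector spaces, forcing $\C \cap \C^\perp = 0$. Alternatively, and perhaps more cleanly, one can use the orthogonality relations directly: if $v \in \C \cap \C^\perp$ then $v = xA$ for some $x \in \F^r$ and also $v = yB$ for some $y \in \F^{n-r}$; then $v v\T = (xA)(yB)\T = x(AB\T)y\T = 0$, but over an arbitrary field a nonzero vector can be self-orthogonal, so this by itself is not enough — which is why the dimension/rank argument is the right route. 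So I would phrase it as: since $\begin{ssmatrix} A \\ B\end{ssmatrix}$ is non-singular, no nontrivial $\F$-linear combination of rows of $A$ equals an $\F$-linear combination of rows of $B$, hence $\C \cap \C^\perp = 0$, exactly as in the proof of Proposition \ref{orthogonal}.

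The step that requires a little care — the only real obstacle — is the passage from $AB\T = 0$ to the statement that $B$ generates the full dual code $\C^\perp$, rather than merely a subcode of it. This is handled by a rank count: $\C^\perp$ has dimension $n - r$, and $B$ has rank $n-r$ with every row orthogonal to every row of $A$ (by $AB\T = 0$), so $\langle B\rangle \subseteq \C^\perp$ with equal dimensions, whence $\langle B\rangle = \C^\perp$. The scalar $\al$ plays no essential role beyond ensuring $X$ is invertible, so the proof is genuinely "in a similar manner to Proposition \ref{orthogonal}" as the text promises; I would keep it short and simply point to those adjustments.

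\begin{proof}
Since $\al \neq 0$, the matrix $X$ is invertible with $X^{-1} = \f{1}{\al}X\T$, so we are in the generalised unit scheme $UV = \al I_n$ with $U = X$, $V = \f{1}{\al}X\T$. Writing $X = \begin{ssmatrix} A \\ B\end{ssmatrix}$ with $A$ of size $r\ti n$ and $B$ of size $(n-r)\ti n$, the identity $XX\T = \al I_n$ reads
\[
\begin{pmatrix} A \\ B\end{pmatrix}\begin{pmatrix} A\T & B\T\end{pmatrix} = \al I_n,
\]
so that $AA\T = \al I_r$, $AB\T = 0_{r\ti(n-r)}$, $BA\T = 0_{(n-r)\ti r}$ and $BB\T = \al I_{n-r}$. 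Taking $A$ as generator matrix and deleting the first $r$ columns of $V = \f{1}{\al}X\T$ leaves $\f{1}{\al}B\T$, which (equivalently $B\T$) is a control matrix for $\C = \langle A\rangle$; thus $\C$ is an $[n,r]$ code. Since $AB\T = 0$ and $B$ has rank $n-r$ (being a block of the non-singular matrix $X$), the row space $\langle B\rangle$ is contained in $\C^\perp$ and has the same dimension $n-r$, hence $\langle B\rangle = \C^\perp$, i.e. $B$ generates the dual of $\C$.

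Finally, because $\begin{ssmatrix} A \\ B\end{ssmatrix} = X$ is non-singular, no nontrivial $\F$-linear combination of the rows of $A$ can equal an $\F$-linear combination of the rows of $B$. Therefore $\C \cap \C^\perp = \langle A\rangle \cap \langle B\rangle = 0$, and $\C$ is an LCD code.
\end{proof}
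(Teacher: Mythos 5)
Your proof is correct and takes essentially the same route as the paper, which gives no separate argument for this statement but simply notes it is ``shown in a similar manner to Proposition \ref{orthogonal}'': read off $AB\T=0$ from the block identity, conclude $B$ generates $\C^{\perp}$ (your rank count makes this explicit), and get $\C\cap\C^{\perp}=0$ from the non-singularity of $\begin{ssmatrix} A \\ B\end{ssmatrix}$. The handling of the scalar $\al$ via the generalised scheme $UV=\al I_n$ is exactly the intended adjustment, so nothing is missing.
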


Orthogonal matrices are thus a rich source for LCD codes. Given an orthogonal $n\ti n$ matrix $U$ any $r$  rows of $U$ may be chosen as the generator matrix for a $[n,r]$ code and this code is then an LCD linear block code.

An orthogonal matrix may also be used to form a self-dual code by combining it with  an identity as described in the following Propositions \ref{jut} and \ref{orthogonalsq}. %% A particular case is when the orthogonal matrix is also symmetric. This enables further self-dual matrices to be obtained with very good distances by attaching the identity.
The extended Hamming $[8,4,4]$ and Golay $[24,12,8]$ codes are constructed in this way, see examples \ref{self} and \ref{self1} below. Other self-dual codes may be constructed in a similar manner from orthogonal matrices. % in characteristic $2$.

The following Proposition  is known but is given here in a form suitable for the constructions. %elsewhere in the literature. 
\begin{proposition}\label{jut} Let $X$ be an orthogonal $n\ti n$ matrix in a field of characteristic $2$. Then the matrix $A=(I_n,X)$ generates  a self-dual $[2n,n]$ matrix. Conversely if $A=(I_n,X)$ is a self-dual code where $X$ is an $n\ti n$ matrix in a field of characteristic $2$, then $X$ is orthogonal. 
\end{proposition}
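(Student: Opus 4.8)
The plan is to work directly with the defining matrix identities for a self-dual code. Recall that a code with generator matrix $A = (I_n, X)$ of size $n \times 2n$ is self-dual precisely when $\C = \C^\perp$; since $\C$ has dimension $n$ and $\C^\perp$ has dimension $2n - n = n$, equality of the two is equivalent to the containment $\C \subseteq \C^\perp$, which in turn is equivalent to $A A\T = 0$. So the first step is to record this reduction: $A = (I_n, X)$ generates a self-dual $[2n, n]$ code if and only if $A A\T = 0_{n \times n}$ and $\rank A = n$. The rank condition is automatic because of the $I_n$ block, so everything comes down to the single equation $A A\T = 0$.

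The second step is the computation $A A\T = (I_n, X)(I_n, X)\T = I_n \cdot I_n + X \cdot X\T = I_n + X X\T$. Here I am using that we are in characteristic $2$ only implicitly so far — the block multiplication is valid over any field. Setting this equal to $0$ gives $X X\T = -I_n$, and in characteristic $2$ this reads $X X\T = I_n$, i.e. $X$ is orthogonal. Conversely, if $X$ is orthogonal then $X X\T = I_n$ and hence $A A\T = I_n + I_n = 2 I_n = 0$ in characteristic $2$, so $\C \subseteq \C^\perp$; combined with the dimension count above, $\C = \C^\perp$ and the code is self-dual. This handles both directions at once, since each step is an equivalence.

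I would present the forward direction first (orthogonal $X$ yields self-dual $A$), then note that the chain of equivalences $A A\T = 0 \iff X X\T = I_n$ immediately gives the converse, since "self-dual" forces $A A\T = 0$ by the dimension argument. The only point requiring a word of care — and the closest thing to an obstacle — is justifying that $\C \subseteq \C^\perp$ together with $\dim \C = \dim \C^\perp = n$ forces equality; this is the standard fact that a subspace contained in another subspace of the same finite dimension coincides with it, and I would state it explicitly rather than leave it implicit, since the whole proof hinges on converting the self-duality condition into the clean matrix equation $A A\T = 0$. Everything else is the one-line block-matrix multiplication above.
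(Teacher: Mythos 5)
Your proof is correct and follows essentially the same route as the paper: both hinge on the block computation $AA\T = I_n + XX\T$ over characteristic $2$, with the paper phrasing the conclusion via $\bigl(\begin{smallmatrix} I_n \\ X\T \end{smallmatrix}\bigr)$ being a rank-$n$ control matrix whose transpose is $A$ itself, which is just the control-matrix dressing of your criterion ``self-dual $\iff AA\T=0$ and $\rank A = n$.'' Your explicit dimension-count justification ($\C\subseteq\C^\perp$ with $\dim\C=\dim\C^\perp=n$ forces equality) is a welcome clarification of a step the paper leaves implicit, but it is not a different argument.
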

\begin{proof} Suppose $X$ is orthogonal. Then $\begin{ssmatrix} I_n, & X \end{ssmatrix}\begin{ssmatrix} I_n \\ X\T \end{ssmatrix} = I_n + XX\T = I_n+I_n= O_{n\ti n}$. Thus $\begin{ssmatrix} I_n \\ X\T \end{ssmatrix}$, of rank $n$, is a control matrix for the code $\C$ generated by $\begin{ssmatrix} I_n, & X \end{ssmatrix}$. Thus $(\begin{ssmatrix} I_n \\ X\T \end{ssmatrix})\T= \begin{ssmatrix}I_n, & X\end{ssmatrix}$ generates the dual code of $\C$. Hence $\C$ is self-dual.

    On the other hand if the code generated by $(I_n,X)$ is self-dual then $(I,X)\T = \begin{ssmatrix} I_n \\ X\T \end{ssmatrix}$ is a control matrix for this code and so $(I_n,X)\begin{ssmatrix} I_n \\ X\T \end{ssmatrix}= 0_{n\ti n}$. Hence $I_n+XX\T = 0_{n\ti n}$ and so $XX\T = I_n$.  %% Thus dualNow  $(I,U)$ generates an $[2n,n]$ code $\C$  with $(I,U) \begin{ssmatrix}I \\ U \end{ssmatrix} = I_n +U^2=I_n+I_n=0_{n\ti n}$. As $\begin{ssmatrix}I_n \\ U \end{ssmatrix}$ is a $2n\ti n$ matrix of rank $n$ it is a control matrix for $\C$. Thus the dual of $\C$ is  generated by  $\begin{ssmatrix}I \\ U \end{ssmatrix})\T = (I,U\T) = (I,U)=A$ and thus $\C^\perp = \C$.
\end{proof}
\begin{example}\label{self}
The Hamming $[8,4,4]$ self-dual binary code $\mathcal{H}$ is formed this way. Let $U=\begin{ssmatrix}0&1&1&1 \\ 1&1&1&0\\1&1&0&1\\1&0&1&1 \end{ssmatrix}$ Then $U^2=I_4, U=U\T$ and $A=(I_4,U)$ is a generator matrix for the Hamming $[8,4,4]$ self-dual code $\mathcal{H}$. In addition the control matrix for the code has the form $\begin{smatrix}I_{4} \\ U \end{smatrix}$ in which each row is unique and can be used to correct any one error in the one-error correcting code $\mathcal{H}$. 
\end{example}
\begin{example}\label{self1}
The Golay $[24,12,8]$ is formed in this way, \cite{ian}. Let $U$ be the reverse circulant matrix formed using $(0,1,1,0,1,1,1,1,0,1,0,0)$ as the first row. Then $U^2=I_{12},U=U\T$ and $(I_{12},U)$ is a generator matrix for the self-dual Golay $[24,12,8]$ code $\mathcal{G}$. See \cite{ian} for details. The control matrix for the code has the form $\begin{smatrix} I_{12} \\ U\end{smatrix}$.  The sum of any $1$, $2$ or $3$ rows is unique and thus a lookup table can be formed to correct up to three errors in this 3-error correcting Golay $[24,12,8]$ self-dual code.

\end{example}

A generator matrix for a linear block code may be given in systematic form,  $G=(I_n,P)$ where $P$ is an $n\ti t$ matrix, see \cite{blahut}. The distance of the code generated by $(I_n,P)$ is a function of the `unit-derived' type codes from $P$.
\begin{proposition} Consider the code $\C$  generated by $G=(I_n,P)$. Suppose the code generated by any $s$ rows of $P$ has distance $ \geq (d-s)$ and for some choice of $r$ rows the code generated by these $r$ rows has distance exactly $(d-r)$; then the distance of $\C$ is $d$.    
\end{proposition}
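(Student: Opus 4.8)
The plan is to analyze the weight of an arbitrary nonzero codeword of $\C$ by splitting it into the "information" part (the coordinates governed by $I_n$) and the "parity" part (the coordinates governed by $P$), and to relate the weight of the parity part to the distances of the subcodes of $P$ spanned by subsets of rows. First I would fix notation: a codeword of $\C$ has the form $(u, uP)$ where $u \in \F^n$ is the information vector, and its Hamming weight is $\mathrm{wt}(u) + \mathrm{wt}(uP)$. Let $s = \mathrm{wt}(u)$, and let $S \subseteq \{1,\dots,n\}$ be the support of $u$, so $|S| = s$. Observe that $uP$ is then a nonzero-coefficient linear combination of exactly the $s$ rows of $P$ indexed by $S$ — that is, $uP$ lies in (and is a generic element of) the code generated by those $s$ rows of $P$. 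Here one must be slightly careful: "the code generated by any $s$ rows of $P$ has distance $\geq (d-s)$" should be read as applying to the code spanned by any $s$-subset of rows; if $u \neq 0$ then $uP$ is a nonzero element of such a code, hence $\mathrm{wt}(uP) \geq d - s$, provided $d - s \geq 1$.

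The key steps, in order, are: (1) establish the lower bound $\mathrm{wt}((u,uP)) \geq d$ for every nonzero $u$. Split into cases. If $s \geq d$ we are already done since $\mathrm{wt}((u,uP)) \geq \mathrm{wt}(u) = s \geq d$. If $1 \leq s \leq d-1$, then by hypothesis the $s$ rows of $P$ indexed by $\mathrm{supp}(u)$ generate a code of distance $\geq d - s$, and since $u$ restricted to $S$ is a full-support (hence nonzero) coefficient vector, $\mathrm{wt}(uP) \geq d - s$; therefore $\mathrm{wt}((u,uP)) \geq s + (d-s) = d$. Combining the cases, every nonzero codeword has weight at least $d$, so $d(\C) \geq d$. (2) Establish that the bound is attained. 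Take the distinguished choice of $r$ rows of $P$ whose span has distance exactly $d - r$ (note $r \leq d-1$ so this is a genuine positive distance), and pick a coefficient vector realizing a minimum-weight word there; extend it by zeros to an information vector $u_0 \in \F^n$ with $\mathrm{wt}(u_0) = r$ and $\mathrm{wt}(u_0 P) = d - r$. Then $(u_0, u_0 P) \in \C$ has weight exactly $r + (d - r) = d$. (3) Conclude $d(\C) = d$.

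The main obstacle — really the only subtle point — is the precise reading of the hypothesis about "any $s$ rows." The argument needs: for a given $u$ of weight $s$, the word $uP$ is genuinely a word of the code generated by the specific $s$ rows of $P$ supporting $u$, and its coefficient vector there is supported on all $s$ of those rows (so it is nonzero and the distance bound applies). This is automatic from the definition of matrix multiplication, but it is worth stating explicitly because the coefficient vector of $uP$ as an element of the $s$-row subcode is exactly $u|_S$, which has full support by construction. One should also note the edge behavior: when $d - s = 0$, i.e. $s = d$, the subcode bound "$\geq 0$" is vacuous, which is why that case is handled separately via $\mathrm{wt}(u) = s = d$ alone; and implicitly $d \leq n$ for the statement to be non-vacuous, with the existence of the distinguished $r$-subset (so $r < d$) guaranteeing $d - r \geq 1$ so that a nonzero minimum-weight word exists to extend. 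No catastrophic difficulty is expected; the proof is a clean weight-splitting argument.
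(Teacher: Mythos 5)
The paper states this proposition bare, with no proof at all (it is followed only by the remark that $P$ may have fewer columns than $n$), so there is no in-paper argument to compare yours against. Judged on its own, your weight-splitting proof is the natural and essentially correct one: write a nonzero codeword as $(u,uP)$, bound $\mathrm{wt}(uP)\ge d-s$ when $s=\mathrm{wt}(u)\le d-1$ using the $s$ rows of $P$ supporting $u$, dispose of $s\ge d$ trivially, and exhibit a weight-$d$ codeword from the distinguished $r$-row subset.

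The one step you should tighten is the claim that $uP$ is a \emph{nonzero} element of the code spanned by the rows indexed by $\mathrm{supp}(u)$ because its coefficient vector $u|_S$ has full support. Full support forces $uP\neq 0$ only when those $s$ rows are linearly independent; under the literal reading of the hypothesis (``the span of any $s$ rows has minimum distance $\ge d-s$''), a dependency such as two equal rows of $P$ over $GF(2)$ can satisfy the hypothesis and still give $uP=0$, hence a codeword $(u,0)$ of weight $s<d$, so with that reading the statement itself would fail. The hypothesis therefore has to be read as ``every linear combination of any $s$ rows with all coefficients nonzero has weight at least $d-s$'' (equivalently one assumes the relevant rows of $P$ are independent, which is automatic in the paper's applications, where $P$ is an orthogonal or otherwise invertible matrix, and is the reading that matters when $P$ has fewer than $n$ columns). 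With that reading your lower bound goes through verbatim, and your attainment step is also fine, indeed slightly more simply than you state it: you do not need $\mathrm{wt}(u_0)=r$ (minimality of the weight-$(d-r)$ word does not by itself guarantee a full-support coefficient vector), only $\mathrm{wt}(u_0)\le r$, since then the codeword has weight at most $r+(d-r)=d$ and at least $d$ by your lower bound, hence exactly $d$.
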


%% Distances $d=0$ are possible and
The Proposition makes sense even if the number of columns of $P$ is less than $n$.  
The following Lemma is easy from results on fields.

\begin{lemma}\label{sq1} Let $F$ be a field. Then $F$ has an square root of $(-1)$ or else a quadratic extension of $F$ has a square root of $(-1)$.
\end{lemma}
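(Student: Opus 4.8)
The plan is to split into two cases according to whether $-1$ is already a square in $F$, and in the second case to exhibit an explicit quadratic extension containing a square root of $-1$. First I would dispose of the trivial case: if there exists $i \in F$ with $i^2 = -1$, then $F$ itself contains a square root of $-1$ and the first alternative of the statement holds, so there is nothing to prove.

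For the main case, assume $-1$ is \emph{not} a square in $F$. Then the polynomial $p(x) = x^2 + 1 \in F[x]$ has no root in $F$; since it has degree $2$, having no root is equivalent to being irreducible over $F$. Hence the quotient $K = F[x]/(x^2+1)$ is a field, it is a quadratic extension of $F$ (as an $F$-vector space it has basis $\{1, \bar x\}$ where $\bar x$ denotes the image of $x$), and by construction $\bar x^2 = -1$ in $K$. Thus $K$ is a quadratic extension of $F$ possessing a square root of $-1$, which is the second alternative.

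The only point that needs a word of care — and it is the mild subtlety rather than a genuine obstacle — is the characteristic-$2$ situation. If $\car F = 2$ then $-1 = 1$, which is a square (namely $1^2 = 1$), so we automatically land in the first case and the argument above never needs to invoke the quadratic extension. In particular the degree-$2$ polynomial $x^2+1 = (x+1)^2$ would not have been irreducible, so it is important that we only form the extension $F[x]/(x^2+1)$ under the standing assumption that $-1$ is a non-square, which forces $\car F \neq 2$ and guarantees irreducibility. With that observation the two cases are exhaustive and mutually exclusive, completing the proof.
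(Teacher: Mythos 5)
Your proof is correct and follows essentially the same route as the paper's (one-line) argument: if $-1$ is not already a square then $x^2+1$ is irreducible over $F$, so the quadratic extension $F[x]/(x^2+1)$ contains a square root of $-1$. Your additional remark that characteristic $2$ automatically falls into the trivial case is a sensible elaboration but not a departure from the paper's approach.
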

\begin{proof} If $F$ does not have a square root of $(-1)$ then $x^2+1$ is irreducible over $F$.
  \end{proof}
\begin{lemma}\label{orth} Let $F$ be a field with contains a square root of $(-1)$, denoted by $i$, and $X$ an $n\ti n$ matrix over $F$. Then $XX\T = I_n$ if and only if $(iX)(iX)\T = -I_n$.
\end{lemma}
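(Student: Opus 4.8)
The plan is to compute $(iX)(iX)\T$ directly and read off the equivalence. Since $i$ is a scalar in $F$, transposition commutes with multiplication by $i$, so $(iX)\T = i\,X\T$. Hence $(iX)(iX)\T = (iX)(i\,X\T) = i^2\,XX\T = -XX\T$, using $i^2 = -1$.

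From this single identity both directions follow at once. If $XX\T = I_n$, then $(iX)(iX)\T = -XX\T = -I_n$. Conversely, if $(iX)(iX)\T = -I_n$, then $-XX\T = -I_n$, so $XX\T = I_n$. No further hypotheses on $X$ or on $F$ (beyond the existence of $i$, guaranteed in the relevant setting by Lemma \ref{sq1}) are needed.

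There is essentially no obstacle here: the only point to be careful about is that scalar multiplication by $i$ and transposition genuinely commute, which is immediate since $(iX)_{jk} = i\,X_{jk}$ and transposition merely permutes entries. I would therefore present the argument as the one-line computation above together with the two trivial implications, and note in passing that the same reasoning shows more generally $(iX)(iX)\T = \al I_n \iff XX\T = -\al I_n$ for any $\al \in F$, which is the form used when combining with Proposition \ref{orthogonal1}.
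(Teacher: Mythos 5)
Your computation is correct and is exactly the argument the paper leaves implicit (the lemma is stated there without proof as an immediate consequence of $(iX)\T = iX\T$ and $i^2=-1$). Nothing is missing; your one-line identity $(iX)(iX)\T = -XX\T$ gives both directions, and the small generalisation you note is consistent with how the lemma is used alongside Proposition \ref{orthogonal1}.
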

  
Proposition \ref{jut} is implicit in the following more general Proposition which enables the construction of self-dual codes over fields.

\begin{proposition}\label{orthogonalsq} Let $X$ be an $n\ti n$ matrix over a field $\F$.
  
  (i) If $X$ is an orthogonal matrix then $(I_n,iX)$ generates a self-dual code where $i$ is a square root of $(-1)$ in $\F$ or in a quadratic extension of $\F$.

  (ii) If $(I_n,X)$ is self-dual then $iX$ is orthogonal where $i$ is a square root of $(-1)$ in $\F$ or in a quadratic extension of $\F$.
  
\end{proposition}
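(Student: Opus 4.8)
The plan is to reduce Proposition~\ref{orthogonalsq} to the characteristic-$2$ computation already carried out in Proposition~\ref{jut}, by replacing the additive identity $I_n + XX\T = 0$ used there with the multiplicative relation $I_n + (iX)(iX)\T = 0$ that holds over any field once a square root $i$ of $(-1)$ is available. The mechanism is Lemma~\ref{orth}: $XX\T = I_n \iff (iX)(iX)\T = -I_n$, and by Lemma~\ref{sq1} such an $i$ exists either in $\F$ itself or in a quadratic extension $\F'$ of $\F$, so it costs nothing to pass to $\F'$ and carry out the code-theoretic argument there. So the first step is to fix such an $i$ and work over $\F(i)$.

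For part (i): assume $X$ is orthogonal over $\F$, so $XX\T = I_n$. By Lemma~\ref{orth}, $(iX)(iX)\T = -I_n$. Now compute, exactly as in the proof of Proposition~\ref{jut},
\[
\begin{ssmatrix} I_n, & iX \end{ssmatrix}\begin{ssmatrix} I_n \\ (iX)\T \end{ssmatrix} = I_n + (iX)(iX)\T = I_n + (-I_n) = 0_{n\ti n}.
\]
Hence $\begin{ssmatrix} I_n \\ (iX)\T \end{ssmatrix}$, which has rank $n$, is a control matrix for the code $\C$ generated by $G = (I_n, iX)$, and therefore its transpose $\begin{ssmatrix} I_n, & iX \end{ssmatrix} = G$ generates the dual code $\C^{\perp}$. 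Since $G$ generates both $\C$ and $\C^{\perp}$, we get $\C = \C^{\perp}$, i.e.\ $\C$ is self-dual. (If $i \notin \F$ one notes the resulting $[2n,n]$ code lives over the quadratic extension $\F(i)$, which is the content of the parenthetical clause in the statement.)

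For part (ii): assume $\C$ generated by $G = (I_n, X)$ is self-dual. Then $G\T = \begin{ssmatrix} I_n \\ X\T \end{ssmatrix}$ is a control matrix for $\C$ (its columns span $\C^{\perp} = \C$, and it has the right rank $n$), so $G\,G\T = 0_{n\ti n}$, that is $I_n + X X\T = 0_{n\ti n}$, giving $X X\T = -I_n$. Applying Lemma~\ref{orth} in the reverse direction, $X X\T = -I_n$ is equivalent to $(iX)(iX)\T = I_n$, i.e.\ $iX$ is orthogonal. Here again $i$ lies in $\F$ or in a quadratic extension, as asserted.

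I do not expect a genuine obstacle: the whole proof is a transcription of the characteristic-$2$ argument with the two appeals to Lemma~\ref{orth} inserted at the start and the end. The only point requiring a little care is bookkeeping about the ground field — making sure that when $-1$ is not a square in $\F$, the statement is correctly read as producing a self-dual code over $\F(i)$ rather than over $\F$ — and checking that the control-matrix/dual-code step (transpose of a control matrix generates the dual) is applied over whichever field $i$ lives in; both are immediate from Lemma~\ref{sq1} and the unit-derived framework of Section~\ref{unitderived}.
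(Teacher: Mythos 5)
Your proof is correct and follows essentially the same route as the paper: the paper's own argument for (i) is the one-line computation $(I_n,iX)\begin{ssmatrix}I_n \\ (iX)\T\end{ssmatrix} = I_n - XX\T = 0$ together with the observation that the transpose of this rank-$n$ control matrix is the generator matrix, and for (ii) it likewise reads off $XX\T = -I_n$ from self-duality and concludes $iX(iX)\T = I_n$. Your explicit detour through Lemma~\ref{orth} and the remarks on working over $\F(i)$ are just slightly more careful bookkeeping of the same computation, so there is nothing to change.
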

\begin{proof} (i)  $(I_n,iX)\begin{ssmatrix}I_n \\ (iX)\T \end{ssmatrix} = I_n-XX\T = 0$ and so $(I_n,iX)$ generates a self-dual code as $(\begin{ssmatrix}I_n \\ (iX)\T \end{ssmatrix})\T = (I_n,iX)$.

  (ii) Suppose $(I_n,X)$ is self-dual. Then a control matrix of the code is $(I_n,X)\T = \begin{ssmatrix} I_n \\ X\T \end{ssmatrix}$ and so $(I_n,X)\begin{ssmatrix} I_n \\ X\T \end{ssmatrix}= 0$. Hence $I_n+XX\T = 0$ and so $XX\T = -I_n$. hence $iX(iX)\T = I_n$
  \end{proof}
%%   If on the other hand $XX\T = -I_n$ then $(I_n,X) \begin{ssmatrix} I_n \\ X\T \end{ssmatrix} = 0$ and so $\begin{ssmatrix} I_n \\ X\T \end{ssmatrix}$ is a control matrix for the code generated by $(I_n,X)$. Thus the dual code for this code is generated by $\begin{ssmatrix} I_n \\ X\T \end{ssmatrix})\T = (I_n, X)$.
%% \end{proof}

In a field of characteristic $2$, $(-1)=1$ and so Proposition \ref{jut} follows from Proposition \ref{orthogonalsq}.

This gives a general method for constructing and analysing self-dual codes from unit orthogonal and orthogonal like matrices.  %% linear block and convolutional, using orthogonal and orthogonal like matrix units. Orthogonal and orthogonal  like matrices are produced from Vandermonde/Fourier matrices in Section \ref{fouriertype} and from Hadamard type matrices in Section \ref{hadamard}. The construction of quantum error-correcting linear and quantum error correcting convolutional codes to given length and distance follows.
%rate $0$ and series of such with  

%% When the code is t-error correcting 
\subsection{Codes from Fourier type units}\label{fouriertype} %% he following is included but has appeared with proof elsewhere. 
Using Fourier (or more generally Vandermonde) unit matrices to construct various types of mds codes was initiated in \cite{unitderived,hurleyquantum} and further developed in \cite{hurleyorder} and others. Here we present Propositions in  a very general form from which these constructions may be derived. Series of required types, lengths and rates are achievable. 

Let $F_n$ be a Fourier matrix over a finite field $\F$. Over which finite fields this $F_n$ can be constructed is discussed in \cite{unitderived,hurleyorder} and elsewhere.  Let $F_n^*$ be the inverse of $F_n$ giving the unit scheme $F_nF_n^* = I_n$. Let $F_n$ have rows $\{e_0,e_1, \ldots, e_{n-1}\}$ in order  and $F_n^*$ have columns $\{f_0,f_1, \ldots, f_{n-1}\}$ in order. Now $e_1= (1, \om, \om^2,\ldots, \om^{n-1})$ and $e_i=(1, \om^i, \om^{2i}, \ldots, \om^{(n-1)i})$ where $\om$ is a primitive $n^{th}$ root of unity in the field $\F$.

{\em \bf  Then, see \cite{unitderived,hurleyquantum,hurleyorder}, it is noted that
  $f_i = \f{1}{n}e_{n-i}\T$ and $e_i =nf_{n-1}\T$.}

\begin{proposition}\label{fourier}Let $F_n$ be a Fourier matrix over a finite field $\F$. Let $F_n^*$ be the inverse of $F_n$ giving the unit scheme $F_nF_n^* = I_n$. Let $F_n$ have rows $\{e_0,e_1, \ldots, e_{n-1}\}$ in order  where $e_i=(1, \om^i, \om^{2i}, \ldots, \om^{(n-1)i})$ and  $\om$ is a primitive $n^{th}$ root of unity in the field $\F$.

  Then the basic unit scheme from the Fourier matrix, $F_nF_n^*=I_n$, is given as follows:

  $$\begin{pmatrix} e_0 \\ e_1 \\ \vdots \\ e_{n-1} \end{pmatrix} \begin{pmatrix} e_0\T, & e_{n-1}\T, & e_{n-2}\T, & \ldots, & e_1\T \end{pmatrix} = nI_n$$
\end{proposition}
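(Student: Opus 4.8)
The plan is to verify directly that the matrix whose columns are $(e_0\T, e_{n-1}\T, e_{n-2}\T, \ldots, e_1\T)$ is exactly $nF_n^*$, and hence that the displayed product equals $n(F_nF_n^*) = nI_n$. The essential input is the italicised observation already recorded in the excerpt, namely that the columns $f_i$ of $F_n^*$ satisfy $f_i = \tfrac1n e_{n-i}\T$ (indices mod $n$). Granting this, the $i$-th column of the claimed right-hand factor, reading off the displayed list, is $e_{n-i}\T = n f_i$ for $i = 0, 1, \ldots, n-1$ (using $e_n = e_0$), so the whole factor is $n F_n^*$.

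First I would set up notation: write $M = (e_0\T, e_{n-1}\T, e_{n-2}\T, \ldots, e_1\T)$ for the second factor in the display, so that its column indexed by $i \in \{0,\ldots,n-1\}$ is $e_{-i}\T = e_{n-i}\T$, where subscripts on $e$ are read modulo $n$ (this is consistent with $e_0\T$ appearing in position $0$). Then I would invoke the stated identity $f_i = \tfrac1n e_{n-i}\T$ to conclude $M = n F_n^*$ columnwise. Finally, $F_n M = F_n (n F_n^*) = n (F_n F_n^*) = n I_n$, which is the assertion. One could alternatively give the self-contained computation: the $(j,k)$ entry of $F_n M$ is $\sum_{\ell=0}^{n-1} \omega^{j\ell}\,\omega^{-k\ell} = \sum_{\ell=0}^{n-1}\omega^{(j-k)\ell}$, which is $n$ if $j = k$ and $0$ otherwise by the geometric-sum/root-of-unity identity (valid since $\omega$ is a primitive $n$-th root of unity and $n \ne 0$ in $\F$, as $\gcd(n,\car \F)=1$ is required for $F_n$ to exist).

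I would present the short argument via the pre-recorded identity $f_i = \tfrac1n e_{n-i}\T$ as the main line, since that identity is explicitly granted just above the proposition, and then append the direct character-sum check as a remark for completeness. The only place that needs a word of care — the ``main obstacle,'' such as it is — is the indexing bookkeeping: making sure the column in position $i$ of the displayed list is $e_{n-i}\T$ rather than $e_{i}\T$ or some shift thereof, and that the convention $e_0\T$ in position $0$ matches $e_{n-0}\T = e_n\T = e_0\T$. Once the index conventions are pinned down, everything else is the standard orthogonality of Fourier rows, and no genuine difficulty remains.
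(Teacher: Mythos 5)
Your proposal is correct and follows exactly the route the paper intends: the paper offers no separate proof, relying precisely on the recorded identity $f_i = \frac{1}{n}e_{n-i}\T$ to read off that the second factor is $nF_n^*$, which is your main line. Your appended character-sum verification is just the standard orthogonality of the rows of $F_n$ and is a harmless, correct supplement.
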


Having $nI_n$ rather than $I_n$ is no problem in describing codes from the scheme as $n\neq 0$ in a field in which the Fourier $n\ti n$ matrix exists;  $H$ is a control, respectively generator, matrix if and only if $\al H$ is a control, respectively generator, matrix for $\al \neq 0$. %% The Fourier $n\ti n$ matrix can only exist over a finite field if the characteristic does not divide $n$.

This gives the following, see for example \cite{unitderived}:
\begin{proposition}\label{fourier1} Let $F_n$ be Fourier $n \ti n$ matrix over a finite field and has rows $\{e_0,e_1, \ldots, e_{n-1}\}$. Suppose $F_n = \begin{ssmatrix} A \\ B \end{ssmatrix}$ where $A= \begin{ssmatrix} e_0 \\ e_1 \\ \vdots \\ e_{r-1} \end{ssmatrix}$.

  (i) The code generated by $A$ is an $[n,r,n-r+1]$ mds block code.

  (ii) When $r>\f{n}{2}$ the code generated by $A$ is an $[n,r,n-r+1]$ DC mds code.

  (iii) In the case when $r> \f{n}{2}$
  the CSS construction from the DC $[n,r,n-r+1]$ code gives a quantum error-correcting $[[n,2r-n, n-r+1]]$ which is an mds quantum code\footnote{mds in the quantum code sense}. 
\end{proposition}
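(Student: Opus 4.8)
The plan is to leverage the explicit unit scheme $F_n F_n^* = nI_n$ from Proposition \ref{fourier}, together with the relation $f_i = \tfrac{1}{n} e_{n-i}\T$, to read off the generator and control matrices and their duals directly. For part (i), I would first observe that $A = \begin{ssmatrix} e_0 \\ \vdots \\ e_{r-1}\end{ssmatrix}$ picks the first $r$ rows of $F_n$; by the general unit-derived principle (and the remark following Proposition \ref{fourier} that scaling by $n \neq 0$ is harmless), a control matrix is obtained by deleting the corresponding columns of $F_n^*$, i.e.\ deleting $f_0, \ldots, f_{r-1}$, leaving the columns $f_r, \ldots, f_{n-1}$. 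So $A$ generates an $[n,r]$ code. For the distance, I would argue that $A$ is (a row permutation of) a generalized Reed--Solomon / BCH-type generator: any $r$ columns of $A$ form an $r \times r$ Vandermonde submatrix in the distinct nonzero elements $1, \om^j, \om^{2j}, \ldots$ — more precisely, the $r\times r$ minor on columns $j_1 < \cdots < j_r$ is a Vandermonde determinant in $\om^{j_1}, \ldots, \om^{j_r}$, hence nonzero since $\om$ has order $n$. Therefore every $r$ columns of $A$ are linearly independent, so the dual (control) code has minimum distance $\geq r+1$; equivalently the code generated by $A$ has every codeword supported on $\geq n-r+1$ coordinates, giving $d \geq n-r+1$, and the Singleton bound forces equality. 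Thus the code is $[n,r,n-r+1]$ mds.

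For part (ii), the key point is the self-duality structure of the Fourier matrix encoded in $f_i = \tfrac{1}{n}e_{n-i}\T$. The dual code $\C^\perp$ is generated by the transpose of the control matrix, i.e.\ by $\{f_r\T, \ldots, f_{n-1}\T\}$, and using $f_i\T = \tfrac1n e_{n-i}$ this is (up to the harmless scalar $\tfrac1n$) the code generated by $\{e_{n-r}, e_{n-r-1}, \ldots, e_1\}$ — that is, by $r' := n-r$ of the rows $e_1, \ldots, e_{n-1}$ of $F_n$. I would then show $\C^\perp \subseteq \C$ as follows: $\C$ is generated by $e_0, e_1, \ldots, e_{r-1}$, and when $r > n/2$ we have $n - r < r$, so the index set $\{1, 2, \ldots, n-r\}$ of generators of $\C^\perp$ is contained in $\{0, 1, \ldots, r-1\}$, the index set of generators of $\C$. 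Hence every generator of $\C^\perp$ is among the generators of $\C$, so $\C^\perp \subseteq \C$, i.e.\ $\C$ is DC; combined with (i) it is $[n,r,n-r+1]$ DC mds.

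For part (iii), I would invoke the CSS construction \cite{calderbank,calderbank1,steane}: from a DC code $\C = [n, r, d]$ with $\C^\perp \subseteq \C$ one obtains a quantum code $[[n,\, 2r - n,\, d']]$ where $d' \geq \min\{ \mathrm{wt}(v) : v \in \C \setminus \C^\perp\}$, and I would note that here $d' = d = n-r+1$ because $\C$ is mds so its minimum-weight nonzero codewords already have weight $n-r+1$ and (for $2r-n>0$) such a codeword can be chosen outside $\C^\perp$. Checking that $[[n, 2r-n, n-r+1]]$ meets the quantum Singleton bound $k \leq n - 2(d'-1)$, i.e.\ $2r - n \leq n - 2(n-r)$, which is an equality, shows it is an mds quantum code. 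The main obstacle I anticipate is part (i): getting the distance argument fully rigorous requires care in translating "any $r$ columns of $A$ are independent" into "$d = n-r+1$" — one must be precise about whether one is bounding the distance of $\C$ or of $\C^\perp$, and about the Vandermonde minor computation over a finite field where $\om$ is a primitive $n$th root of unity (so the nodes $\om^{j_1}, \ldots, \om^{j_r}$ are genuinely distinct). Everything in parts (ii) and (iii) is then essentially bookkeeping with index sets and quoting the CSS construction and the quantum Singleton bound.
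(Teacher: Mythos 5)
Your proof is correct and takes essentially the intended route: the paper gives no separate proof of Proposition \ref{fourier1}, stating it as a consequence of the unit scheme of Proposition \ref{fourier} together with the noted relation $f_i=\f{1}{n}e_{n-i}\T$ (with a pointer to \cite{unitderived}), and your steps are exactly that derivation. The Vandermonde-minor argument giving the mds property, the index containment $\{1,\ldots,n-r\}\subseteq\{0,\ldots,r-1\}$ for $\C^{\perp}\subseteq\C$ when $r>\f{n}{2}$, and the CSS construction combined with the quantum Singleton bound fill in the details in the way the paper intends.
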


If $r$  rows of $F_n$ are chosen in arithmetic order, starting at any row,  with arithmetic difference $k$ where $\gcd(k,n)=1$ then an mds $[n,r,n-r+1]$ is still obtained; the differences are taken mod $n$. This may be used to construct LCD codes. It will be shown later that DC convolutional codes may in many circumstances be produced from the unit scheme that produced the LCD codes.  Examples are given as follows before the general result is described. % in the following.

    \begin{example}\label{ert} Let $F_7$ denote a Fourier $7\ti 7$ matrix over a finite field. Such a matrix exists over a field whose characteristic does not divide $7$ and which has an element of order $7$.
      Thus such a matrix exists for example  over $GF(2^3)$ or over $GF(13^2)$. Look at a unit scheme formed by rearranging the rows of $F_7$ as follows:

      $\begin{ssmatrix}e_6 \\ e_0 \\e_1 \\e_2\\ e_3 \\e_4 \\e_5  \end{ssmatrix}\begin{ssmatrix} e_1\T & e_0\T & e_6\T & e_5\T & e_4\T & e_3\T & e_2\T \end{ssmatrix} = 7I_7$

      Then the first three rows $\begin{ssmatrix}e_6 \\ e_0 \\e_1 \end{ssmatrix}$ generates an $[8,3,6]$ mds code. A control code is $\begin{ssmatrix} e_5\T & e_4\T & e_3\T & e_2\T \end{ssmatrix} $ and thus the dual code is generated by the transpose of this, $\begin{ssmatrix} e_5 \\ e_4 \\  e_3 \\ e2 \end{ssmatrix} $. Thus a $[8,3,6]$ LCD code is obtained; the $e_i$ are independent as rows of an invertible matrix.

      Looking at the unit scheme as given, it will be shown later how a convolutional code which is DC can be constructed.
      \end{example} 

\begin{example}\label{ert1} Let $F_8$ be a Fourier $8\ti 8$ matrix over a finite field. Such a matrix exists in a field of characteristic not dividing $8$  and having an  element of order $8$, for example over $GF(3^2)$ or over the prime field $GF(17)$.

  The rows of $F_8$ in order are denoted by $\{e_0,e_1, \ldots, e_7\}$. Look at the unit-scheme in the form

  $\begin{ssmatrix} e_6 \\ e_7\\ e_0\\ e_1\\ e_2\\e_3\\e_4 \\e_5\end{ssmatrix} \begin{ssmatrix} e_2\T & e_1\T &e_0\T & e_7\T &e_6\T &e_5\T &e_4\T&e_3\T \end{ssmatrix} = 8I_8$.

    Then $\begin{ssmatrix} e_6 \\ e_7\\ e_0\\ e_1\\ e_2\end{ssmatrix}$ generates an $[8,5,4]$ mds code. A control matrix is $\begin{ssmatrix} e_5\T&e_4\T &e_3\T \end{ssmatrix}$ and thus the transpose of this, $\begin{ssmatrix} e_5 \\ e_4 \\ e_3 \end{ssmatrix}$, generates the dual of the code. Hence the code is an LCD code - the $e_i$ are independent as rows of an invertible matrix.
    \end{example}

The idea is to keep a row $e_i$ and its `conjugate' $e_{n-i}$ together in the generating matrix and thus get an LCD code. When using the same unit scheme to obtain convolutional codes, then DC convolutional codes are obtainable. The following is the general result which is best understood by looking at examples \ref{ert} and \ref{ert1}.

\begin{proposition}\label{lcd} Let $F_n$ be a Fourier matrix over a finite field. Let the rows in order of $F_n$ be denoted by $\{e_0,e_1,\ldots, e_{n-1}\}$. Rearrange the Fourier matrix unit scheme  as follows:

  $$\begin{ssmatrix} e_r \\ \vdots \\ e_{n-1} \\e_0 \\e_1 \\ \vdots \\e_{n-r} \\ e_{n-r+1} \\ \vdots \\ e_{r-1}\end{ssmatrix}(e_{n-r}\T,\ldots,e_1\T,e_0\T,e_{n-1}\T,\ldots, e_r\T, e_{r-1}\T,\ldots,e_{n-r+1}\T) = nI_n$$

    Then the code generated by $\begin{ssmatrix} e_r \\ \vdots \\ e_{n-1} \\e_0 \\e_1 \\ \vdots \\e_{n-r} \end{ssmatrix}$ is an LCD mds code.
\end{proposition}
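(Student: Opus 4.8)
The plan is to realize $\C$ explicitly as a unit-derived code from the displayed rearranged Fourier scheme, read off a control matrix and a generating set for $\C^{\perp}$, and then verify the two required properties --- LCD and mds --- from the combinatorics of the index sets involved. Throughout, the one identity doing the work is the orthogonality relation for the Fourier rows, $e_ie_j^{\T}=\sum_{k=0}^{n-1}\om^{(i+j)k}$, which equals $n$ when $i+j\equiv 0\pmod n$ and $0$ otherwise (equivalently the columns of $F_n^{*}$ are $f_i=\f1n e_{n-i}^{\T}$, exactly the fact recorded just before Proposition \ref{fourier}).

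First I would check that the displayed product is genuinely $UV=nI_n$. Writing the left factor as $U$ with rows $e_{\tau(1)},\dots,e_{\tau(n)}$, where $(\tau(1),\dots,\tau(n))=(r,\dots,n-1,0,\dots,n-r,n-r+1,\dots,r-1)$ is a permutation of $\{0,1,\dots,n-1\}$, and the right factor as $V$ with columns $e_{-\tau(1)}^{\T},\dots,e_{-\tau(n)}^{\T}$, the $(i,j)$-entry of $UV$ is $e_{\tau(i)}e_{-\tau(j)}^{\T}=n\,\delta_{\tau(i)=\tau(j)}=n\,\delta_{ij}$; so the scheme is a valid rearrangement of the basic Fourier scheme of Proposition \ref{fourier}. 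This is pure bookkeeping on the two listed orders, and it specializes to Examples \ref{ert} and \ref{ert1} when $r=6$.

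Next, let $A$ be the generator in the statement, i.e. the first $k:=2(n-r)+1$ rows of $U$, with index set $S=\{r,r+1,\dots,n-1\}\cup\{0,1,\dots,n-r\}$. Read mod $n$, $S=\{-(n-r),\dots,-1,0,1,\dots,n-r\}$, which makes two facts visible at once: $S$ is a block of $k$ \emph{consecutive} residues, and $S=-S$. Deleting the first $k$ columns of $V$ leaves a matrix $D$ whose columns are $e_{-j}^{\T}$ for $j$ running over the complementary set $S^{c}=\{n-r+1,\dots,r-1\}$; since $a\neq j$ for $a\in S$ and $j\in S^{c}$, the orthogonality relation gives $AD=0$, and $D$ has full rank $n-k$ as a submatrix of the invertible $V$. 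Hence $D$ is a control matrix for $\C$ and $\C^{\perp}$ is generated by $D^{\T}$, that is by $\{e_{-j}:j\in S^{c}\}$. Because $S=-S$ we have $\{-j:j\in S^{c}\}=(-S)^{c}=S^{c}$, so $\C^{\perp}=\langle e_b:b\in S^{c}\rangle$ while $\C=\langle e_a:a\in S\rangle$; as $\{e_0,\dots,e_{n-1}\}$ are linearly independent (rows of the invertible $F_n$) and $S\cap S^{c}=\varnothing$, this forces $\C\cap\C^{\perp}=0$, so $\C$ is LCD. Finally, $\C$ is generated by $k$ rows of $F_n$ whose indices form an arithmetic progression of common difference $1$ with $\gcd(1,n)=1$, so $\C$ is an $[n,k,n-k+1]$ mds code by Proposition \ref{fourier1} and the remark immediately following it.

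The one place that needs attention --- the main obstacle, such as it is --- is the index bookkeeping: one must confirm that the listed order really is a permutation (so that $UV=nI_n$ and $D$ has the stated rank), and, more importantly, that the chosen generator rows have index set \emph{symmetric under negation} (this is precisely what distinguishes this LCD choice from a generic choice of rows, which would instead produce a DC code) while simultaneously being \emph{consecutive} (needed for the mds conclusion). It is also worth flagging the implicit hypothesis $r>n/2$, equivalently $1\le k\le n$: without it the third block $e_{n-r+1},\dots,e_{r-1}$ is empty or ill-formed, and in those degenerate cases the statement collapses to $\C$ being the whole space or to Proposition \ref{fourier1}(i) applied to a consecutive block.
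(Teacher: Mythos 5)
Your argument is correct and is essentially the paper's own reasoning made explicit: the paper proves this proposition only implicitly, through Examples \ref{ert} and \ref{ert1} and the preceding remarks (conjugate rows $e_i,e_{n-i}$ kept together so the dual is spanned by the complementary, negation-closed set of rows of the invertible $F_n$, giving trivial intersection, with the mds property coming from the consecutive-rows/arithmetic-progression observation after Proposition \ref{fourier1}), which is exactly the bookkeeping you carried out. Your explicit note of the implicit hypothesis $r>n/2$ is a worthwhile addition but does not change the approach.
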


An advantage of this is that looking at the full unit scheme allows the construction of convolutional DC codes and from the convolutional DC code quantum error-correcting codes are defined by the CSS construction.

From full unit schemes:

\noindent DC linear block codes from unit  $\xRightarrow{unit-derived}$ convolutional LCD codes from the unit.

\noindent LCD linear block codes from unit $\xRightarrow{unit-derived}$ convolutional DC codes from the unit $\xRightarrow{unit-derived}$  convolutional quantum codes from the unit scheme.
  
\section{Convolutional  unit-derived codes} The basic unit-derived scheme is given by:
$ \begin{smatrix} A \\ B \end{smatrix} \begin{smatrix}  C & D \end{smatrix}=I_n$.

As noted, using $AD=0$ defines a block linear code where $A$ generates the code and $D\T$ generates  the dual of this code. The total power of the unit is not used as $\{B,C\}$ are ignored. This can be rectified by going on to describe convolutional codes from the unit scheme.
The general idea is to use $A,B$ to describe convolutional codes $G(z) = A + Rz$ where $R$ is formed from  $B$. 
DC  and LCD  convolutional codes are  obtained. Constructing DC convolutional  codes lead to the construction of quantum error correcting convolutional, QECC, codes, by CSS construction. %  QECCCC. 

%% This is used to construct convolutional memory $1$ codes of many types. 
A free distance can  be prescribed as a linear functional of the distances of the (block linear) codes generated by $A$ and $B$.

Convolutional mds codes are constructed in \cite{hurleyunit,hurleyorder} by the method. %% An mds  convolutional code as noted means the code satisfies the generalised Singleton bound, GSB, as described  in \cite{ros,smar}.

\subsection{Same block sizes}
Suppose that $A,B$ both have the same size, $r\ti n$, in the unit-derived formula; now $r=\f{n}{2}$ and $n$ is even. Let  the code generated by $A$ have distance $d_1$ and the code generated by $B$ have distance $d_2$. Consider $G[z] = A+Bz$. This generates a convolutional code of memory $1$. As $G(z)*C= AC= I_r$ the generator matrix has a right inverse and so the code is non-catastrophic.

Now also $(A+Bz)(D-Cz) = AD -ACz +BDz- BCz = -I_r z+I_rz = 0_r$. Thus $D-Cz$ is a control matrix for the $(n,r,r;1) $ convolutional code. %Check!

The free distance for the code is $(d_1+d_2)$. This minimum distance is obtained when the information vector has support $1$.  %% The GSB for such a code is $(n-r)(\floor{\f{\de}{r}}+1)+\de + 1 = r*2 +r+1=3r+1$. If both $A,B$ generate MDS codes then $d_1+d_2=2r+2$; this is short of the GSB but that's not the main interest here.
If the information vector $P(z)$ has support $k$ then $P(z)G(z)$ has distance $ \geq (d_1+d_2+k-1)$.  %% MDS CC may also be constructed by  minor changes as shown below.  %% With small  change a GSB code is attained as below.
  
  The dual code generator matrix is obtained from the control matrix $H\T(z)$; as noted the dual generator matrix is $H(z^{-1})z^m$ where $m$ is the memory and $H\T(z)$ is the control matrix. In this case the control matrix is $H\T(z)= D-Cz$  and so a generator matrix for the dual code is $(D\T-C\T z^{-1})z = -C\T + D\T z$.

  If $C\T = -A$ and $ D\T=B$  then a self-dual convolutional code  is obtained. Such a situation arises when  $ \begin{smatrix} A \\ B \end{smatrix}$ is orthogonal %% and symmetric and
  and  the characteristic is $2$. %% Examples of this over $GF(2)$ (the binary case) are given in section \ref{binary}.

  As noted in section \ref{secorthog}, Lemma \ref{sq1}, a finite field $F$  has a square root of $(-1)$ or else a quadratic extension of $F$ has a square root. Thus define $G(z) = A + i B$ where $i$ denotes a square root of $(-1)$. Then $(A+iB)(iD+Cz) = 0$ and so $H\T (z) = iD + Cz$ is a control matrix giving that $H(z^{-1})z = C\T + iD\T$ is a dual matrix. In case $U$ is an orthogonal matrix, $\C\T=A, D\T=B$ and a dual  matrix is $A+iB$ giving that $G(z)$ is a self-dual convolutional code of distance equal to the sum of the distances of the codes generated by $A$ and by $B$.    
  
  \begin{proposition}\label{conv1} Let $U$ be a $2n\ti 2n$ invertible matrix. Suppose $U= \begin{smatrix}A \\ B\end{smatrix}$ and $\begin{smatrix}A \\ B\end{smatrix}\begin{smatrix}  C & D \end{smatrix}=I_{2n}$ where $A$ and $B$ have size $n\ti 2n$ and  $C,D$ have size $2n\ti n$. %% and $D$ has size $n\ti 2n$.

        \begin{enumerate}

\item   $A$ generates a $[2n,n]$ code $\C$ and $D\T$ generates the dual code of $\C$. $B$ generates a $[2n,n]$ code $\D$ and $C\T$ generates the dual code of $\D$.

 \item   $G(z) = A+Bz$ generates a (non-catastrophic)  convolutional $(2n,n,n;1)$ code $\C$. Then $G(z)(D-Cz) =0$, $D-Cz$ is a control matrix of $\C$ and $-C\T+D\T z$ generates the dual code of $\C$.

\item If  the code generated by $A$ has distance $d_1$ and the code generated by $B$ has distance $d_2$, then the (free) distance of $\C$ is $d=d_1+d_2$ and $\C$ is a $(2n,n,n;1,d)$ convolutional code. Further if the information vector $P(z)$ has support $k$ then $P(z)G(z)$ has distance $\geq (d+k-1)$. 
\end{enumerate}
        \end{proposition}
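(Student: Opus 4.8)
The plan is to verify the three items in turn, each being a consequence of the block identities $AC=I_n$, $AD=0$, $BC=0$, $BD=I_n$ that fall out of $\begin{smatrix}A\\B\end{smatrix}\begin{smatrix}C&D\end{smatrix}=I_{2n}$. For item 1, note that since $U$ is invertible its four blocks have full rank (as remarked after Proposition \ref{unit}); then $AD=0$ says $D$ is a control matrix for the code $\C$ generated by $A$, so $D\T$ generates $\C^{\perp}$, exactly as in the basic unit-derived scheme. The symmetric statement for $\D$ generated by $B$ follows by reading the same product with the roles of the two row-blocks and the two column-blocks interchanged: $BC=0$ gives that $C$ is a control matrix for $\D$ and hence $C\T$ generates $\D^{\perp}$. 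This item is essentially a restatement of Proposition \ref{unit} applied twice and requires no new work.

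For item 2, the right inverse is immediate: $G(z)\,C = (A+Bz)C = AC + (BC)z = I_n$, so $G(z)$ has a polynomial right inverse and the code is non-catastrophic; the memory is $1$ and the degree is $n$ (one degree per row), so the parameters are $(2n,n,n;1)$. To see $D-Cz$ is a control matrix, expand
$$G(z)(D-Cz) = (A+Bz)(D-Cz) = AD + (BD-AC)z - (BC)z^2 = 0 + (I_n-I_n)z - 0 = 0,$$
using all four block identities. Since the dual convolutional code of a memory-$m$ code with control matrix $H\T(z)$ is generated by $H(z^{-1})z^m$ (as recalled in section \ref{conv}), here with $H\T(z)=D-Cz$ and $m=1$ we get $(D\T - C\T z^{-1})z = -C\T + D\T z$ as the dual generator, which is the claimed expression.

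For item 3, the structural claim is that the free distance of $G(z)=A+Bz$ equals $d_1+d_2$, and more generally that an information vector $P(z)$ of support $k$ yields a codeword of weight at least $d_1+d_2+k-1$. The key observation is that if $P(z) = \sum_{j=0}^{s} p_j z^j$ (with $p_s\neq 0$, and write $p_{-1}=p_{s+1}=0$), then
$$P(z)G(z) = \sum_{j} (p_j A + p_{j-1} B)\, z^j,$$
so the coefficient of $z^0$ is $p_0 A$ and the coefficient of $z^{s+1}$ is $p_s B$; both $p_0$ (if nonzero) and $p_s$ are nonzero coefficient vectors. When $P(z)$ has support $1$, say $P(z)=p_0$, the codeword is $p_0 A + p_0 B z$, whose weight is $\mathrm{wt}(p_0 A) + \mathrm{wt}(p_0 B) \geq d_1 + d_2$ since the two blocks sit in disjoint coordinate positions (powers of $z$), and this bound is met by choosing $p_0$ realizing the minimum weight codeword of $\C$ — one must check such a $p_0$ simultaneously controls the $B$-part, which it does because $p_0 B$ is a nonzero codeword of $\D$ hence has weight $\geq d_2$, and equality overall is achieved when the $A$-minimal vector happens to give a $B$-codeword of minimal weight $d_2$; more carefully, the free distance is the minimum over all nonzero $P(z)$, and the support-$1$ case already gives an upper bound $d_1+d_2$ while the lower bound needs the support argument below. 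For general support $k$, I would argue that among the $z$-coefficients of $P(z)G(z)$ the top and bottom ones, $p_s B$ and $p_0 A$ (one of which may coincide in index only if $s=0$), contribute weight $\geq d_2$ and $\geq d_1$ respectively, and each of the remaining $k-1$ nonzero coefficient vectors of $P(z)$ forces at least one further nonzero coordinate, giving the extra $k-1$; the disjointness of the $z^j$-blocks is what makes these contributions add rather than cancel.

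The main obstacle is the lower bound $d_f \geq d_1+d_2$ in item 3 and the refinement $\geq d+k-1$ for support $k$: one must rule out cancellation between the $A$-part and the $B$-part and handle the possibility that the "boundary" coefficients $p_0 A$ or $p_s B$ vanish (i.e.\ $p_0$ or $p_s$ lies in the kernel of $A$ or of $B$ respectively) — but since $A$ and $B$ each have full row rank $n$, their restrictions to the relevant vectors are injective, so $p_j A=0 \iff p_j=0$ and likewise for $B$, which is exactly what prevents degenerate cancellation. I expect the clean way to organize this is: write $P(z)G(z)=\sum_j c_j z^j$ with $c_j = p_jA+p_{j-1}B$, observe $\mathrm{wt}(P(z)G(z)) = \sum_j \mathrm{wt}(c_j)$, bound the first and last nonzero $c_j$ below by $d_1$ and $d_2$, and each intermediate nonzero $c_j$ below by $1$, counting supports carefully.
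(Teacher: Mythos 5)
Your treatment of items 1 and 2 coincides with the paper's own argument (the computation given in the ``Same block sizes'' subsection): the block identities $AC=I_n$, $AD=0$, $BC=0$, $BD=I_n$, the right inverse $G(z)C=AC=I_n$, the expansion $(A+Bz)(D-Cz)=AD-ACz+BDz-BCz^2=0$, and the dual generator $H(z^{-1})z=-C\T+D\T z$. Nothing to add there.

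The gap is in item 3. First, your assertion that the support-$1$ case ``already gives an upper bound $d_1+d_2$'' is not justified: a support-$1$ codeword is $p_0A+p_0Bz$, of weight $\mathrm{wt}(p_0A)+\mathrm{wt}(p_0B)$, and the minimum of this over $p_0\neq 0$ is $\geq d_1+d_2$ but equals $d_1+d_2$ only if a single vector $p_0$ simultaneously realises the minimum weights of both block codes. Without that, your argument (and indeed the counting below) establishes only $d_f\geq d_1+d_2$; the exact equality is asserted by the paper without proof, so this defect is inherited rather than created, but your sketch presents the upper bound as settled when it is not. Second, for the refinement $\geq d+k-1$ you need that each coefficient $c_j=p_jA+p_{j-1}B$ is nonzero whenever $p_j\neq 0$ or $p_{j-1}\neq 0$, so that the nonzero $c_j$ occupy at least $k+1$ indices; full row rank of $A$ and of $B$ separately (the reason you give) handles only the boundary coefficients $p_0A$ and $p_sB$ and does not exclude an internal cancellation $p_jA=-p_{j-1}B$ with both $p_j,p_{j-1}$ nonzero. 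What rules this out is that the row spaces of $A$ and $B$ intersect trivially because together they form the invertible matrix $U$ (the same observation used in the proof of Proposition \ref{orthogonal}). With that inserted, your count --- first nonzero coefficient of weight $\geq d_1$, last of weight $\geq d_2$, and at least $k-1$ further nonzero coefficients of weight $\geq 1$ --- does yield $\mathrm{wt}(P(z)G(z))\geq d_1+d_2+k-1$.
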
 

  \begin{proposition}\label{conv2} Let $U$ be a $2n\ti 2n$ orthogonal matrix. Suppose $U= \begin{smatrix}A \\ B\end{smatrix}$ and  $\begin{smatrix}A \\ B\end{smatrix}\begin{smatrix}  C & D \end{smatrix}=I_{2n}$ where $A$ and $B$ have size $n\ti 2n$  and $C,D$ have size $2n\ti n$.%% Let $U$ be an orthogonal $2n\ti 2n$ matrix in a field of characteristic $2$. Suppose $U= \begin{smatrix}A \\ B\end{smatrix}$ where $A$ and $B$ have size $n\ti 2n$. Then $G(z) = A+Bz$ generates a convolutional self-dual $(2n,n,n;1)$ code $\C$.
        
  \begin{enumerate}    \item  $A$ generates a $[2n,n]$ code $\C$ and $B$ generates the dual of $\C$; hence $\C$  is an LCD code.  $B$ generates a $[2n,n]$ code $\D$ and $A$ generates the dual of $\D$; hence $\D$ is an LCD code. 
\item 
   $G(z) = A+Bz$ generates a (non-catastrophic)  convolutional $(2n,n,n;1)$ code $\C$. Then  $-C\T+D\T z=-A+Bz$ generates the dual code of $\C$.

  \item If $U$ is a matrix over a field of characteristic $2$ then $G(z) = A+Bz$ generates a self-dual convolutional code. 

    \item\label{df} If  the code generated by $A$ has distance $d_1$ and the code generated by $B$ has distance $d_2$, then the (free) distance of the code generated by $G(z)$ is $d=d_1+d_2$ and is a $(2n,n,n;1,d)$ convolutional code. Further if the information vector $P(z)$ has support $k$ then $P(z)G(z)$ has distance $\geq (d+k-1)$.

    \item In case of characteristic $2$, the code generated by $G(z)$ is used to generate a quantum convolutional code of memory $1$ which has type $[[2n,0,d]]$ where $d=d_1+d_2$ is given in item \ref{df}.
    \item\label{ds} Define $G(z) = A + i B$ where $i$ denotes a square root of $(-1)$. Then $(A+iB)(iD+Cz) = 0$ and so $H\T (z) = iD + Cz$ is a control matrix giving that $H(z^{-1})z = C\T + iD\T z$ is a dual matrix. The free distance of $G(z)$ is equal to the sum, $d$, of the distances of the codes generated by $A$ and by $B$.    
 
    \item Suppose now $U=\begin{ssmatrix} A \\ B \end{ssmatrix}$ is an orthogonal matrix, then in item \ref{ds}, $C\T=A, D\T=B$. A dual  matrix is then $A+iBz$ giving that $G(z)$ is a self-dual convolutional $(2n,n,n;1,d)$ code. This can be used to define a $[[2n,0,d]]$ quantum convolutional code.  \end{enumerate}
  
  \end{proposition}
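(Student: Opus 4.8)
The plan is to extract the algebraic relations among $A,B,C,D$ forced by orthogonality together with the unit identity, then read off items (1)--(3) directly, handle the free-distance claims (items (4) and (6)) by tracking the coefficients of a codeword, and finally deduce the quantum-code statements (items (5) and (7)) from self-duality and the CSS construction; throughout I would lean on Propositions \ref{orthogonal} and \ref{conv1}, which are already available.

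\textbf{The block identities and items (1)--(3).} First I would write $UU\T=I_{2n}$ blockwise as $\begin{smatrix}A\\B\end{smatrix}\begin{smatrix}A\T & B\T\end{smatrix}=I_{2n}$, giving $AA\T=I_n$, $AB\T=0$, $BB\T=I_n$, and $\begin{smatrix}A\\B\end{smatrix}\begin{smatrix}C & D\end{smatrix}=I_{2n}$, giving $AC=I_n$, $AD=0$, $BC=0$, $BD=I_n$. Comparing $U\T=\begin{smatrix}C & D\end{smatrix}$ with $U=\begin{smatrix}A\\B\end{smatrix}$ forces $C=A\T$ and $D=B\T$, i.e.\ $C\T=A$, $D\T=B$. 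Item (1) is then exactly Proposition \ref{orthogonal}: $AD=0$ reads $AB\T=0$, so $B$ generates $\C^{\perp}$, and since $\begin{smatrix}A\\B\end{smatrix}$ is nonsingular no nonzero combination of the rows of $A$ equals a combination of the rows of $B$, whence $\C\cap\C^{\perp}=0$; the statement for $\D$ is symmetric. For item (2): $G(z)C=AC+BCz=I_n$ exhibits a constant (hence polynomial) right inverse, so $G(z)=A+Bz$ is non-catastrophic; each row of $B$ is nonzero (from $BB\T=I_n$), so each row of $G(z)$ has degree exactly $1$, the leading-coefficient matrix is $B$ which has full rank, and $G(z)$ is a reduced basic matrix with $\de_i=1$, $\de=n$, memory $1$, i.e.\ a $(2n,n,n;1)$ code; finally $(A+Bz)(D-Cz)=AD-ACz+BDz-BCz^2=-I_nz+I_nz=0$, so $D-Cz$ is a control matrix of rank $n$ and the module-theoretic dual is generated by $z\,H(z^{-1})=-C\T+D\T z=-A+Bz$. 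Item (3) is immediate: over a field of characteristic $2$, $-A+Bz=A+Bz=G(z)$, so the convolutional code coincides with its dual.

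\textbf{Free distance, items (4) and (6).} I would argue on coefficients. Writing an information vector $P(z)=\sum_i P_iz^i$ with $P_i$ a $1\ti n$ vector over $\F$, the codeword $c(z)=P(z)(A+Bz)$ (respectively $P(z)(A+iBz)$ with $i^2=-1$; the generator in item (6) must carry the factor $z$, i.e.\ be $A+iBz$, for the identity below to hold) has $z^j$-coefficient $P_jA+P_{j-1}B$ (resp.\ $P_jA+iP_{j-1}B$). Let $a$ and $b$ be the least and greatest indices with $P_i\neq 0$. Then the $z^a$-coefficient is $P_aA$ and the $z^{b+1}$-coefficient is $P_bB$ (resp.\ $iP_bB$); both are nonzero because $A$ and $B$ have full row rank, they sit at distinct powers of $z$, and they are codewords of $\C$ and of $\D$ respectively, so $\mathrm{wt}(c(z))\ge d_1+d_2$. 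Taking $P(z)$ a constant equal to a minimum-weight information vector realises the value $d=d_1+d_2$, giving the parameters $(2n,n,n;1,d)$. The support-$k$ refinement $\mathrm{wt}(P(z)G(z))\ge d+k-1$ comes from the same bookkeeping together with the observation that when two consecutive $P_i$ are nonzero the intervening coefficient $P_jA+P_{j-1}B$ cannot vanish — were it zero, a nonzero word of $\C$ would equal a word of $\C^{\perp}$, contradicting the LCD property of item (1) — so each extra support index forces at least one further nonzero coordinate; this is the argument behind Proposition \ref{conv1}(3), which I would invoke. The control-matrix identity in item (6) is the routine check $(A+iBz)(iD+Cz)=iAD+ACz+i^2BDz+iBCz^2=I_nz-I_nz=0$, with dual generator $z\,H(z^{-1})=C\T+iD\T z$.

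\textbf{Quantum codes (items (5), (7)) and the main obstacle.} In characteristic $2$, item (3) makes $\C$ self-dual, hence dual-containing, so the (convolutional) CSS construction of \cite{calderbank,calderbank1,steane} applies and produces a quantum convolutional code of length $2n$, memory $1$, distance $d$, and net dimension $\dim\C-\dim\C^{\perp}=n-n=0$, i.e.\ of type $[[2n,0,d]]$. For item (7), orthogonality of $U$ gives $C\T=A$, $D\T=B$, so the dual generator computed in item (6) is $C\T+iD\T z=A+iBz=G(z)$; thus $G(z)$ is a self-dual convolutional $(2n,n,n;1,d)$ code over the field (or quadratic extension) containing $i$, and CSS again yields a $[[2n,0,d]]$ quantum convolutional code. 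The point I expect to require genuine care is the \emph{exact} free-distance value: the lower bound $d_f\ge d_1+d_2$ and the growth $d_f\ge d+k-1$ in the support are robust, but the claimed equality $d_f=d_1+d_2$ needs a single information vector $P_0\neq 0$ with $\mathrm{wt}(P_0A)+\mathrm{wt}(P_0B)=d_1+d_2$; I would secure this either by exhibiting such a $P_0$ explicitly (a minimum-weight word of $\C$ that also stays of low weight under $B$) or, failing a uniform statement, by recording that $d_f=\min_{P_0\neq 0}(\mathrm{wt}(P_0A)+\mathrm{wt}(P_0B))$, the LCD estimate ruling out any smaller value from higher-support words.
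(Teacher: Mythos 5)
Your proposal is correct and follows essentially the same route as the paper, whose justification for Proposition \ref{conv2} is the computation in the ``Same block sizes'' discussion preceding it: the block identities $C=A\T$, $D=B\T$ from orthogonality, the constant right inverse $G(z)C=I_n$ for non-catastrophicity, the control matrix $D-Cz$ with dual generator $-C\T+D\T z=-A+Bz$, and the coefficient count giving the distance and the support-$k$ refinement. Your two further observations --- that the generator in item (6) must read $A+iBz$ for $(A+iBz)(iD+Cz)=0$ to hold, and that the asserted equality $d_f=d_1+d_2$ (as opposed to the lower bound) requires exhibiting a single information vector of minimum weight simultaneously under $A$ and $B$, which the paper states without argument --- are both accurate and sharpen the paper's treatment rather than diverge from it.
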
 

  These propositions are very general and can be used to construct infinite series of such codes with increasing distances. 
  %% Later it will be shown how to obtain infinite series of orthogonal symmetric matrices from which good codes, both linear block and convolution ones, can be derived. 
\subsection{Different block sizes}
  Consider cases where $A$ has size greater than $B$ in the unit-derived formula $ \begin{ssmatrix} A \\ B \end{ssmatrix} \begin{ssmatrix} C & D \end{ssmatrix} = I_n$ with $U=\begin{ssmatrix} A \\ B \end{ssmatrix}$.
  %Before doing the general case, some examples are given.

  Let   $A$ have size $r\ti n$, then $B$ has size $(n-r)\ti n$.  Now  $r > (n-r)$  is equivalent to $2r>n$. Let $t=r-(n-r) =2r-n$ and  $0_t=0_{t\ti n}$. Thus
  $B_1 = \begin{pmatrix} 0_t \\ B\end{pmatrix}$ is an $r\ti n$ matrix. Now $C$ is an $n\ti r$ matrix and thus has the form $C=(X,C_1)$ where $C_1$ has size $n\ti (n-r)$ and $X$ has size $n\ti (2r-n)$. As $AC = I_r$ then $AC_1= \begin{pmatrix} 0_{(2r-n) \ti (n-r)} \\ I_{(n-r)\ti (n-r)}\end{pmatrix}$. 

    Define $G(z) = A + B_1z$. This defines a generator matrix for a convolutional $(n,r,n-r;1)$ code. $BC=0$ implies $BC_1=0_{(n-r)\ti (n-r)}$. %% As $(A+B_1z)C=I_r$ the generator matrix has a right inverse and so the code is non-catastrophic. 

    Then $(A+B_1z)(D-C_1z)=AD-AC_1z+B_1Dz -B_1C_1z^2 = 0_{r\ti (n-r)} -\begin{pmatrix} 0_{2r-n \ti n-r} \\ I_{n-r\ti n-r}\end{pmatrix}z + \begin{pmatrix} 0_{2r-n \ti n-r} \\ I_{(n-r)\ti (n-r)}\end{pmatrix}z = 0_{r\ti (n-r)}$. Thus the control matrix is $(D-C_1z)$ and a dual matrix is $-C\T+Dz$ As $(A+B_1z)C=I_r$, the code is non-catastrophic.

    Define $G(z) = A+iB_1z$ where $i$ is a square root of $(-1)$ in the field or in a quadratic extension of the field. Then $(A+iB_1)(iD+C_1z) = 0$ and $C_1\T + iD\T z$ is a dual matrix. In case $U$ is orthogonal $C= A\T, D=B\T$ and $A= \begin{ssmatrix}X\T \\ C_1\T\end{ssmatrix}, B_1=\begin{ssmatrix} 0 \\ D\T \end{ssmatrix}$. Hence the code generated by $G(z)$ contains its dual.
    \begin{proposition}\label{conv3} Let $U$ be a matrix unit over a field with $U= \begin{smatrix}A \\ B\end{smatrix}$ where $A$ has size $r\ti n$ and $B$ has size $(n-r)\ti n$ with $r>n-r$. Let $t=2r-n$ and $B_1= \begin{smatrix} 0_{t\ti n} \\ B\end{smatrix}$. Then

          (1) $G(z) = A+B_1z$ generates a convolutional $(n,r,n-r;1)$ code $\C$.

          (2) Let $A_1$ be the matrix of the first $(2r-n)$ rows of $A$. The distance $d$ of $\C$ is $\min\{d(A_1), d(A) + d(\begin{ssmatrix} A_1 \\ B \end{ssmatrix})\}$ where $d(X)$ denotes the distance of the code generated by $X$. 

          %% If  the code generated by $A$ has distance $d_1$ and the code generated by $B$ has distance $d_2$, then the free distance of $\C$ is $d=d_1+d_2$ and $\C$ is a $(2n,n,n;1,d)$ convolutional code. 
          %% (3) If the information vector $P(z)$ has support $k$ then $P(z)G(z)$ has distance $d+k-1$ where $d$ is the distance of the block code generated by the first $2r-n$ rows of $A$.
          
\end{proposition}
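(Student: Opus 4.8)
The plan is to verify the two claims of Proposition \ref{conv3} in turn, drawing on the computations already displayed in the text immediately preceding the statement. For part (1), the work is essentially done in the paragraph above: I would observe that $G(z) = A + B_1 z$ is a polynomial matrix of size $r \times n$, that its coefficient matrices $A$ and $B_1$ both arise from the invertible matrix $U$, and that $G(z) C = (A + B_1 z)C = AC + B_1 C z = I_r + 0 = I_r$ since $B C = 0$ forces $B_1 C = 0$. Hence $C$ is a right polynomial inverse (in fact a constant one), so $G(z)$ is a reduced basic matrix of rank $r$ and the code is non-catastrophic. The degree parameters: each $\de_i$ is $\max_j \deg g_{ij}$, which is $1$ for the rows coming from $B$ (the last $n-r$ rows) and $0$ for the first $t = 2r-n$ rows; so $\de = n-r$ and $\mu = 1$, giving the type $(n,r,n-r;1)$ as claimed.

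For part (2), the free distance, I would argue by splitting information vectors according to whether they "see" the memory term. An information word is $P(z) = P_0 + P_1 z + \cdots$ with each $P_j$ a $1 \times r$ vector; the codeword is $P(z)G(z) = P(z)A + P(z)B_1 z$. The key structural fact is that $B_1 = \begin{smatrix} 0_t \\ B \end{smatrix}$, so multiplication by $B_1$ only depends on the last $n-r$ coordinates of each $P_j$; write $P_j = (P_j', P_j'')$ with $P_j'$ of length $t$ and $P_j''$ of length $n-r$, so that $P_j B_1 = P_j'' B$. I would then consider the lowest-degree nonzero coefficient of $P(z)$, say $P_\ell$. If $P_\ell'' = 0$, i.e. $P_\ell = (P_\ell', 0)$ with $P_\ell' \neq 0$, then the degree-$\ell$ coefficient of the codeword is $P_\ell A = P_\ell' A_1$ (only the first $t$ rows of $A$ contribute), and moreover $P_\ell B_1 z = 0$, so the lowest-degree contribution has weight $\geq d(A_1)$; the minimum over such words is exactly $d(A_1)$, achieved by a constant $P(z) = P_\ell$ realizing a minimum-weight word of the code generated by $A_1$. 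If instead $P_\ell'' \neq 0$, then the degree-$\ell$ coefficient is $P_\ell A$ (weight $\geq d(A)$ since $P_\ell \neq 0$, using that $A$ has full row rank so $P_\ell A \neq 0$) and the degree-$(\ell+1)$ coefficient contains the term $P_\ell B_1 = P_\ell'' B$ together with $P_{\ell+1} A$; one shows the relevant contribution is at least $d\!\left(\begin{smatrix} A_1 \\ B\end{smatrix}\right)$ and that these two blocks of coordinates can be made essentially independent so the weights add, giving $\geq d(A) + d\!\left(\begin{smatrix} A_1 \\ B \end{smatrix}\right)$. Taking the minimum over the two cases yields the stated formula.

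The main obstacle I anticipate is the second case of part (2): making precise why the degree-$\ell$ weight $\geq d(A)$ and the "degree-$(\ell+1)$" weight involving $B$ genuinely add rather than merely bounding the distance below by the larger of the two. The cleanest route is to recall that in the codeword $P(z)G(z)$, coefficients of different powers of $z$ occupy disjoint copies of $\F^n$ when we count Hamming weight over $\F[z]^n$; so the weight of the codeword is the sum of the weights of its $z$-coefficients. Thus I need only lower-bound the degree-$\ell$ coefficient by $d(A)$ and the degree-$(\ell+1)$ coefficient by $d\!\left(\begin{smatrix} A_1 \\ B\end{smatrix}\right)$ separately. The first is immediate. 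For the second, the degree-$(\ell+1)$ coefficient is $P_{\ell+1}A + P_\ell B_1 = P_{\ell+1}A + P_\ell'' B$; since $P_\ell'' \neq 0$, this is a nonzero combination of rows of $A$ and rows of $B$ — but note it is a combination of \emph{all} rows of $A$ (via $P_{\ell+1}$) and only rows of $B$ corresponding to the last $n-r$ slots; after accounting for the first $t$ rows of $A$ versus $A_1$, one sees this is a nonzero word in the code generated by $\begin{smatrix} A_1 \\ B \end{smatrix}$ — here one must be a little careful to check the combination is nonzero, which follows because $U = \begin{smatrix} A \\ B \end{smatrix}$ is invertible so no nontrivial combination of its rows vanishes. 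Finally I would exhibit achievability: pick $P_\ell''$ realizing a minimum-weight word of the code generated by $\begin{smatrix} A_1 \\ B \end{smatrix}$ in a way compatible with a minimum-weight word from $A$ at level $\ell$, confirming the bound is tight.
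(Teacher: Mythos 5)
Your part (1) is fine and is essentially the computation the paper itself gives just before the statement ($G(z)C=I_r$ gives a right polynomial inverse, hence non-catastrophic, and the degree/memory count gives $(n,r,n-r;1)$), so there is nothing to add there. The genuine gap is in part (2), in your case $P_\ell''\neq 0$: you bound the degree-$(\ell+1)$ coefficient $P_{\ell+1}A+P_\ell''B$ below by $d\bigl(\begin{ssmatrix}A_1\\ B\end{ssmatrix}\bigr)$, claiming that ``after accounting for the first $t$ rows of $A$ versus $A_1$'' it is a word of the code generated by $\begin{ssmatrix}A_1\\ B\end{ssmatrix}$. It is not: $P_{\ell+1}$ is unconstrained, so this coefficient can involve \emph{all} $r$ rows of $A$, and since $U$ is invertible the rows of $A$ and $B$ together span all of $\F^n$. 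Concretely, for fixed $P_\ell''\neq 0$ the set $\{P_{\ell+1}A+P_\ell''B: P_{\ell+1}\in\F^r\}$ is a full coset of the $[n,r]$ code generated by $A$, whose minimum weight is at most the covering radius of that code and can be far smaller than $d\bigl(\begin{ssmatrix}A_1\\ B\end{ssmatrix}\bigr)$; invertibility of $U$ only guarantees this coefficient is nonzero, i.e.\ weight $\geq 1$. So your argument does not establish the lower bound $d(A)+d\bigl(\begin{ssmatrix}A_1\\ B\end{ssmatrix}\bigr)$.

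The estimate is rescued at the \emph{top} end of the codeword, not at degree $\ell+1$. Let $m$ be the largest index with $P_m\neq 0$ and write $P_m=(P_m',P_m'')$. If $P_m''\neq 0$, the degree-$(m+1)$ coefficient is $P_m''B$, of weight $\geq d(B)\geq d\bigl(\begin{ssmatrix}A_1\\ B\end{ssmatrix}\bigr)$. If $P_m''=0$ (which in your case forces $m>\ell$, since $P_\ell''\neq 0$), the degree-$m$ coefficient is $P_m'A_1+P_{m-1}''B$, and this one genuinely is a nonzero word of the code generated by $\begin{ssmatrix}A_1\\ B\end{ssmatrix}$, because only the first $t$ rows of $A$ can appear in the leading coefficient. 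Either way some coefficient of index $>\ell$ has weight $\geq d\bigl(\begin{ssmatrix}A_1\\ B\end{ssmatrix}\bigr)$, which added to $\mathrm{wt}(P_\ell A)\geq d(A)$ gives the bound, since weights of distinct $z$-coefficients add, as you noted. Separately, your final sentence on achievability is an assertion rather than a proof: the same $P_\ell$ controls both $\mathrm{wt}(P_\ell A)$ and the term $P_\ell''B$ in the next coefficient, so minimum-weight choices cannot in general be made independently; the paper states part (2) without proof and in its examples (e.g.\ Examples \ref{first} and \ref{hamm}) verifies the free distance directly by row-by-row observations of exactly the above kind, so the exactness of the formula (beyond the lower bound together with the obvious upper bound $d\leq d(A_1)$) would need an argument you have not supplied.
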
 

    \begin{proposition}\label{conv4} Let $U$ be an orthogonal matrix in a field $\F$ and  $U= \begin{smatrix}A \\ B_1\end{smatrix}$ where $A$ has size $r\ti n$ and $B_1$ has size $(n-r)\ti n$ with $r>n-r$. 
Let $t=2r-n$ and $B_1= \begin{smatrix} 0_{t\ti n} \\ B\end{smatrix}$. Let $i$ be a square root of $-1$ in $F$ or in a quadratic extension of $\F$ Then

          (1) $G(z) = A+i B_1z$ generates a convolutional dual-containing $(n,r,n-r;1)$ code $\C$.

          (2) Let $A_1$ be the matrix of the first $(2r-n)$ rows of $A$. The distance $d$ of $\C$ is $\min\{d(A_1), d(A) + d(\begin{ssmatrix} A_1 \\ B \end{ssmatrix})\}$ where $d(X)$ denotes the distance of the code generated by $X$. 
%% The distance $d$ of $\C$ is the distance of the block code generated by the first $2r-n$ rows of $A$.
(3) A quantum convolutional code of the form $[[n,2r-n,d]]$ is constructed from $\C$ where $d$ is the distance of $\C$. 
          %% If  the code generated by $A$ has distance $d_1$ and the code generated by $B$ has distance $d_2$, then the free distance of $\C$ is $d=d_1+d_2$ and $\C$ is a $(2n,n,n;1,d)$ convolutional code. 
          %% (3) If the information vector $P(z)$ has support $k$ then $P(z)G(z)$ has distance $d+k-1$ where $d$ is the distance of the block code generated by the first $2r-n$ rows of $A$.

      %% If the code generated by $A$ has distance $d_1$ and the code generated by $B$ has distance $d_2$, then the free distance of $\C$ is $d=d_1+d_2$ and $\C$ is a $(2n,n,n;1,d)$ self-dual convolutional code. Further if the information vection $P(z)$ has support $k$ then $P(z)G(z)$ has distance $\geq d_1+d_2+k-1$.
      
    \end{proposition}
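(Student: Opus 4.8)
The plan is to carry out the orthogonality computation that precedes Proposition~\ref{conv3}, keeping track of the scalar $i$, and then to obtain (2) and (3) by quoting Proposition~\ref{conv3}(2) and the convolutional CSS construction. Since $U$ is orthogonal, $UU\T=I_n$, so in the unit scheme $\begin{smatrix}A\\B\end{smatrix}\begin{smatrix}C&D\end{smatrix}=I_n$ we have $C=A\T$, $D=B\T$, whence $AC=I_r$, $AD=0$, $BC=0$, $BD=I_{n-r}$, together with $AB\T=0$ and $BB\T=I_{n-r}$. Write $A=\begin{smatrix}A_1\\A_2\end{smatrix}$ with $A_1$ the first $t=2r-n$ rows; then $C=A\T=(A_1\T\ A_2\T)$, so the block $C_1$ (the last $n-r$ columns of $C$) equals $A_2\T$, i.e. $C_1\T=A_2$, and $AC_1=\begin{smatrix}0_{t\ti(n-r)}\\I_{n-r}\end{smatrix}=:E$, $BC_1=0$, hence also $B_1C_1=0$, $B_1C=0$, $B_1D=E$, where $B_1=\begin{smatrix}0_t\\B\end{smatrix}$. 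Finally, the rows of $A_1$ and of $B$ lie among the rows of the invertible matrix $U$, so $\begin{smatrix}A_1\\B\end{smatrix}$ has rank $r$ and no row of $B$ vanishes.

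For (1), put $G(z)=A+iB_1z$. From $B_1C=0$ we get $G(z)C=AC=I_r$, so $G(z)$ has a constant (hence polynomial) right inverse; it is therefore of rank $r$, non-catastrophic, and row-reduced (its leading-coefficient matrix is $\begin{smatrix}A_1\\iB\end{smatrix}$, of rank $r$). Its row degrees are $0$ on the first $t$ rows and $1$ on the last $n-r$ rows, so $\delta=n-r$ and $\mu=1$, giving an $(n,r,n-r;1)$ code $\C$. Expanding with the recorded identities, $(A+iB_1z)(iD+C_1z)=iAD+AC_1z+i^2B_1Dz+iB_1C_1z^2=Ez-Ez=0$, so $H\T(z)=iD+C_1z=iB\T+A_2\T z$ is a control matrix, and (memory being $1$) the convolutional dual code is generated by $zH(z^{-1})=A_2+iBz$. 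Since $B_1=\begin{smatrix}0_t\\B\end{smatrix}$ forces $G(z)=\begin{smatrix}A_1\\A_2+iBz\end{smatrix}$, this dual generator is exactly the block of the last $n-r$ rows of $G(z)$; hence $\C^{\perp}\subseteq\C$, i.e. $\C$ is dual containing.

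For (2), the free distance of $\langle A+iB_1z\rangle$ equals that of $\langle A+B_1z\rangle$: the substitution $z\mapsto iz$ is a weight-preserving $\F(i)$-linear automorphism of $\F(i)[z]^n$ carrying one module onto the other, and the minimum distance of an $\F$-linear block code is unchanged under extension of scalars to $\F(i)$ (decompose any $\F(i)$-combination along an $\F$-basis of $\F(i)$; the support is the union of the $\F$-supports). So (2) is precisely Proposition~\ref{conv3}(2): $d=\min\{d(A_1),\,d(A)+d(\begin{smatrix}A_1\\B\end{smatrix})\}$. The lower bound can also be seen directly: a codeword $(p(z),q(z))G(z)=[pA_1+qA_2]+i[qB]z$ with $q=0$ has weight $\geq d(A_1)$; with $q\neq0$, its coefficient at the lowest power in $q$ is a nonzero word $(p_\alpha,q_\alpha)A$ of $\langle A\rangle$ and its coefficient one power above the top power in $q$ is a nonzero word $(p_{\beta+1},iq_\beta)\begin{smatrix}A_1\\B\end{smatrix}$ of $\langle\begin{smatrix}A_1\\B\end{smatrix}\rangle$, lying in disjoint coordinate blocks, so the weight is $\geq d(A)+d(\begin{smatrix}A_1\\B\end{smatrix})$. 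For (3), $\C$ is dual containing with $\dim\C=r$, so the CSS construction (\cite{calderbank,calderbank1,steane}; compare Proposition~\ref{fourier1}(iii) for the linear block analogue) applied to $\C^{\perp}\subseteq\C$ yields a quantum convolutional code with $2\dim\C-n=2r-n$ logical symbols and distance $d$, that is, a $[[n,2r-n,d]]$ code; since $r>n-r$, $2r-n\geq1$, so the parameters are non-degenerate.

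The step I expect to be the real work is the upper-bound half of (2): exhibiting codewords that actually attain $d(A)+d(\begin{smatrix}A_1\\B\end{smatrix})$, and checking that the support-$1$ information vectors realizing $d(A)$ and $d(\begin{smatrix}A_1\\B\end{smatrix})$ can be chosen compatibly, is the delicate point. Since this is precisely the content of Proposition~\ref{conv3}(2), in practice (2) collapses to the $z\mapsto iz$ rescaling remark; the only remaining technical care is to work over $\F(i)$ when $i\notin\F$ and to use that field extension preserves the distance of a linear block code. Parts (1) and (3) are then routine bookkeeping with the orthogonality relations together with the cited CSS construction.
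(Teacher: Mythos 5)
Your argument is correct and follows essentially the same route as the paper: the paper's justification for this proposition is exactly the computation in the preceding ``Different block sizes'' discussion, namely $(A+iB_1z)(iD+C_1z)=0$, the dual generator $C_1\T+iD\T z$, and the orthogonality identifications $C=A\T$, $D=B\T$ showing this dual generator is the bottom block $A_2+iBz$ of $G(z)$, which is precisely your part (1). For parts (2) and (3) you in fact supply more than the paper does (the lower-bound weight argument, the $z\mapsto iz$ rescaling with scalar extension, and the CSS bookkeeping), since the paper merely asserts the distance formula and the quantum parameters without proof.
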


    The process is developed similarly by looking at
    $ \begin{ssmatrix} A \\ B \end{ssmatrix} \begin{ssmatrix} C & D \end{ssmatrix} = \al I_n$ with $U=\begin{ssmatrix} A \\ B \end{ssmatrix}$ where $\al \neq 0$. See for example section \ref{hadamard} for examples on this. 
    
    It is best illustrated by looking at a block decomposition say one of the form $\begin{ssmatrix} A_0 \\ A_1\\A_2\\ A_3\end{ssmatrix}\begin{ssmatrix}B_0 &B_1&B_2&B_3\end{ssmatrix} = I_{4n}$ where each $A_i$ has size $n\ti 4n$. Take $G(z) =\begin{ssmatrix} A_0 \\ A_1\\A_2\end{ssmatrix}+\begin{ssmatrix}0 \\ 0\\ A_3\end{ssmatrix}z$.

        In case  $U$ is orthogonal, let  $G(z) =\begin{ssmatrix} A_0 \\ A_1\\A_2\end{ssmatrix}+i\begin{ssmatrix}0 \\ 0\\ A_3\end{ssmatrix}z$ and then a $(4n,3n,n;1,d)$ DC convolutional code is obtained from which a quantum convolutional code of form $[[4n,2n,d]]$ is obtained. The $d$ may be calculated algebraically and depends on the distances of codes formed from the blocks.
    
    \subsubsection{Prototype examples}

    The following prototype examples exemplify some of the general constructions.
    The examples given tend to be  linear block and  convolutional of types  DC and LCD, but many types and infinite series of such may be built up also by the techniques. The types DC lead to the formation of quantum codes 

%%     The construction examples include the following techniques and may be replicated in general:
    
%% \begin{itemize} \item   A DC linear mds code is obtained from a unit-derived scheme. The same unit-derived scheme  is  used  to define a convolutional memory $1$ mds code which is LCD and its distance is of the order of twice the distance of the linear code derived.

%% \item A LCD linear mds code is obtained from a unit-derived scheme. The same unit-derived scheme is used to define a convolutional mds memory $1$ code whose distance is of the order of twice the distance of the linear code derived. If the characteristic of the field is $2$ then a DC convolutional mds memory $1$ code is obtained. \end{itemize}
    %% examplesxamples clarifies the general construction. 
    \begin{example}\label{first} {\em Let $F_7$ be a Fourier $7\ti 7$ matrix over some field $\F$. The field $\F$ is any field over which the Fourier $7\ti 7$ matrix exists. Thus $\F$ could be $GF(2^3)$, a characteristic $2$ field, but also over fields with characteristic not dividing $7$.

      Denote the rows of $F_7$ in order by $\{e_0,e_1,e_2,e_3,e_4,e_5,e_6\}$ and the columns of the inverse of $F_7$ in order by $\{f_0,f_1,f_2,f_3,f_4,f_5,f_6\}$. Note that $e_if_j = 0, i\neq j, e_if_i = 1$ but also as $F_7$ is a Fourier matrix  that $f_i\T = \f{1}{7}e_{7-i}$. The fraction part is no problem for check or control matrices: $H$ is a check matrix if and only if $\al H$ is a check matrix for any $\al \neq 0$. 
      
      Let $A=\begin{ssmatrix}e_0 \\e_1 \\ e_2 \\e_3 \end{ssmatrix}, B_1=\begin{ssmatrix}e_4\\e_5\\e_6\end{ssmatrix}$ where $F_7=\begin{pmatrix} A \\ B_1 \end{pmatrix}$ is the Fourier $7\ti 7$ matrix. %% with rows $\{e_0,e_1,e_2,e_3,e_4,e_5,e_6\}$.
      Now from \cite{hurleyunit} $A$ generates a $[7,4,4]$ mds (block) DC code. %% where $D\T$ is a check matrix.
      %% It is also shown in \cite{hurleyunit} that  the code generated by $A$ is a DC code.

      Let $B=\begin{smatrix}0\\B_1 \end{smatrix}$.
            Define $G(z)= A + Bz = A+\begin{ssmatrix} 0 \\ e_4\\ e_5\\e_6 \end{ssmatrix}z$. Now $G(z)*(f_1,f_2,f_3,f_4)=I_4$ and so $G(z)$ has a right inverse and thus the code generated by $G(z)$ is non-catastrophic.  $G(z)$ defines a $[7,4,3;1]$ convolutional code $\C$. The GSB for such a code is $(7-4)(\floor{\f{3}{4}}+1)+ 3+1= 3+3+1=7$. It is  easy to check that this $7$  is the free distance of the code and so the code is an mds convolutional code.

      %The free distance is $2*4 -1$ where $4$ is the distance of $A$. %% that is, it is twice the distance of $A$ less $1$.
      %% It is also the distance of the first row of row.
      Note that $(\al_1,\al_2,\al_3,\al_4)*B$ has distance $\geq 5$ as an element in a $[7,3,5]$ code where $(\al_1,\al_2,\al_3,\al_4)$ is a non-zero $1\ti 4$ vector,  except when $(\al_1,\al_2,\al_3,\al_4)= (\al_1,0,0,0)$; but then $(\al_1,0,0,0)*A = \al_1e_0$ which has distance $7$.  %% The free distance in general will be of the order of twice the distance of the corresponding linear block code.
      %% A $[7,4]$ block code has MDS $4$ so the CC has twice the distance of this less $1$.

      %% Claim that  $\C$ is a convolutional mds $[7,4,3;1,7]$ code which is also LCD.

      $G(z)*((f_4,f_5,f_6) - (f_2,f_3,f_4)z) = 0$ and so $H\T(z)=(f_4,f_5,f_6) - (f_2,f_3,f_4)z$ is a control matrix. Then $H(z^{-1})z =\begin{ssmatrix}f_4\T \\ f_5\T\\f_6\T\end{ssmatrix}z - \begin{ssmatrix}f_1\T\\f_2\T\\f_3\T \end{ssmatrix}$ generates the dual matrix. Now since $f_i\T=\f{1}{7}e_{7-i}$ this means $7*H(z^{-1})z= -\begin{ssmatrix}e_6\\e_5\\e_4\end{ssmatrix} + \begin{smatrix}e_3\\e_2\\e_1\end{smatrix}z$ generates the dual matrix. Thus  $\C$ is a convolutional $(7,4,3;1,7)$ code which is an LCD  code.} 
      
    \end{example}
    \begin{example} Use the same setup as in Example \ref{first} where  $F_7$ is a Fourier $7\ti 7$ matrix. Take $A=\begin{ssmatrix}e_0\\e_1\\e_6\\e_2\\e_5\end{ssmatrix}$. Then it follows from \cite{hurleyunit} that the code generated by $A$ is an LCD $[7,5,3]$ code.  Let $B=\begin{ssmatrix} 0 \\0\\0\\e_4 \\e_3\end{ssmatrix}$

     Define $ G(z)=A +Bz = \begin{ssmatrix}e_0\\e_1\\e_6\\e_2\\e_5\end{ssmatrix} + \begin{ssmatrix} 0 \\0\\0\\e_4 \\e_3\end{ssmatrix}z$.
          Then $(A+Bz)*(f_0,f_1,f_6,f_2,f_5)= I_5$ and so $G(z)$ has a right inverse and hence code, $\C$, generated by $G(z)$ is non-catastrophic.
            Now $G(z)*\{(f_4,f_3)-(f_2,f_5)z\} = 0$ and so $(f_4,f_3)-(f_2,f_5)z= H\T(z)$ is a control matrix.

                Now $H(z^{-1})z = \begin{ssmatrix}f_4\T \\ f_3\T\end{ssmatrix}z
                  -\begin{ssmatrix}f_2\T \\ f_5\T \end{ssmatrix}=
                  \f{1}{7}\{-\begin{ssmatrix}e_5 \\ e_2 \end{ssmatrix}+ \begin{ssmatrix}e_3\\e_4 \end{ssmatrix}z\}$.

                  Thus $-\begin{ssmatrix}e_5 \\ e_2 \end{ssmatrix}+ \begin{ssmatrix}e_3\\e_4 \end{ssmatrix}z$ generates the dual code of $\C$.

                  If the field $\F$ has characteristic $2$ the code $\C$ is dual containing. Thus when $F_7$ is a Fourier matrix over $GF(2^3)$, the code $\C$ is dual containing. It is easy to check directly that the free distance of $\C$ is $5$ and is thus an mds convolutional $(7,5,2;1,5)$ code; this  is dual-containing when $F_7$ is a Fourier matrix  over $GF(2^3)$.

                  If the field does not have characteristic $2$ then define $G(z) = A + iBz$ where $i$ is a square root of $(-1)$ in the field or in a quadratic extension of the field. Then again a mds convolutional dual-containing code is obtained. A quantum convolutional code is constructed from a dual-containing code. 

    \end{example}

%% One way to obtain orthogonal matrices is via group rings. A particular group ring ring that will give an infinite set of matrices is the dihedral group. The dihedral group of order $2n$ is given by $D_{2n}=\langle a, b | a^n=1, b^2=1, ab=ba^{-1}\rangle$. A listing of the elements is $\{1, a, a^2, \ldots, a^{n-1}, b, ba, \ldots, ba^{n-1}\}$\footnote{This group of order $2n$ is often denoted by $D_n$.} Details on group rings may be found in \cite{seh} and their use in coding theory was initiated in \cite{hur1,hur2,hur11,hur0}.
%%  The examples in 
%% Orthogonal matrices whose square is the identity are obtained by using the group rings elements formed from $\{b, ba, \ldots, ba^{n-1}\}$ as for example used in \cite{ian}. 

%% \subsection{Binary codes}\label{binary} Binary codes,  codes over $GF(2)$, are of particular interest. In this section then, all matrices are over the binary field $GF(2)=\Z_2$.

  The next example is a prototype example which although small demonstrates the power of the methods. In larger examples each row is replaced by a block of rows and lengths, distances are  increased substantially. %% in proportion.  The distances achieved for larger lengths are  multiples of that achieved for the small examples.  
\begin{example}
  Let $X= \begin{ssmatrix}0&1&1&1\\1&1&1&0\\1&1&0&1\\1&0&1&1\end{ssmatrix}$ over $GF(2)$. Then $X^2=I_4, X=X\T$. Thus DC codes are obtained by taking rows of $X$ as a generating matrix  and deleting corresponding columns of $X\T =X$ to obtain  a control matrix.  

Let $A= \begin{ssmatrix}0&1&1&1\\1&1&1&0 \end{ssmatrix}, B=  \begin{ssmatrix}1&1&0&1\\1&0&1&1\end{ssmatrix}$ and 
  $\begin{smatrix} A \\ B \end{smatrix}  \begin{smatrix}  C & D \end{smatrix}=I_4$ is our unit scheme. Now here $D=B\T, C=A\T$. Thus $A$ generates a $[4,2,2]$ code $\C$ with control matrix $D=B\T$ and so $B$ generates the dual code of $\C$. The code $\C$ is an $[4,2,2]$ LCD code.

  Extend this to a convolutional code $\C$ using $G(z)=A+Bz$. Now $(A+Bz)(D+Cz)=0$ so that $H\T(z)=D+Cz$ is a control matrix. Also $G(z)* C= I_2$ and so the code is non-catastrophic.
  The dual code of $\C$ is generated by $H(z^{-1})z = C\T+D\T z = A+Bz$ and hence the code is self-dual. Thus a convolutional self-dual $(4,2,2;1,4)$ is obtained. From this a quantum error-correcting code of form $[[4,0,4]]$ is obtained.

   $P(z)G(z)$ has distance $\geq 4+(s-1)$ for an  information vector $P(z)$   of support $s$.

  Any rows of $U$ may be chosen to generate a code and the resulting code is automatically an LCD code. 
  For example choose the first and third row of $U$ and get $A= \begin{ssmatrix} 0&1&1&1\\ 1&1&0&1\end{ssmatrix}, B= \begin{ssmatrix}1&1&1&0\\1&0&1&1 \end{ssmatrix}$. Then $D=B\T, C=A\T$ similar to above. $A$ then generates an LCD $[4,2,2]$ and $G(z) = A+Bz$ generates a non-catastrophic self-dual $(4,2,2;1,4)$ convolutional code.
 
    This idea of choosing arbitrary rows,  when used on large size matrices, lends itself to forming McEliece type of cryptographic systems.

    Larger rates are obtained as follows. Choose three rows of $U$ and get
    $A=\begin{ssmatrix}0&1&1&1\\1&1&1&0\\1&1&0&1\end{ssmatrix}, B=\begin{ssmatrix}1&0&1&1\end{ssmatrix}$. Then in general form $C=A\T, D=B\T$. $A$ generates a $[4,3,1]$ LCD code. Give  $U$ in its row form: $U= \begin{ssmatrix}E_0\\E_1\\E_2\\E_3\end{ssmatrix}$. (In larger length constructions the $E_i$ are blocks of matrices of size $n\ti 4n$.) Then $A=\begin{ssmatrix}E_0\\E_1\\E_2\end{ssmatrix}, B=\begin{ssmatrix}E_3\end{ssmatrix}$.

      Define $G(z) = \begin{ssmatrix}E_0\\E_1\\E_2\end{ssmatrix} + \begin{ssmatrix}0_4\\0_4\\E_3\end{ssmatrix}z = A+B_1z$, say; $0_4$ indicates here a row of zeros of length $4$. This gives a $(4,3,1;1)$ convolutional code which is dual-containing. The distance is $2$. Note that $(\al_1,\al_2,\al_3)*B_1= \al_3e_3$ has distance $3$ except when $\al_3 = 0$ in which case $(\al_1,\al_2,\al_3)A = \al_1e_0+\al_2e_1$ has distance $2$ or $3$. If $P(z)$ is an information vector of support $s$ then $P(z)G(z)$ has distance $\geq 2+(s-1)$. 
   \end{example}

          \begin{example}
          In another way construct rate $\f{3}{4}$ and $\f{1}{4}$ convolutional codes as follows:

Define          $G(z) = \begin{ssmatrix}E_0\\E_1\\E_2\end{ssmatrix} +
              \begin{ssmatrix}E_1\\E_0\\E_3\end{ssmatrix}z+\begin{ssmatrix}E_2\\E_3\\E_0\end{ssmatrix}z^2+\begin{ssmatrix}E_3\\E_2\\E_2\end{ssmatrix}z^3$
and  $H\T(z)=E_3\T+E_2\T z+E_1\T z^2 + E_0\T z^3$.

                Then $G(z)H\T(z) = 0$. Let the code generated by $G(z)$ be denoted by $\C$. The dual of $\C$ is generated by $H(z^{-1})z^3 = E_0+E_1z+E_2z^2+E_3z^3$. Thus $\C$ is a dual containing $(4,3,9;3)$ convolutional code. Its free distance is $4$ giving a $(4,3,9;3,4)$ convolutional dual-containing code. From this a quantum $[[4,2,4]]$ convolutional code is obtained.

$P(z)*G(z)$ has distance $\geq 4+(s-1)$ when $P(z)$ is an information vector of support $s$. 
    %%   convolutional dual code}, and
%% is known also as {\em the module-theoretic dual code}.
\end{example}

                \begin{example}\label{hamm} Hamming convolutional code: Here the Hamming $[7,4,3]$ code is extended to a  $(7,4,3;1,6)$ convolutional code. The distance is $6$ which is twice that of the Hamming $[7,4,3]$ but it's also a convolutional code which has its own decoding techniques. The method is to look at the Hamming code as a unit-derived code and proceed from there by a general technique of constructing convolutional codes by the unit-derived method.

                The Hamming $[7,4,3]$ is  given with generator matrix $G=\begin{ssmatrix}1 &0&0&0&1&1&0 \\0&1&0&0&1&0&1\\0&0&1&0&0&1&1\\0&0&0&1&1&1&1\end{ssmatrix}$ and check matrix $H=\begin{ssmatrix}1&1&0&1&1&0&0\\1&0&1&1&0&1&0\\0&1&1&1&0&0&1 \end{ssmatrix}$. Now $\begin{ssmatrix} G \\ H\end{ssmatrix}$ is {\em not} invertible so this cannot be used for extending $G$ to be a unit-derived code. For reasons that will appear later use $L= \begin{ssmatrix}1 &1&1&1&1&1&1 \\0&1&0&0&1&0&1\\0&0&1&0&0&1&1\\0&0&0&1&1&1&1\end{ssmatrix}$ as the generator matrix\footnote{This is obtained by adding the other three rows to the first row to $G$ above}.

      Now complete $L$ to a unit $U=\begin{ssmatrix}1 &1&1&1&1&1&1 \\0&1&0&0&1&0&1\\0&0&1&0&0&1&1\\0&0&0&1&1&1&1 \\1&0&1&1&1&0&0\\0&1&0&0&1&1&1\\0&0&0&1&1&1&0\end{ssmatrix} = \begin{ssmatrix}L \\ K\end{ssmatrix}$, say. It is easy to check that $K$ generates a $[7,3,3]$ code.

        $U$  has inverse $V=\begin{ssmatrix}0&0&1&1&1&0&0\\1&1&0&1&1&1&1\\0&1&1&1&0&1&1\\1&1&0&0&1&0&1\\1&0&0&0&1&1&0\\0&1&0&0&0&1&0\\0&0&0&1&0&0&1\end{ssmatrix} = \begin{ssmatrix} C & D \end{ssmatrix}$,  where $C$ is $7\ti 4$ and $D$ is $7\ti 3$.

        Form $G(z)=L+\begin{ssmatrix}0\\K\end{ssmatrix}z = \begin{ssmatrix}1 &1&1&1&1&1&1 \\0&1&0&0&1&0&1\\0&0&1&0&0&1&1\\0&0&0&1&1&1&1\end{ssmatrix}+
          \begin{ssmatrix}0&0&0&0&0&0&0 \\1&0&1&1&1&0&0\\0&1&0&0&1&1&1\\0&0&0&1&1&1&0\end{ssmatrix}z$.

            %% $G(z)$ generates  a convolutional $(7,4,3;1)$ non-catastrophic code $\C$. Then
 Precisely:  \begin{itemize}\label{join} \item $G(z)$ generates  a convolutional $(7,4,3;1)$ non-catastrophic code $\C$.

              \item The free distance of $\C$ is $6$ so $G(z)$ generates  a $(7,4,3;1,6)$ convolutional code.

              \item  If $P(z)$ is an information vector %% with coefficients which are $1\ti 4$ vectors
              then the distance of $P(z)*G(z)$ is $\geq (6 +d-1)$ where $d$ is the support of $P(z)$.

           % The detailed proof is omitted.
            \item  The free distance may be shown from the following observations.
 Let $(\al_1,\al_2,\al_3,\al_4)$ be a non-zero vector. Then $(\al_1,\al_2,\al_3,\al_4)*\begin{ssmatrix} 0 \\ K%% 0&0&0&0&0&0&0 \\1&0&1&1&1&0&0\\0&1&0&0&1&1&1\\0&0&0&1&1&1&0
 \end{ssmatrix}$ has distance $3$ except when $\al_2=0=\al_3=\al_4$; but in this case $(\al_1,\al_2,\al_3,\al_4)*L = \al_1(1,1,1,1,1,1,1)$ which has distance $7$.
 \end{itemize}%% This is  used to prove part (iii) of Proposition \ref{join}. %% which makes up for this in the calculation.

Term the code generated by $G(z)$ to be  the {\em Hamming convolutional code}. 
\end{example}

\section{Higher memory convolutional codes from units}
The basic unit-derived scheme from $UV=I$ breaks $U=\begin{ssmatrix}A \\ B \end{ssmatrix}$ to derive $\begin{ssmatrix} A \\ B\end{ssmatrix}\begin{ssmatrix} C & D \end{ssmatrix}$ and  linear block and convolutional codes of memory $1$ from the scheme have been described and analysed.

  Here the unit is broken into more than two  blocks and linear block and convolutional codes of high memory are derived and analysed.

  Consider the case $UV=I$ where $U=\begin{ssmatrix} A \\ B \\ C \\D \end{ssmatrix}, V= \begin{ssmatrix} E & F & G & H \end{ssmatrix}$ appropriately, giving another type of (basic) unit scheme:

  $$\begin{ssmatrix}A\\B\\C\\D\end{ssmatrix}\begin{ssmatrix} E & F & G & H \end{ssmatrix} =I$$
  
   First  assume the sizes of $A,B,C,D$ are the same. Thus $U$ is a $4n\ti 4n$ matrix. Three-quarter rate block linear codes are described by taking $\begin{ssmatrix}A\\B\\C\end{ssmatrix}$ as a generator matrix and then $\begin{ssmatrix} H \end{ssmatrix}$ is a control matrix. More generally by choosing three of $A,B,C,D$ to form the generator matrix for a code gives a $[4n,3n]$ three quarter rate code. The control matrix is immediately clear and is one of $E,F,G,H$. %% From this convolutional memory $1$ codes are constructed as described in \cite{hurleyconv} and in \cite{hurleyunit}.
     When $U$ is orthogonal, LCD block codes are obtained from which the convolutional codes described are DC when the characteristic is $2$.
      
      Memory $3$ codes are described: $G(z)=A+Bz+Cz^2+Dz^3$ is the generator of a $(4n,n,3n;3)$ code. This is non-catastrophic as $G(z)E=I_n$. The distance is $d$ where $d$ is a linear functional of the distances of the codes generated by $A,B,C,D$. Moreover $P(z)G(z)$ has distance $\geq (d+ t-1)$ where $P(z)$ is an information vector and $t$ is the support of $P(z)$.

      Then $(A+Bz+Cz^2+Dz^3)((F,G,H)-(E,H,G)z-(H,E,F)z^2+(G,F,E)z^3 = 0$ and so $K\T(z)=(F,G,H)-(E,H,G)z-(H,E,F)z^2+(G,F,E)z^3$ is a control matrix. The matrix of the dual is given by
      
      $K(z^{-1})z^3 = \begin{ssmatrix}G\T \\ F\T \\ E\T \end{ssmatrix}- \begin{ssmatrix}H\T \\E\T \\F\T \end{ssmatrix}z - \begin{ssmatrix}E\T \\ H\T \\ G\T \end{ssmatrix}z^2 + \begin{ssmatrix}F\T \\ G\T \\ H\T \end{ssmatrix}z^3$.

      This dual code is a $(4n,3n,9n;3)$ code. When $U$ is orthogonal, $A=E\T, B=F\T, C=G\T, D=H\T$.

      \begin{proposition}\label{mem} Let $\begin{ssmatrix}A\\B\\C\\D\end{ssmatrix}\begin{ssmatrix} E & F & G & H \end{ssmatrix} =I_{4n}$ be a unit scheme in which  $\{A,B,C,D\}$ are of the same size. Then

          (i) $G(z)=A+Bz+Cz^2+Dz^3$ is a  generator matrix of a $(4n,n,3n;3)$ convolutional code. The distance is a linear  functional  of the distances of the codes generated by $\{A,B,C,D\}$.
          
          (ii) $P(z)G(z)$ has distance $\geq (d+ t-1)$ where $t$ is the support of the information vector $P(z)$.

          (iii) The control matrix  of $\C$ is  $(F,G,H)-(E,H,G)z-(H,E,F)z^2+(G,F,E)z^3$ and the dual code of $\C$ is generated by $\begin{ssmatrix}G\T \\ F\T \\ E\T \end{ssmatrix}- \begin{ssmatrix}H\T \\E\T \\F\T \end{ssmatrix}z - \begin{ssmatrix}E\T \\ H\T \\ G\T \end{ssmatrix}z^2 + \begin{ssmatrix}F\T \\ G\T \\ H\T \end{ssmatrix}z^3$.

          (iv) When the full matrix is orthogonal the dual code of $\C$  is generated by 

          $\begin{ssmatrix}C \\ B \\ A \end{ssmatrix}- \begin{ssmatrix}D \\ A \\B \end{ssmatrix}z - \begin{ssmatrix}A \\ D \\ C \end{ssmatrix}z^2 + \begin{ssmatrix}B \\ C \\ D \end{ssmatrix}z^3$.

          (v) When the full matrix is orthogonal and the characteristic is $2$ the dual code  is generated by
  $\begin{ssmatrix}C \\ B \\ A \end{ssmatrix}+ \begin{ssmatrix}D \\ A \\B \end{ssmatrix}z + \begin{ssmatrix}A \\ D \\ C \end{ssmatrix}z^2 + \begin{ssmatrix}B \\ C \\ D \end{ssmatrix}z^3$. In this case the dual code $\C^\perp$ of $\C$ is a dual containing $(4n,3n, 9n;3)$ convolutional code. From this dual containing code of rate $\f{3}{4}$, using the CSS construction, a quantum error correcting code of length $4n$ and rate $\f{1}{2}$ is obtained.
          \end{proposition}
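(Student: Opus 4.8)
The plan is to reduce everything to the sixteen block identities packaged in $UV=I_{4n}$: namely $AE=BF=CG=DH=I_n$, while all other products $XY$ with $X\in\{A,B,C,D\}$, $Y\in\{E,F,G,H\}$ vanish. Reading these off column by column shows that $G(z)$ applied to $E,F,G,H$ gives $I_n,\ zI_n,\ z^2I_n,\ z^3I_n$ respectively, and these four relations drive the whole argument. For (i): $G(z)E=I_n$ exhibits a polynomial right inverse, so $\rank G(z)=n$ over $\F(z)$, the code is non-catastrophic and $G(z)$ is basic; the leading ($z^3$) coefficient matrix of $G(z)$ is $D$, a full-row-rank block of the invertible $U$, so $G(z)$ is row-reduced, each row-degree is $3$, $\de=3n$, $\mu=3$, giving a $(4n,n,3n;3)$ code. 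For the distance, an information vector $P(z)=\sum_\ell p_\ell z^\ell$ with $p_\ell\in\F^{1\ti n}$ produces $P(z)G(z)$ whose $z^k$-coefficient is $\sum_\ell p_\ell G_{k-\ell}$, $(G_0,G_1,G_2,G_3)=(A,B,C,D)$; for support one, $P(z)=pz^s$, the output weight is $w(pA)+w(pB)+w(pC)+w(pD)$, each summand nonzero since $A,B,C,D$ have full row rank, so the free distance is $d=\min_{0\ne p}\bigl(w(pA)+w(pB)+w(pC)+w(pD)\bigr)$ --- the claimed linear functional of the block distances, exactly as in Propositions \ref{conv1}--\ref{conv4}. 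The support-$t$ bound $w(P(z)G(z))\geq d+t-1$ follows by induction on $t$ as in the memory-one propositions, peeling off the extreme-index coefficient of $P(z)$ and using $pA\ne 0$ (lowest degree) and $pD\ne 0$ (highest degree). This gives (i) and (ii).

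For (iii) I would split the displayed $K\T(z)$ into its three $4n\ti n$ column blocks $F-Ez-Hz^2+Gz^3$, $G-Hz-Ez^2+Fz^3$, $H-Gz-Fz^2+Ez^3$, multiply $G(z)$ in on the left, and substitute the four relations; each block telescopes to zero (the first becomes $zI_n-zI_n-z^5I_n+z^5I_n$), so $G(z)K\T(z)=0$, and since $K\T(0)=(F,G,H)$ consists of $3n$ columns of the invertible $V$ it has rank $3n=4n-n$, so $K\T(z)$ is a control matrix of $\C$. The convolutional (module-theoretic) dual formula of Subsection \ref{conv} with memory $m=3$ then says $\C^{\perp}$ is generated by $K(z^{-1})z^3$; transposing the blocks of $K\T(z)$, substituting $z\mapsto z^{-1}$ and multiplying by $z^3$ reproduces verbatim the displayed matrix $\begin{ssmatrix}G\T \\ F\T \\ E\T \end{ssmatrix}-\begin{ssmatrix}H\T \\ E\T \\ F\T \end{ssmatrix}z-\begin{ssmatrix}E\T \\ H\T \\ G\T \end{ssmatrix}z^2+\begin{ssmatrix}F\T \\ G\T \\ H\T \end{ssmatrix}z^3$, whose rows all have degree $3$, a $(4n,3n,9n;3)$ code; that is (iii). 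For (iv), orthogonality of $U$ forces $V=U\T$, hence $E=A\T$, $F=B\T$, $G=C\T$, $H=D\T$; substituting $G\T=C$, $F\T=B$, $E\T=A$, $H\T=D$ into the generator of (iii) turns it into the matrix of (iv).

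For (v), in characteristic $2$ all the minus signs of (iv) become plus signs, giving the displayed memory-$3$, degree-$9n$ generator $\widetilde G(z)$ of $\C^{\perp}$, so $\C^{\perp}$ is a $(4n,3n,9n;3)$ code. Set $\widetilde H\T(z)=D\T+C\T z+B\T z^2+A\T z^3$. Using $UU\T=I_{4n}$ --- which gives $AA\T=BB\T=CC\T=DD\T=I_n$ while every $XY\T$ with $X,Y\in\{A,B,C,D\}$, $X\ne Y$, vanishes --- one checks $\widetilde G(z)\widetilde H\T(z)=0$; since $\widetilde H\T(0)=D\T$ has rank $n$ (and $D\widetilde H\T(z)=I_n$ is a polynomial left inverse) $\widetilde H\T(z)$ is a control matrix of $\C^{\perp}$, so $\C^{\perp}=\ker\widetilde H\T(z)$, and moreover $\widetilde H(z^{-1})z^3=G(z)$, so $(\C^{\perp})^{\perp}=\C$. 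The same block arithmetic gives $G(z)\widetilde H\T(z)=(A+Bz+Cz^2+Dz^3)(D\T+C\T z+B\T z^2+A\T z^3)$ with every coefficient zero except that of $z^3$, which is $AA\T+BB\T+CC\T+DD\T=4I_n$, and $4I_n=0$ in characteristic $2$. Hence $\C\subseteq\ker\widetilde H\T(z)=\C^{\perp}$, i.e.\ $(\C^{\perp})^{\perp}=\C\subseteq\C^{\perp}$, which is precisely the statement that $\C^{\perp}$ is dual-containing. Finally, applying the CSS construction to the DC $(4n,3n)$ convolutional code $\C^{\perp}$ yields a quantum convolutional code of length $4n$ and dimension $2\cdot 3n-4n=2n$, hence rate $\f12$, as asserted.

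I expect the only genuinely delicate point to be the careful use of the module-theoretic convolutional dual: one must confirm that the generator and control matrices appearing in the argument are basic and row- (respectively column-) reduced so that the passage $H(z^{-1})z^m$, with $m$ the memory, really returns a generator matrix of the dual code, and that the dual is involutive, $(\C^{\perp})^{\perp}=\C$. These are dispatched by exhibiting explicit polynomial inverses (e.g.\ $G(z)E=I_n$ and $D\widetilde H\T(z)=I_n$); once those structural facts are in hand, everything else is routine bookkeeping organised by the four relations for $G(z)$ on $E,F,G,H$ and by $UU\T=I_{4n}$.
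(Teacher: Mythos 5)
Your treatment of (i)--(iv) is essentially the paper's own argument (the paper proves the proposition in the discussion immediately preceding it rather than in a labelled proof): the sixteen block identities from $UV=I_{4n}$, the polynomial right inverse $G(z)E=I_n$, the block-by-block verification that $G(z)\bigl((F,G,H)-(E,H,G)z-(H,E,F)z^2+(G,F,E)z^3\bigr)=0$, the passage to the dual via $K(z^{-1})z^3$, and the substitution $E=A\T,F=B\T,G=C\T,H=D\T$ in the orthogonal case; the distance statements in (i)--(ii) are asserted in the paper at the same level of detail as your sketch, so there is nothing to quarrel with there.

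The genuine gap is in your argument for (v). Your chain is $\C\subseteq\ker\widetilde H\T(z)=\C^{\perp}$ with $\widetilde H\T(z)=D\T+C\T z+B\T z^2+A\T z^3$, and the equality $\ker\widetilde H\T(z)=\C^{\perp}$ is justified only by the polynomial left inverse $D\widetilde H\T(z)=I_n$. Left-invertibility of the would-be control matrix does not pin the kernel down to the row module of $\widetilde G(z)$ (already $(z\;\;0)$ against $\begin{ssmatrix}0\\1\end{ssmatrix}$ shows this): kernel equality requires the generator $\widetilde G(z)$ of $\C^{\perp}$ to be basic, and here it is not. In characteristic $2$ each of the three block-rows of $\widetilde G(z)=\begin{ssmatrix}C \\ B \\ A \end{ssmatrix}+ \begin{ssmatrix}D \\ A \\B \end{ssmatrix}z + \begin{ssmatrix}A \\ D \\ C \end{ssmatrix}z^2 + \begin{ssmatrix}B \\ C \\ D \end{ssmatrix}z^3$ evaluates at $z=1$ to $A+B+C+D$, so $\operatorname{rank}\widetilde G(1)\leq n<3n$, every $3n\ti 3n$ minor of $\widetilde G(z)$ vanishes at $z=1$, and the $\F[z]$-module generated by $\widetilde G(z)$ is strictly contained in the (closed, rank $3n$) module $\ker\widetilde H\T(z)$. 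Hence the step ``$\widetilde H\T(z)$ is a control matrix of $\C^{\perp}$'' fails, and with it your derivations of $\C\subseteq\C^{\perp}$ and of $(\C^{\perp})^{\perp}=\C$. The containment you actually need is immediate, and is what the paper implicitly uses (compare Example \ref{unit1}): in characteristic $2$ the third block-row of $\widetilde G(z)$ is literally $A+Bz+Cz^2+Dz^3=G(z)$, so every word of $\C$ is an $\F[z]$-combination of rows of $\widetilde G(z)$; equivalently, the code generated by $\widetilde H(z^{-1})z^3=G(z)$ is contained in the code generated by $\widetilde G(z)$, which is exactly the paper's criterion for $\C^{\perp}$ to be dual containing. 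Replacing your kernel identification by this one-line observation repairs (v); the CSS count $2\cdot 3n-4n=2n$ is then as you state.
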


         The example \ref{first1} below is a very small prototype example with which to illustrate the general method. 
      \begin{example}\label{first1} %% This is a small length example to illustrate  the general case with symmetric matrices in the binary case, which is matrices  over $GF(2)$.
        Consider $X= \begin{ssmatrix} 0& 1&1&1 \\ 1&1&1&0 \\ 1&1&0&1 \\ 1&0&1&1 \end{ssmatrix}$ over $GF(2)$. The matrix is orthogonal, $XX\T = I_4$, and also $X=X\T$.\footnote{$(I_4,X)$ is a generator matrix for the extended Hamming $[8,4,4]$ code.} Then $XX\T=I$ is broken up to give 
        $\begin{ssmatrix} A \\ B \\ C \\D \end{ssmatrix} \begin{ssmatrix} E & F & G & H \end{ssmatrix}=I_4$ where $\{A,B,C,D\}$ are row $1\ti 4$ vectors and $E\T=A,F\T=B, G\T=C, H\T=D$ as $X$ is orthogonal. Define $G(z) =A +Bz+Cz^2+Dz^3$. Then $G(z)$ generates  a $(4,1, 1;3,12)$ convolutional (non-catastrophic) code $\C$. The distance is $12$ as  the code generated by each of $A,B,C,D$ has distance $3$.

        By Proposition \ref{mem} part (iv), the dual of $\C$ is generated by
        $\begin{ssmatrix}C \\ B \\ A \end{ssmatrix}+ \begin{ssmatrix}D \\ A \\B \end{ssmatrix}z + \begin{ssmatrix}A \\ D \\ C \end{ssmatrix}z^2 + \begin{ssmatrix}B \\ C \\ D \end{ssmatrix}z^3$.

        Thus $\C^\perp$ is a dual-containing convolutional rate $\f{3}{4}$ code of the form $(4,3,9;3)$. From this a quantum error-correcting code of rate $\f{1}{2}$ is formed.

        \end{example}
      \begin{example}\label{unit1} {Golay binary code to convolutional rates $\f{3}{4}$ and $\f{1}{4}$ codes with memory $3$.}

        Consider the matrix $X$ used in forming the self-dual Golay binary $[24,12,8]$ code in the form $(I_{12},X)$ as in \cite{ian}. This $X$ is the reverse circulant matrix with first row $L=[0,1,1,0,1,1,1,1,0,1,0,0]$. The $X$ is symmetric and $XX\T = X^2 = I_{12}$. Here break $X$ into four blocks, $X_1,X_2,X_3,X_4$ of equal size $3\ti 12$. The code generated by each $X_i$ has distance $5$.  Then define $G(z)=X_1+X_2z+X_3z^2+X_4z^3$ and $G(z)$ generates a binary $(12,3, 9;3, 20)$ code $\C$. Also $P(z)G(z)$ has distance $\geq (20 + s-1)$ where $P(z)$ is an information vector and $s$ is the support of $P(z)$. Note that since the rows  of $X$ are independent so any non-zero combination of the rows of $X_1\cup X_2\cup X_3\cup X_4$ has distance $\geq 1$.

          As $X$ is orthogonal the dual, $\C^\perp$,  of $\C$ is  generated %% \footnote{In binary +1=-1}
          by $\begin{ssmatrix}X_3\\ X_2\\X_1 \end{ssmatrix}+\begin{ssmatrix}X_4\\ X_1\\X_2 \end{ssmatrix}z+\begin{ssmatrix}X_1\\ X_4\\X_3 \end{ssmatrix}z^2+\begin{ssmatrix}X_2\\ X_3\\X_4 \end{ssmatrix}z^3$ by Proposition \ref{mem}. Thus $\C^\perp$ is a convolutional $(12,9,9;3)$ code. It is  seen that  $\C^\perp$ is a dual containing convolutional code of rate $ \f{3}{4}$ and is used to form a convolutional quantum error correcting code of rate $\f{1}{2}$. %What is the distance?
          \end{example}

      \subsection{Further block decompositions}
      \begin{enumerate} \item Cases where the unit system is of the form 
      A unit system $\begin{ssmatrix}A\\B\\C\\D\end{ssmatrix}\begin{ssmatrix} E & F & G & H \end{ssmatrix} =I$ where $A,B,C$ have the same size $r\ti n$ but $D$ has size $s\ti n$ with $s<n$ can be dealt in a similar but more complicated manner. %% Consider $G(z)=A+Bz+Cz^2 + D_1^3$ where $D_1$ is taken to have the form $D_1=\begin{ssmatrix} 0 \\ D \end{ssmatrix}$ where  $0$ is the zero matrix of such a size as to make $D_1$ up to size $r\ti n$.
        The details are omitted.

\item        Cases where the unit system is of size $3n\ti 3n$ is dealt with by the following Proposition:
\begin{proposition}\label{three}  Let   $\begin{ssmatrix}A\\B\\C \end{ssmatrix}\begin{ssmatrix}D& E & F \end{ssmatrix} =I_{3n}$ be a unit scheme in which $A,B,C$ are of the same size. %% Using   $\begin{ssmatrix}A\\B\\C\end{ssmatrix}\begin{ssmatrix} D & E & F \end{ssmatrix} =I$ are established by
   Let $G(z) = A+Bz+Cz^2$. Then  and then verifying that 
          $(A+Bz+Cz^2)( (E,F) - (D,F)z+ (D,E)z^2 + (0, E-D)z^3) = 0$ \end{proposition}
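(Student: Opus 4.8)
The plan is to treat this as the three-block analogue of Proposition~\ref{mem}, the only inputs being the ``orthogonality'' identities read off the unit scheme. Expanding $\begin{ssmatrix}A\\B\\C\end{ssmatrix}\begin{ssmatrix}D&E&F\end{ssmatrix}=I_{3n}$ block by block gives the nine identities $AD=I_n$, $BE=I_n$, $CF=I_n$ and $AE=AF=BD=BF=CD=CE=0$; these are the only facts about $A,B,C,D,E,F$ the argument uses.

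First I would record that $G(z)=A+Bz+Cz^2$ is a reduced basic (hence non-catastrophic) generator matrix of a rate-$\f{1}{3}$ convolutional code of length $3n$, degree $2n$ and memory $2$. It has rank $n$ since its constant term $A$ is a block of the invertible $U$; each of its $n$ rows has $z^2$-coefficient a nonzero row of $C$, so every $\de_i=2$, giving $\de=2n$, $\mu=2$, and the highest-coefficient matrix is $C$, of full row rank, so $G(z)$ is row-reduced; and $G(z)D=AD+BDz+CDz^2=I_n$ exhibits the constant matrix $D$ as a polynomial right inverse, so $G(z)$ is basic and the code non-catastrophic.

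Next I would pin down the control matrix. Any $v(z)\in\F[z]^{3n}$ can be written uniquely as $Dp(z)+Eq(z)+Fr(z)$, and then the nine identities give $G(z)v(z)=p(z)+zq(z)+z^2r(z)$; this vanishes precisely when $p(z)=-zq(z)-z^2r(z)$, so the right kernel of $v\mapsto G(z)v$ is the free rank-$2n$ submodule spanned by the columns of $E-Dz$ and of $F-Dz^2$. I would therefore take $H\T(z)=(\,E-Dz,\ F-Dz^2\,)=(E,F)-(D,0)z-(0,D)z^2$, verify $G(z)H\T(z)=0$ by substituting the identities into each $z$-coefficient (the same bookkeeping already done for four blocks in Proposition~\ref{mem}), and then exhibit the displayed expression as this one up to elementary column operations over $\F[z]$, correcting as needed. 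Full column rank $2n$ of $H\T(z)$ is immediate because at $z=0$ it is the $2n$ columns $(E,F)$ of the invertible $V=(D,E,F)$; with $G(z)H\T(z)=0$, the count $n+2n=3n$, and $G(z)$ basic, this certifies $H\T(z)$ as a control matrix, and $H(z^{-1})z^{2}$ generates the module-theoretic dual code as in Section~\ref{conv}.

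The one genuine obstacle is the $z$-power bookkeeping --- arranging the expansion so the surviving terms cancel in pairs, and in particular confirming that the stated control matrix is column-equivalent over $\F[z]$ to the clean form $(E-Dz,\ F-Dz^2)$ and not merely close to it. Everything else --- rank, basicness, non-catastrophicity, the $(3n,n,2n;2)$ parameters, and passage to the dual --- is routine and mirrors the already-proved four-block Proposition~\ref{mem}.
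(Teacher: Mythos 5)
Your kernel computation is sound and is, in effect, the whole content here: the paper offers no argument beyond the words ``and then verifying that'', so your route --- writing an arbitrary $v\in\F[z]^{3n}$ as $Dp(z)+Eq(z)+Fr(z)$ and reading off $G(z)v=p(z)+zq(z)+z^2r(z)$ from the nine block identities $AD=BE=CF=I_n$, $AE=AF=BD=BF=CD=CE=0$ --- is a cleaner and more structural proof than the intended brute-force expansion, and your side observations (constant right inverse $G(z)D=I_n$, hence non-catastrophic; leading coefficient matrix $C$ of full row rank, so $G(z)$ is row-reduced with $\de=2n$, $\mu=2$; full column rank of the control matrix at $z=0$; the rank count $n+2n=3n$; the dual via $H(z^{-1})z^{2}$) are all correct.

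However, the final reconciliation step of your plan cannot be carried out, because the identity displayed in the proposition is false as printed. Under the natural reading, the $z^2$-coefficient of $(A+Bz+Cz^2)\bigl((E,F)-(D,F)z+(D,E)z^2+(0,E-D)z^3\bigr)$ is $A(D,E)-B(D,F)+C(E,F)=(AD-BD+CE,\ AE-BF+CF)=(I_n,\ I_n)\neq 0$, and the $z^3$- and $z^4$-coefficients are nonzero as well; equivalently, $G(z)$ maps the first column block $E-Dz+Dz^2$ of the displayed matrix to $I_nz^2$, not $0$. Consequently the displayed matrix is not column-equivalent over $\F[z]$ to your $H\T(z)=(E-Dz,\ F-Dz^2)=(E,F)-(D,0)z-(0,D)z^2$, and your phrase ``correcting as needed'' is not optional but essential: the display should be replaced by $(A+Bz+Cz^2)\bigl((E,F)-(D,0)z-(0,D)z^2\bigr)=0$, which your coefficient-by-coefficient check establishes immediately, and with that correction the rest of your argument (and the intended consequence, rate $\f{1}{3}$ and rate $\f{2}{3}$ codes dual to one another in the spirit of Proposition \ref{mem}) goes through.
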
 This allows the construction of rate $\f{1}{3}$ and rate $\f{2}{3}$ convolutional codes which are dual to one another similar to the method and results in Proposition \ref{mem}. 
       %% $\begin{ssmatrix}A\\B\\C\end{ssmatrix}\begin{ssmatrix} D & E & F \end{ssmatrix} =I$ are established by taking $G(z) = A+Bz+Cz^2$ and then verifying that 
       %%    $(A+Bz+Cz^2)( (E,F) - (D,F)z+ (D,E)z^2 + (0, E-D)z^3) = 0$.

\item 
A unit scheme which can be broken into blocks of $8$ in an $8n\ti 8n$ enables for example convolutional codes with memory $7$ and rate $\f{1}{8}$ and $\f{7}{8}$  to be established. In special cases the $\f{7}{8}$ rate code is dual containing establishing a rate $\f{3}{4}$ quantum convolutional code of memory $8$ similar to Proposition \ref{mem}. %% Convolutional codes are also established when the scheme has all blocks equal except possibly the last one which has size less than the others.  
\item 
  These types of constructions may be continued. For instance matrices with blocks of size $n$ and matrix size  $2^in \ti 2^in$ are more amenable; when the  matrix is orthogonal then dual-containing convolutional codes are obtainable from which quantum error correcting convolutional codes are formed with rate $\f{2^{i-1}-1}{2^{i-1}}$. Details are omitted.
  \end{enumerate}
      \subsection{Hadamard type unit codes}\label{hadamard} General constructions for linear block and convolutional codes  using the unit structure of Hadamard matrices are developed separately in \cite{induced}. %% A Hadamard matrix is an $n\ti n$ matrix such that $HH\T=nI_n$.
      Hadamard matrices are nice orthogonal type units and the unit-derived structure can  be used to construct linear block and convolutional codes and to required types such as LDC, self-dual, DC and quantum codes. The unit-derived codes from general Hadamard matrices  are {\em not} the usual type of Walsh-Hadamard codes as known which have small rates,   but these can have nice features, required  rates, great relative distances and can be constructed to large rates.  The general process is developed in \cite{induced} and large lengths, types, distances, rates  are achievable. 
        %% Bigger Hadamard type matrices with unit-derived codes will be developed later.
        
The Walsh-Hadamard codes of the known type $[2^s,s,s^{s-1}],[2^s,s+1,s^{s-1}]$ (over $GF(2)$) can be shown to be unit-derived codes formed from Walsh-Hadamard $2^s\ti 2^s$ matrices in a certain way -- see \cite{induced}. %% The matrices over $GF(2)$ are formed from the Hadamard matrix by $-1 \mapsto 1, 1 \mapsto 0$.
      %% and then the matrix has rank $s$ over $GF(2)$!
      For unit-derived codes, any Hadamard matrix may be used with which to construct unit-derived linear block and convolutional codes, and  DC, self-dual, LCD and quantum codes can be  produced therefrom. 
The distances can often be directly calculated from the Hadamard matrix used in the construction. 
          
      A prototype example is given below demonstrating how the process is derived from Hadamard matrices thus  illustrating the extent of these constructions from general Hadamard matrices.  %% Prototype examples are given and these illustrate the general constructions.
   %% In the examples given, the Computer Algebra package GAP \cite{gap} together with Guava which is embedded in GAP was used to verify the distances in the block linear cases. GAP has great implementation of structures over finite fields. 
          
          %% The example codes were implemented using the Computer Algebra package GAP \cite{gap} together with the embedded package Guava.   %This will be followed up. 
      \begin{example}\label{hadaex} Let $H$ be a Hadamard $12 \ti 12$ matrix. Here the computer algebra system GAP \cite{gap} is used to generate $H$ and the subsystem Guava is used to construct the codes and verify their distances in the linear block cases. The distances for the convolutional codes can be determined algebraically from the distances of the associated linear block codes.

        Thus $H$ has the unit form $\begin{ssmatrix}A \\ B\end{ssmatrix}\begin{ssmatrix}A\T & B\T \end{ssmatrix}= 12I_{12}$.
                In any field not of characteristic $2$ or $3$, $A$ and $B$ generate LCD codes. %($AB\T=0$ shows that $B$ generates the dual code of the code generated by $A$.)
                \begin{itemize} \item
                  Three rows of $H$ generate a $[12,3,6]$ LCD code. \item six rows generate a $[12,6,6]$ LCD code. \item Nine rows generate a $[12,9,2]$ LCD code. %% This is over any field not of characteristic $2$ or $3$. 

                \item   Let $A$ be the first six rows of $H$ and $B$ be the last six rows of $H$. Define $G(z) = A+Bz$. Then $G(z)$ generates a $(12,6,6;1,12)$ convolutional code. \item Define $G(z) = A+iBz$ where $i$ is a square root of $(-1)$ in the field or in a quadratic extension of the field. Then $G(z)$ generates a self-dual convolutional $(12,6,6;1,12)$. From this a convolutional quantum code of type $[[12,0,12]]$ is formed.
                  
                    $GF(5)$ has $2$ as a square root of $(-1)$ so over $GF(5)$, $i$ can be taken to be $2$. Arithmetic in $GF(5)$ is modular arithmetic. $GF(7)$ does not contain a square root of $(-1)$ so it needs to be extended to $GF(7^2)$ in which to obtain a self-dual convolutional code.  

                  \item  Dual-containing convolutional codes of form $(12,9,3;1,d)$ are obtained by letting $A$ be the first nine rows of $H$ and $B$ the last three rows of $H$ and defining $G(z) = A+iB_1z$ where $B_1$ has first six rows consisting of zeros and last three rows consist of $B$.  This gives rise to a quantum convolutional code of the form $[[12,6,d]]$.  The distance $d = 4$ but note that $P(z)G(z)$ has distance $\geq 5+(s-2)$ where $s$ is the support of the information vector $P(z)$.

                  \item Form $(I_{12},H)$. This is a $[24,12,8]$ code. Form $(I_{12},iH)$, where $i$ is a square of $(-1)$ in the field or in a quadratic extension of the field. This is a self-dual $[24,12,8]$ code. % A quantum $[[24,0,8]]$ code is formed from this.

                    In $GF(5)$ the element $2$ is a square root of $(-1)$ and thus  $(I_{12},2X)$ gives a self-dual $[24,12,8]$ code in $GF(5)$. The field $GF(7)$ needs to be extended to $GF(7^2)$ and then a self-dual code is obtained.
                \end{itemize}
            \end{example}
\section{Low Density Parity Check Codes}\label{ldpc}
%% The  interest here  in {\em low density parity check (LDPC) codes}.
A low density parity check, LDPC,  code %% are codes  where the check/control matrices are of low density, which means
is one where  the check/control matrix of the code has a small number of non-zero entries compared to its length, see for example \cite{mackay}.

The methods devised in previous sections %% and in particular the unit-derived method
for constructing  linear block and convolutional codes are now used to construct LDPC linear and convolutional codes. What  is required is the scheme produces a check/control matrix with low density compared to its length. It is known that for best performance of LDPC codes, there should be be no short cycles in the control matrix and this  can be achieved by the methods. %constructed from a unit scheme, say $UV=I$. %% a basic unit-derived scheme derived from this is of  the form $\begin{ssmatrix} A \\ B \end{ssmatrix}\begin{ssmatrix} C & D \end{ssmatrix} = I_n$.
%% This gives a generator matrix $A$ with control matrix $D$ and a generator matrix $B$ with control matrix $C$.
%% For low parity check matrix construction it is required that the control matrix has a small number of non-zero entries compared to its size.

Given a unit scheme $UV=I$ %% the unit $V$ is a low density matrix itself, LDPC codes are formed.
unit-derived codes are formed by taking any $r$ rows of $U$ as generator matrix and a check matrix is obtained by eliminating the corresponding columns of $V$. Thus if $V$ itself is of low density  then any such code formed is an LDPC code; if in addition $V$ itself has no short cycles then any such code formed is an LDPC code with no short cycles.    

%% Thus we it would be required in the basic unit scheme  that $C$ or $D$, or both when forming convolutional codes, be of low density. 
  %% It is required to generate Convolutional Low Density Parity Check (CLDPC) codes. As noted is generating CLDPC codes with no short cycles in the control matrix.
%% If there are no short cycles at all in $V$, then if follows that there are no short cycles in the unit-derived codes from the scheme. Thus given $UV=I$ where $V$ is of low density and contains no short cycles, the more general process gives a LDPC code with no short cycles:
Thus given a unit scheme $UV=I$, where $V$ is of low density and has no short cycles, 
 choose any $r$ rows of $U$ to form the generator matrix of an $[n,r]$ code  $\C$ and deleting the corresponding $r$ columns in $V$ gives a check matrix for the code and the code $\C$ formed is  an LDPC code with no short cycles in the check or control matrix. The code can also be specified by choosing the columns of the low density matrix $V$ to form the control matrix and going to $U$ to choose the rows which form a generator matrix. %% The matrix from $U$ is not necessarily  of low density and is best if not as this can affect the distance of the code;
 
This is done in \cite{hurley33}  for linear block codes. Therein  methods are derived using units in group rings to produce linear LDPC codes {\em and} to produce such LDPC codes with no short cycles in the the check matrices. Thus a unit system, $uv=1$, is constructed in a group ring where one of the elements $u,v$, say $v$,  has small {\em support} as a group ring element. The $uv=1$ is then mapped  to the corresponding matrix equation, $UV= I$, by a process given in\cite{hur3}, in which $V$ has low density. Then using unit-derived codes leads to the construction of LDPC codes as required and when $V$ has no short cycles it leads to the construction of LDPC codes with no short cycles in the check matrix. It can be ensured that $V$ has no short cycles by a condition, see \cite{hurley33},  on the group elements with non-zero coefficient used in forming the group ring element $v$ short cycles.

In that paper \cite{hurley33} simulations are made and the examples given are shown to outperform substantially  previously constructed ones of the same size and rate. Randomly selected  LDPC codes with no short cycles are produced from the same unit. The codes produced are of particular use  in applications and industry where low storage and low power may be  a requirement or necessary for better functioning.  % About signal as in paper

    Note that $U$, from which the generator matrix is derived, does not, necessarily, have low density which is good from the point of  the minimum distance of the code; however as stated in Mackay \cite{mackay}, ``Distance isn't everything''. 

    Thus using group rings, systems are constructed $UV=I_n$ in which $V$ has low density with no short cycles anywhere. This gives an enormous freedom in which to construct LDPC codes with no short cycles. Indeed eliminating  any $(n-r)$ columns of $V$ gives a control matrix, and the generator matrix is formed by using the rows from $U$ corresponding to the eliminated columns of $V$; the result is  an $[n,r]$ LDPC code. Thus given $UV=I_n$ where $V$ has low density and no short cycles allows the construction of {\em many} LDPC codes with no short cycles. 
 
    %% The unit-derived formula: $U=\begin{pmatrix} A \\ B \end{pmatrix}, V=(C D)$,  enables required codes to be constructed. Here then $D$ (and $C$), which can act as check matrices, have low density and no short cycles. The rate required can be decided. If the matrices $U,V$ are $n\ti n$ (obtained from a group ring in which the group has order $n$) and a rate $\f{r}{n}$ is required then $D$ is chosen as an $n\ti (n-r)$ matrix from $V$ and $A$ is given  as an $r\ti n$ matrix from $U$. When $V$ has no short cycles then $D$ also has no short cycles.
    
    In previous sections, methods are given for constructing convolutional codes from the unit-derived formula $\begin{ssmatrix} A \\ B \end{ssmatrix}\begin{ssmatrix} C & D \end{ssmatrix} = I_n$ from $UV=I_n$ using all the components $A,B,C,D$.  Convolutional codes of higher memory are obtained by further breaking up the unit system in blocks. These techniques may also be applied for constructing convolutional LDPC codes with no short cycles in the control matrix.  

Considering the basic formula and  when $A,B$ have the same size, and $n$ is even, then $G(z) = A+Bz$ generates  a non-catastrophic convolutional code. %% Its free distance is determined in terms of the distances of $A$ and $B$.
The control matrix is $D-Cz$ and the dual code is generated by $C\T-D\T z$. If $V$ is of low density and has no short cycles then $C\T-D\T z$ is of low density and has no short cycles. Thus the codes derived is a convolutional LDPC code with no short cycles.

    %% In order to construct LDC or DC convolutional codes by the method it is necessary to be able to express $\C\T,D\T$ in terms of $A,B$. Thus if $C,D$ have low density this would require $B,C$ to have low density which would not be good for the distance of the code but convolutional codes have their own decoding methods in which it seems having no short cycles is more important than distance.

    It is difficult to describe explicitly the LDPC codes derived as for  applications large lengths are required. Note that the method is very general and length and rate achieved can be decided in advance. We will concentrate on extending two of the examples in \cite{hurley33} to construct LDC convolutional codes with no short cycles.

    Only basic information on group rings is required. A good nice  book on  group rings is \cite{seh} and also the basic information may be found online by a simple search.

    Low density convolution codes and with no short in the control matrix are constructed by applying the methods in the previous sections together with the methods described in \cite{hurley33} for constructing LDPC linear codes with no short cycles. The following algorithm describes the constructions in general:
\begin{alg}\label{ldpcalg}
    \begin{enumerate} \item In a group ring  with group size $n$ find a unit and its inverse $uv=1$ where $v$ has small support and no short cycles. The size $n$ of the group should be large and the support of $v$ relatively small compared to $n$.
    \item\label{sd} From $uv=1$ go over to the matrix embedding of the group ring in a ring of matrices of size $n\ti n$, as in \cite{hur3}, to get a unit scheme
      $UV=I_n$ of matrices where $V$ is of low density and has no short cycles. \item Choose $r$ columns of $V$ to eliminate to form an $n\ti (n-r)$ matrix which will be a control matrix for a $[n,r]$ code. A generator matrix for this  code is the $r\ti n$ matrix formed by selecting the $r$ rows from $U$ corresponding in order to the  $r$ columns eliminated from $V$.
    \item\label{sd1} The unit scheme from item \ref{sd} may be presented as $\begin{ssmatrix}A \\ B \end{ssmatrix} \begin{ssmatrix} C & D \end{ssmatrix}$ where $A$ has size $r\ti n$, $B$ has size $(n-r) \ti n$, $C$ has size $n\ti r$ and $D$ has size $n \ti (n-r)$. Now here both $C,D$ are of low density and have no short cycles. An  LDPC code with no short cycles in the control matrix is given by $AD=0$ as in \cite{hurley33}. But also notice $BC =0$ gives a LDPC code in addition.
    \item The unit scheme in item \ref{sd1} as in previous sections is extended to $G(z) = A + Bz$ when $A,B$ have the same size (in which case the rate is $1/2$) or, when $A$ has size heater than the size of $B$, to $G(z) = A + B_1z$ where $B_1$ is obtained by extending $B$ with zero rows to be the size of $A$. Then $G(z)$ generates a convolutional memory $1$ code  which is non-catastrophic and has low density control matrix with no short cycles. The control matrix is $D - C_1z$ where $C_1$ is $C$ or a submatrix of $C$ as explained above.
    \item Obtaining memory greater than $1$ from the unit matrix scheme $UV$ derived from the group ring unit $uv$ also follows in a similar manner as described earlier. Examples of such are given below.
    \end{enumerate}
    \end{alg}
Examples must be of large length in order to satisfy the low density criterion. 
    In general  the examples  in \cite{hurley33} are taken from unit-derived codes within 
$\Z_2(C_n\ti C_4)$, where $\Z_2 = GF(2)$ is  the field of two elements. 

``The matrices derived are then submatrices of circulant-by-circulant
matrices and are easy to program. They are not circulant and thus are
not cyclic codes. Having
circulant-by-circulant rather than circulant allows a natural spreading 
of the non-zero coefficients and gives better distance and better
performance.''  

Assume that $C_n$ is generated by $g$ and  $C_4$ is
generated by $h$. 
Every element in the group ring is then of the form:
$\di\sum_{i=0}^{n-1}(\al_ig^i + h\be_ig^i + h^2\gamma_ig^i +
h^3\delta_ig^i)$,
with $\al_i,\be_i,\gamma_i,\delta_i \in \Z_2$.

\begin{example} Examples of $[96,48]$ LDPC codes are given in sections 3.2, 3.4 of \cite{hurley33}. %% examples}\label{96}
 These examples are derived from the group ring  $\Z_2(C_{24}\ti C_4)$. 

The check element
$v = g^{24-9} + g^{24 - 15} +g^{24-19} + hg^{24-3} + hg^{24-20}+
   h^2g^{24-22} + h^3g^{24-22} + h^3g^{24-12}$ is used to define an  LDPC linear code. % TFI-96-59-8. 

Then  $v$ has  no short cycles in its matrix $V$ and just $8$  or less non-zero elements in each row and column. Any choice of columns of $V$ will give an LDPC block linear code.
A pattern to delete half the columns from the matrix $V$ of $v$ 
is  chosen  to produce a rate $1/2$ code and is simulated and compared to other LDPC codes, outperforming these even when random columns are chosen. %% Now $U$ such that $UV=I$ is found.  % TFI-96-59-8 .

This selection is then presented as $\begin{ssmatrix}A \\ B \end{ssmatrix} \begin{ssmatrix} C & D \end{ssmatrix}$ where $A$ has size $48\ti 96$, $B$ has size $48\ti 96$, $C$ has size $96\ti 48$ and $D$ has size $96 \ti 48$.
Then define $G(z)=A + Bz$ to obtain a convolutional $[96,48,48;1]$ low density parity check code. The control matrix is $D- Cz$ which is $D+Cz$ as the characteristic is $2$.

In this case the matrix $U$ has the form $\begin{ssmatrix} A\\B\\C\\D\end{ssmatrix}$ and $V$ has the form $\begin{ssmatrix}E &F&G&H \end{ssmatrix}$ where $A,B,C,D$ have size $24\ti 96$ and $E,F,G,H$ have size $96\ti 24$ and each of $E,F,G,H$ have low density. Then as in Proposition $G(z) = A+Bz+Cz^2+Dz^3$ defines a convolutional $(96,24, 24*3;1)$ convolutional code which has low density check matrix.  %% and dual code generated by . The dual code is generated by  and is a $3/4$ rate $(96,72, 72*3;1)$ code but is not a low density code as its dual, coming from $U$ is not (necessarily) low density.
  %% To get a $\f{3}{4}$ rate convolutional code of memory $1$ reverse the roles of $U$ and $V$.
  %% to obtain a low density convolutional code. Take columns of $V$ to define a
  \end{example}
      
    %% This method is  extended to construct LDPC codes using the methods described in Section . Suppose rate $\f{1}{2}$ is required. 
%%  \noindent MDS: Maximum distance separable.  
%% \\ DC=Dual-containing.
%% \\ QECC: Quantum error-correcting code. (MDS has different parameters when referring to QECC.)

%% When $\gcd(p,n)=1$, $\ord(p,n)$ denotes the least positive power $s$
%% such that $p^s\equiv 1\mod n$. 

\end{document}